\newcommand*\patchAmsMathEnvironmentForLineno[1]{
  \expandafter\let\csname old#1\expandafter\endcsname\csname #1\endcsname
  \expandafter\let\csname oldend#1\expandafter\endcsname\csname end#1\endcsname
  \renewenvironment{#1}
     {\linenomath\csname old#1\endcsname}
     {\csname oldend#1\endcsname\endlinenomath}}
\newcommand*\patchBothAmsMathEnvironmentsForLineno[1]{
  \patchAmsMathEnvironmentForLineno{#1}
  \patchAmsMathEnvironmentForLineno{#1*}}
\newcommand{\sgn}{\mathrm{sgn}\ }
 \newtheorem{thm}{Theorem}
 \newtheorem{defi}{Definition}
 \newtheorem{prop}{Proposition}
 \newtheorem{lem}{Lemma}
 \newtheorem{cor}{Corollary}
  \newcommand{\tcr}{\textcolor{black}}
\newcommand\frakC{{\mathfrak C}}
\newcommand{\x}{\bm x}
\newcommand{\xeq}{\bm x^{\mathrm{eq}}}
\newcommand{\xp}{\bm x^{*}}
\begin{document}

\preprint{APS/123-QED}

\title{Local stabilizability implies global controllability\\ in catalytically-controlled reaction systems}

\author{Yusuke Himeoka}
\email[]{yhimeoka@g.ecc.u-tokyo.ac.jp}
 \affiliation{Universal Biology Institute, University of Tokyo, 7-3-1 Hongo, Bunkyo-ku, Tokyo, 113-0033, Japan}
\affiliation{Theoretical Sciences Visiting Program (TSVP), Okinawa Institute of Science and Technology Graduate University, Onna, 904-0495, Japan}

 \author{Shuhei A. Horiguchi}%
\affiliation{
    Nano Life Science Institute, Kanazawa University, Kakumamachi, Kanazawa, 920-1192, Japan
}
\affiliation{%
Institute of Industrial Science, The University of Tokyo, 4-6-1, Komaba, Meguro-ku, Tokyo
153-8505, Japan
}%

\affiliation{Theoretical Sciences Visiting Program (TSVP), Okinawa Institute of Science and Technology Graduate University, Onna, 904-0495, Japan}

\author{Naoto Shiraishi}
\affiliation{Faculty of arts and sciences, University of Tokyo, 3-8-1 Komaba, Meguro-ku, Tokyo, Japan}

\author{Fangzhou Xiao}
\affiliation{Westlake University, School of Engineering, 600 Dunyu Road, Xihu District, Hangzhou, China}


\author{Tetsuya J. Kobayashi}
\affiliation{Universal Biology Institute, University of Tokyo, 7-3-1 Hongo, Bunkyo-ku, Tokyo, 113-0033, Japan}

\affiliation{%
 Institute of Industrial Science, The University of Tokyo, 4-6-1, Komaba, Meguro-ku, Tokyo
153-8505, Japan
}%

\affiliation{%
 Department of Mathematical Informatics, Graduate School of Information Science and
Technology, The University of Tokyo, 7-3-1, Hongo, Bunkyo-ku, Tokyo 113-8656, Japan}
\affiliation{Theoretical Sciences Visiting Program (TSVP), Okinawa Institute of Science and Technology Graduate University, Onna, 904-0495, Japan}

\date{\today}
\begin{abstract}
    Controlling complex reaction networks is a fundamental challenge in the fields of physics, chemistry, biology, and systems engineering. Here, we prove a general principle for catalytically-controlled reaction systems with kinetics where the reaction order and the stoichiometric coefficient match: the local stabilizability of a given state implies global controllability within its stoichiometric compatibility class. In other words, if a target state can be maintained against small perturbations by a catalytic control, the system can be catalytically controlled from any initial condition to that state. This result highlights a tight link between the local and global dynamics of nonlinear chemical reaction systems, and clear relationship between the controllability and thermodynamic consistency of the reaction systems. The findings illuminate the robustness of biochemical systems and offers a way to control catalytic reaction systems in a generic framework. 
\end{abstract}

\maketitle
\section{Introduction}
Understanding the controllability of (bio)chemical reaction networks is crucial for both theoretical insights and practical applications in systems biology and chemical/metabolic engineering. Controllability, in the control-theoretic sense, refers to the ability to steer a dynamical system from any given initial state to any desired final state using suitable inputs. 

\tcr{In systems chemistry, networks of interacting molecules serve as programmable architectures whose emergent behaviors under out-of-equilibrium conditions can be harnessed to access specific functional states through precise chemical design \cite{Wong2017-or,Sheehan2021-er,van-Esch2017-jr}.} For biological studies, controllability translates to being able to drive a biochemical system, such as a metabolic or gene regulatory network, to a desired state by adjusting certain control parameters such as enzyme concentrations, amounts of the transcriptional/translational machineries, or external conditions.

Historically, systems biology has been rooted in control theory \cite{wiener1948cybernetics}. The early development of the field was initiated by finding feedback controls in the biochemical systems for the robust adaptation \cite{Barkai1997-by,Yi2000,Kitano2002}. This approach has been extended to construct artificial biochemical systems with a variety of functions \cite{Elowitz2000,Gardner2000,becskei2000engineering,Fung2005,Briat2016,huang2018quasi,Frei2022,SantosMoreno2020}, and for designing cell-cell interactions, including artificial cellular differentiation, pattern formation in multicellular systems, and synthetic ecosystems \cite{You2004,Kobayashi2004,Basu2005,Balagadde2008,Liu2011,Morsut2016,Toda2018}. In addition, the value of the controllability framework is well recognized in application fields such as metabolic engineering \cite{Kacser1973, Heinrich1974} and epidemiology for policy making \cite{Sharomi2017,Schnyder2025-rg}.



\tcr{Among various (bio)chemical systems, we focus on catalytic reaction systems as they provide the indispensable foundation for both biological metabolism and the bottom-up construction of out-of-equilibrium synthetic systems. The importance of catalysis stems from its ability to exert kinetic control over otherwise dormant reactions; in biological contexts, most metabolic reactions are kinetically inhibited and would not proceed on relevant timescales without enzymatic acceleration \cite{Wolfenden2001-pd}. This constraint rather enables living systems to exert precise control over their internal states by modulating catalyst activities. Similarly, in systems chemistry and bottom-up synthetic biology, catalysis serves as an essential ``control knob'' for driving systems away from equilibrium. Researchers utilize catalytic and autocatalytic feedback motifs to program functions—such as bistability, oscillations, and dissipative self-assembly—into chemical reaction networks \cite{Wong2017-or, Ashkenasy2017-ay}. By integrating these metabolic-like pathways into microcompartments, it becomes possible to assemble artificial cells with controllable metabolic functions \cite{Beneyton2018-ya}. Thus, as the functional dynamics of both natural and synthetic systems are fundamentally governed by the modulation of reaction rates through catalysis, catalytic reaction systems offer a unified and necessary framework for their control.}


\tcr{Establishing the theories on the control of (bio)chemical reactions is crucial for deepening our understanding of bottom-up constructions of artificial systems, cellular homeostasis, and biological adaptation.} While links between the passive responsiveness of biochemical reactions to external perturbations and network topologies have been actively studied \cite{okada2016law,hirono2021structural,Hirono2025-rj}, the global controllability of such catalytic reaction networks remains largely unexplored. Progress in the theory of controllability of cellular states is indispensable for deciphering mechanisms that enable biological systems to adjust flexibly to various environments \cite{Walker2023-fd,Laman_Trip2022-so,Hoehler2013-pi,Gray2019-kr,zahradka2006reassembly,hashimoto2016extremotolerant,Kikuchi2022-mx}. 

However, the application of the classical control theory to (bio)chemical reaction networks presents unique challenges. Main difficulties arise from the large number of chemical species, nonlinearity of the reaction rate functions, and non-negativity constraints on the control parameters. \tcr{The nonlinearity of reaction rate functions arises inherently from the multi-molecular reactions required to synthesize larger molecules from smaller components. Because chemical reactions are mass-conserving, in unimolecular reactions such as ${\rm A} \leftrightharpoons {\rm B}$, the substrate and product must have identical mass. However, building larger structures necessitates multi-molecular reactions\footnote{Here, ``multi'' implies that the number of molecules on either the substrate or product side is greater than one.} such as ${\rm A}+{\rm B}\leftrightharpoons {\rm C}$, where the mass of product C exceeds that of A or B individually. For such reactions, the forward reaction rate depends on the concentrations\footnote{This requirement is related to the \textit{consistency condition} in chemical reaction network theory \cite{Feinberg2019}, which guarantees that chemical concentrations do not become negative given initially non-negative concentrations.} of both A and B, thereby introducing nonlinearity into the system dynamics \cite{atkins2023atkins,Feinberg2019}.} 

\tcr{Classical control theory was originally developed for mechanical systems, such as vehicles and aircraft. In these systems, typical control inputs, such as accelerators and brakes, allow for both positive and negative acceleration. In contrast, control parameters in (bio)chemical systems—typically the concentrations or activities of catalysts—are inherently non-negative. This non-negativity constraint poses a unique challenge for evaluating the controllability of such systems \cite{haddad2010nonnegative}.}

In addition to nonlinearity and non-negativity, the evaluation of global controllability is often desired, in addition to the local controllability. These three points lead to \textit{the sign-constrained global nonlinear controllability problem}. Controllability problems in this class are highly nontrivial and unexplored by classical frameworks for the controllability of chemical reaction systems  \cite{Saperstone1973-zk,Farkas1998-zk,Dochain1992-gh,Drexler2016-bu,haddad2010nonnegative}.

In our previous work, we developed a numerical method for efficient computation of the controllability of catalytic reaction systems \cite{Himeoka2024-mo}. This method allows us to convert the sign-constrained global nonlinear controllability problem into a problem of finding appropriate conical combinations (non-negative linear combinations) of the vectors. However, we were unsuccessful in analytically identifying the controllability of the system. 

In this study, we show that local stabilizability and global controllability are tightly coupled in catalytic reaction systems. In particular, we show for a wide class of models that if a state is locally stabilizable by feedback control, then all states are globally controllable to the state by manipulating the activities of catalysts.

\tcr{The main text outlines the central concepts of global controllability in catalytic reaction systems, whereas the Supplementary Information provides the rigorous mathematical foundations in a definition-theorem-proof style.}

\section{Controllability of the catalytic reaction systems}In the present paper, we focus on a well-mixed, deterministic reaction rate equation model with $N$ chemical species and $R$ reactions. Additionally, we deal with the case in which all reactions are independently controllable to evaluate the maximum possible controllability of the system. Then, the model equation is described by the following ordinary differential equation with input-affine, no-drift control: 
\begin{eqnarray}
\dv{\x}{t}&=&\mathbb S\bm u(t)\odot [\bm v^f(\x(t))-\bm v^b(\x(t))]\nonumber \\
&=&\mathbb S\bm u(t)\odot \bm v(\x(t)),\label{eq:ode}
\end{eqnarray}
where $\bm x \in \mathbb{R}^N_{\geq 0}$ is the vector of chemical species' concentrations, and $\mathbb S$ is the $N\times R$ stoichiometric matrix. Each column of the stoichiometric matrix is called \textit{stoichiometric vectors}. The $r$th stoichiometric vector $\bm S_r$ represents the state transition of the system in the phase space by the $r$th reaction. $\bm v^f(\x)$ and $\bm v^b(\x)$ are the forward- and backward reaction fluxes, respectively. The difference between the forward and backward reaction fluxes reads the net fluxes $\bm v(\x)$. $\bm u(t):\mathbb{R}_{\ge 0}\to\mathbb{R}^R_{\ge 0}$ is the vector of time-dependent control. $\odot$ is the Hadamard (element-wise) product of vectors. In the present study, we model the net flux (the difference between the forward and backward reaction fluxes) as a single reaction flux, and the control is the modulation of catalysts' activities. Thus, the control parameters increase and decrease both the forward and backward reaction rates simultaneously, and $u_i(t)'$s are non-negative. It is not possible to control the forward and backward reaction rates independently as catalysts only change the activation energy of the chemical reactions, but do not change the chemical equilibrium \cite{atkins2023atkins}.

The ODE system (Eq.~\eqref{eq:ode}) can be used as a model equation for a wide variety of the biochemical systems. The well-mixed catalytic reaction system in the test tube regardless of whether they are biological or purely chemical, they are modeled using the equations. The metabolic reaction system in a single cell is usually modeled using the above equation \cite{Khodayari2014-ah,Boecker2021-wx,Chassagnole2002-ss,Himeoka2022-dh,Himeoka2024-my}. In addition, metabolic models using Eq.~\eqref{eq:ode} can be extended to multicellular systems in a simple manner. For the extension, we add an additional subscript on the control $\bm u$, reaction flux $\bm v$, and the concentration of metabolites $\bm x$ for indexing different cells. The exchanges of metabolites between different cells can be modeled as controlled reaction with transporters or channels as controllers.

As a simple example, \tcr{we adopt the Sel'kov model \cite{sel1968self}. The Sel'kov model is a minimalist kinetic model for glycolytic oscillations, characterized by a Hopf bifurcation and limit-cycle dynamics driven by autocatalysis. The schematic of the reaction diagram is provided in Fig.~\ref{fig:example}(a). The model comprises three reactions: $\emptyset\leftrightharpoons A$, $A\leftrightharpoons B$, and $B\leftrightharpoons \emptyset$.\footnote{\tcr{While reactions in the original model are irreversible, we treat them as reversible here. This is because, as discussed in the section IV, models consisting solely of irreversible reactions generically exhibit trivial controllability.}} The ordinary differential equations of the concentrations of $A$ and $B$, $x_A$ and $x_B$, with control is given by}
\begin{equation} 
    \tcr{ 
    \dv{}{t}\begin{pmatrix}x_A\\x_B\end{pmatrix}=\begin{pmatrix}1&-1&0\\0&1&-1\end{pmatrix}\begin{pmatrix}u_1\\u_2\\u_3\end{pmatrix}\odot\begin{pmatrix}v^{\rm max}_1 (a-k_1 x_A)\\v^{\rm max}_2(c+x_B^\gamma)(x_A-k_2x_B)\\v^{\rm max}_3(x_B-k_3b)\end{pmatrix}.\label{eq:example}
    }
\end{equation}
\tcr{
Here, $u_i, (i=1,2,3)$ are considered to be the concentrations of catalytic enzymes of the corresponding reactions. While the Sel'kov model is a coarse-grained model, if we were to assign a real-world equivalent, enzymes of reaction $1, 2$, and $3$ correspond to glucose-6-phosphate isomerase, phosphofructokinase, and fructose-bisphosphate aldolase, respectively. $v^{\rm max}_i$ are the maximum reaction speed of the corresponding enzyme. The enzyme of the second reaction (phosphofructokinase) is an allosteric enzyme and positively regulated by ADP which is represented by the chemical B in the Sel'kov model. The reaction rate increases as $x_B$ increases from the basal level $v^{\rm max}_2c$ with nonlinearity parameter $\gamma>0$. $v^{\rm max}_1a$ and $v^{\rm max}_3k_3b$ are the supply rate of A and B, respectively. $k_i'$s are the reversibility parameters. By taking $k_i\to 0$ limit for all $i'$s, all reactions become irreversible and we restore the original Sel'kov model with control. The stoichiometric vectors are drawn as the red, blue, and green arrows in Fig.~\ref{fig:example}(b).
}

The main question is how much controllability the system (Eq.~\eqref{eq:ode}) has: What are the conditions for the source state $\bm x^{\rm src}$ and the target state $\bm x^{\rm tgt}$ to have a control input $\bm u(t)$ that realizes the transition from $\bm x^{\rm src}$ to $\bm x^{\rm tgt}$ following Eq.~\eqref{eq:ode}? To answer this question, we introduce a mathematical definition of \textit{controllability}. 
\tcr{
\begin{defi}[Controllable]\label{defi:controllable}
The system is said to be asymptotically \textit{controllable} from $\bm x^{\rm src}$ to $\bm x^{\rm tgt}$ if there exists a control input $\bm u(t):\mathbb{R}_{\ge 0}\to\mathbb{R}^R_{\ge 0}$ such that the solution of Eq.~\eqref{eq:ode} with the control $\bm u(t)$ and the initial state $\bm x(0)=\bm x^{\rm src}$ reaches $\bm x^{\rm tgt}$, i.e., $\lim_{t\to\infty}\bm x(t)=\bm x^{\rm tgt}$.     
\end{defi}
While this defintion is for the asymptotic controllability in a rigorous manner, we shall use the term ``controllable'' as a shorthand for asymptotically controllable throughout this paper.} We gather all the initial states $\bm x^{\rm src}$ that can be controlled to $\bm x$, and call it \textit{the controllable set} of $\bm x$, which is denoted by ${\mathfrak C}(\bm x)$. In the present study, we identify the controllable set of a given state $\bm x$ in catalytic reaction systems, in particular, what type of state can have maximum controllability, that is, the largest possible controllable set. In the following, we restrict our attention to the positive orthant $\mathbb R^N_{> 0}$ to avoid the boundary problem \cite{Anderson2011-mo}.

Controllability is determined by the constraints imposed on the system. Here, we focus on the thermodynamic constraint that the catalysts cannot change the chemical equilibrium. Aside from this constraint, chemical reaction network systems, in general, have the network-level constraints, which are conceptualized by \textit{the stoichiometric compatibility class} (SCC) \cite{Feinberg2019}. SCC is given for each state $\bm x$ as the set of all states that can be reached from $\bm x$ by controlling the reaction rates with the sign-free control input $\bm u(t):\mathbb{R}_{\ge 0}\to\mathbb{R}^R$ in Eq.~\eqref{eq:ode}. SCC is given by the parallel translocation of the full linear span of the stoichiometric vectors;
\begin{eqnarray}
    {\cal W}(\bm x)&\coloneqq&\{\bm x+\sum_{i=1}^R a_i\bm S_i\mid a_i\in\mathbb R\}\cap \mathbb R^N_{> 0}.
\end{eqnarray} 
If a given pair of states $\bm x^{\rm src}$ and $\bm x^{\rm tgt}$ is not in the same SCC, any control is infeasible between $\bm x^{\rm src}$ and $\bm x^{\rm tgt}$. SCC of the Sel'kov model (Eq.~\eqref{eq:example}) is the whole positive orthant $\mathbb R^2_{> 0}$, while for instance, if one restrict the model to have only the second reaction $A\leftrightharpoons B$, SCC becomes the anti-diagonal line, ${\cal W}(\bm x)=\{\bm y\in \mathbb R^2_{>0}\mid y_A+y_B=x_A+x_B\}$ (see Fig.~\ref{fig:example}(b)). 

\begin{figure}[htbp]
    \begin{center}
    \includegraphics[width = 120 mm, angle = 0]{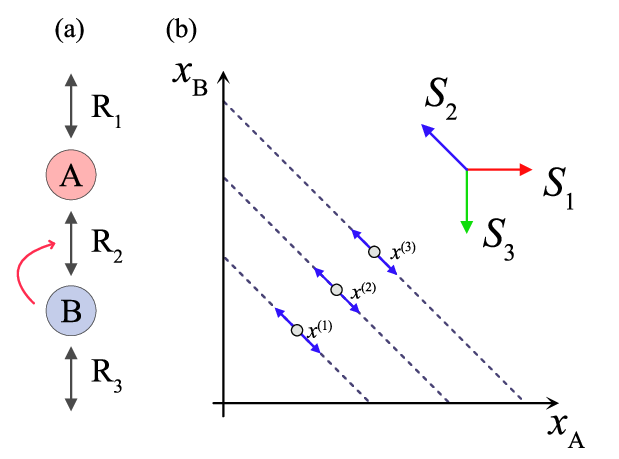}
    \caption{(a) A schematic illustration of the Sel'kov model. The black arrows represents the substrate-product relationship of each reaction, whereas the red arrow is the feedback activation. (b) The stoichiometric vectors of the reactions (red, blue, and green arrows) in the phase space. The stoichiometric compatibility class for $\bm x^{(i)}\, (i=1,2,3)$ of the model consisting only of the reaction $R_2$ are depicted as the broken lines.}
        \label{fig:example}
      \end{center}
    \end{figure}

\tcr{Lastly, we introduce the concept of \textit{local stabilizability} for the catalytically controlled chemical reaction systems for the later purpose \cite{nijmeijer1990nonlinear}.
\begin{defi}[Local stabilizability]\label{defi:stabilizable}
    A state $\bm x^*$ is said to be locally stabilizable if there exists a state-dependent feedback control $\bm u(\bm x): \mathbb{R}^N_{> 0} \to \mathbb{R}^R_{\ge 0}$ such that $\bm x^*$ is a locally asymptotically stable fixed point of the system Eq.~\eqref{eq:ode} with $\bm u(\bm x)$ as a control within its stoichiometric compatibility class $\mathcal{W}(\bm x^*)$. 
\end{defi}
Note that when we utilize a state-dependent control being only implicitly dependent on $t$ via state $\bm x$, $\bm u(\bm x(t))$, as $\bm u(t)$, Eq.~\eqref{eq:ode} becomes an autonomous dynamical system. So, the stabilizability can be rephrased with the dynamical-systems language; a system is locally stabilizable if there exists a choice of state-dependent control $\bm u(\bm x)$ such that every eigenvalues of the Jacobian matrix $$\mathbb J=\mathbb S\frac{\partial (\bm u(\bm x)\odot \bm v(\bm x))}{\partial \bm x}$$ has the negative real part except the zero eigenvalues corresponding to the left null-space of $\mathbb S$.}

\section{Catalytic control and conical combinations}In addition to the network-level constraint conceptualized by SCC, we have another constraint on the control of catalytic reaction systems. This constraint originates from the feature that the catalysts only change the activation energy of the chemical reactions \cite{atkins2023atkins}. As a consequence, the catalysts cannot directly control the directionality of the reactions, but the directions are set by the concentrations of the metabolites as the sign of the reaction rate function $v_r(\bm x)$ in Eq.~\eqref{eq:ode}. Therefore, the linear combination of stoichiometric vectors is no longer a way to evaluate the controllability of the system. However, this approach can be extended to evaluate the controllability of the catalytic reaction system. The key point is again that catalytic control does not change the chemical equilibrium, and thus, the directionality of the reaction is set by the concentrations of the metabolites. This allows us to partition the phase space into subregions where the directionality of the reaction is fixed and the conical combination of the stoichiometric vectors tells us the controllability.

The purpose of the following paragraphs is to partition phase space into regions that we call \textit{the cell} where the directionalities of the reactions are fixed. For this purpose, we assume that each reaction rate function $v_r(\bm x)$ has the following decomposition:
\begin{eqnarray}
    v_r(\bm x)&=&f_r(\bm x)p_r(\bm x)\label{eq:decomposition1}\\
    f_r(\bm x) &>& 0\label{eq:decomposition2}\\
    p_r(\bm x)&=&\prod_{i=1}^N x_i^{n^+_{i,r}}-k_r\prod_{i=1}^N x_i^{n^-_{i,r}},\label{eq:decomposition3}
\end{eqnarray}
where $n_{i,r}^\pm$ represents the reaction order of the $i$th metabolite of the forward ($n^+_{i,r}$), and the backward ($n^-_{i,r}$) reaction of the $r$th reaction. $k_r> 0$ is the reversibility constant of the $r$th reaction.\footnote{The results in the paper can be extended to the case with $\bm k\in\mathbb R^{R}_{\geq 0}$. See SI text for the details.} 
We term $\bm p(\bm x)$ and $\bm f(\bm x)$ \textit{the thermodynamic part} and \textit{the kinetic part}, respectively. Note that this assumption holds for most of the popular biochemical reaction rate kinetics such as mass-action kinetics, (generalized) Michaelis-Menten kinetics, ordered- and random multi-molecular reaction kinetics, and ping-pong kinetics \cite{Schauer1983-qg,cornish2013fundamentals}. 

An important feature of the reaction kinetics of this form is that the direction of the reaction is set by the thermodynamic part $\bm p(\bm x)$ as it is related to the thermodynamic force of the reaction. On the other hand, the remaining part $\bm f(\bm x)$ is purely kinetic, and it modulates only the absolute value of the reaction rate, but not the direction. Let us consider the Michaelis-Menten kinetics, $v_{\rm max}(x_S-kx_P)/(K_M+x_S+x_P)$ with $v_{\rm max}$, $K_M$, $k$ as the maximum reaction rate, Michaelis-Menten constant, and reversibility constant, respectively. In this case, the thermodynamic part corresponds to its numerator, $x_S-kx_P$, whereas the remaining part $v_{\rm max}/(K_M+x_S+x_P)$ is the kinetic part.

Here, we introduce the main concepts: {\it the balance manifold} ${\cal M}_r$ and {\it the cell} $C(\bm \sigma)$.\footnote{The balance manifold and cell are termed ``null-reaction manifold'' and ``direction subset'' in \cite{Himeoka2024-mo}.} 
\tcr{
\begin{defi}[Balance Manifold]\label{defi:balance_manifold}
The balance manifold of the $r$th reaction, ${\cal M}_r$ is defined as zero locus of $p_r(\bm x)$.
\begin{equation}
{\cal M}_r=\{\bm x\in \mathbb R_{> 0}^N\ |\ p_r(\bm x)=0\},\label{eq:balance_manifold}
\end{equation}
\end{defi}
}
The phase space is then partitioned by the balance manifolds $\{{\cal M}_r\}_{r=1}^R$ into regions that we call cells. In a cell $C(\bm \sigma)$, the reaction directions are fixed and represented by a binary\footnote{For the readability, we work with the setup $\bm \sigma\in \{-1,1\}^R$ of the problem in the main text, while the rigorous setup should allow $\sigma_i$ to be zero, i.e., $\bm \sigma\in \{-1,0,1\}^R$. The proof in the SI text is provided with this setup.} vector $\bm \sigma\in \{-1,1\}^R$. The cell is formally given by the following
\tcr{
\begin{defi}[Cell]\label{defi:cell}
    A subset of $\mathbb R_{>0}^N$ defined below is called a cell with reaction direction $\bm \sigma$, $C(\bm \sigma)$
    \begin{equation}
    C(\bm \sigma)=\{\bm x\in \mathbb R_{> 0}^N\ |\ {\rm sgn} \ \bm p(\bm x)=\bm\sigma\}.\label{eq:cell}
    \end{equation}
\end{defi}
}
Hereafter, ``cell'' is used to denote a region in the phase space, not a biological cell. An important note is that the geometries of ${\cal M}_r$ and $C(\bm \sigma)$ remain invariant under the catalytic control.

Because the directionality of the reactions in the cell $C(\bm \sigma)$ is $\bm \sigma$, the conical hull of the directed stoichiometric vectors $\{\sigma_i\bm S_i\}_{i=1}^R$ plays a central role in evaluating controllability within the cell. In parallel to the definition of SCC, we define \textit{the stoichiometric cone} \cite{Feinberg2019} of state $\bm x$ with the direction $\bm \sigma$ as
\begin{equation}
    {\cal V}_{\bm \sigma}(\bm x)\coloneqq\{\bm x+\sum_{i=1}^R a_i\sigma_i\bm S_i\mid a_i\geq 0\} \cap \mathbb R^N_{>0}.
\end{equation}
The set of all conical combinations of the set of vectors $\{\bm w_i\}_{i=1}^R$ is often called \textit{the conical hull} and is denoted by ${\rm cone}(\{\bm w_i\}_{i=1}^R)\coloneq \{\sum_{i=1}^R a_i\bm w_i\mid a_i\geq 0\}$. The stoichiometric cone is a parallel translation of the conical hull of the directed stoichiometric vectors. The geometry of the balance manifolds and cells of the Sel'kov model (Eq.~\eqref{eq:example}) are shown in Fig.~\ref{fig:partition}.

\begin{figure}[htbp]
    \begin{center}
    \includegraphics[width = 120 mm, angle = 0]{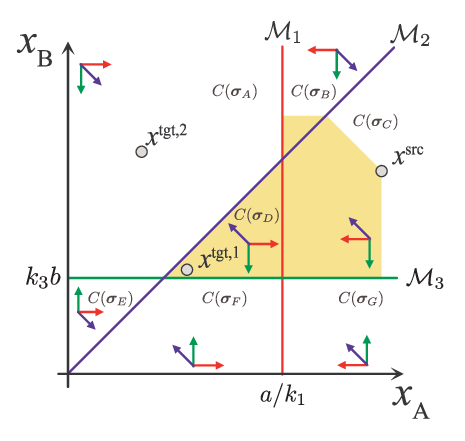}
    \caption{The phase space of the Sel'kov model (Eq.~\eqref{eq:example}) are partitioned by the balance manifolds ${\cal M}_r$ (red, blue, and green lines) into cells \tcr{$C(\bm \sigma_A), C(\bm \sigma_B),\ldots, C(\bm \sigma_G)$.} The directed stoichiometric vectors $\{\sigma_i \bm S_i\}_{i=1}^3$ are shown in each cell. The colors of the directed stoichiometric vectors are the same as the corresponding reactions. The yellow shaded region is the state reachable from $\bm x^{\rm src}$ by the non-negative controls. \tcr{One target state $\bm x^{{\rm tgt},1}$ is thus controllable from $\bm x^{\rm src}$, while the other target $\bm x^{{\rm tgt},2}$ is not. In this model, only $C(\bm \sigma_D)$ is a free cell.}}
        \label{fig:partition}
      \end{center}
    \end{figure}

Let us demonstrate how we can evaluate the controllability of the system using a cell. Assume that the trajectory of Eq.~\eqref{eq:ode} is confined within a single cell $C(\bm \sigma)$ during the interval $t\in [t_0,t_1]$. Recalling that the balance manifold and cell are independent of the control $\bm u(t)$, the trajectory is expressed as
\begin{eqnarray}
    \bm x(t) &=& \bm x(t_0)+\mathbb S \int_{t_0}^{t}\bm u(s)\odot \bm f(\bm x(s))\odot \bm p(\bm x(s))\,ds\label{eq:ray1}\\
    &=& \bm x(t_0)+\mathbb S\bm \sigma\odot \int_{t_0}^{t} \bm u(s)\odot \bm f(\bm x(s))\odot |\bm p(\bm x(s))|\,ds\label{eq:ray1a}\\
    &=& \bm x(t_0)+\mathbb S\bm \sigma\odot \int_{t_0}^{t} \tilde{\bm u}(s)\,ds\label{eq:ray2}
\end{eqnarray}
where $\tilde{\bm u}(t)={\bm u}(t)\odot\bm f(\bm x(t))\odot |\bm p(\bm x(t))|$. Since we have $f_r(\bm x)>0$ and $|p_r(\bm x)|>0$ for $\bm x \in C(\bm \sigma)$ for $1\leq r\leq R$, there is a one-to-one correspondence between $\tilde{\bm u}$ and ${\bm u}$ for fixed $\x$.

Therefore, the following statement holds: If there is a trajectory from $\bm x^{(0)}\coloneq\bm x(t_0)$ to $\bm x^{(1)}\coloneq\bm x(t_1)$ following Eq.~\eqref{eq:ode} and $\bm x(t)\in C(\bm \sigma)$ for $t\in [t_0,t_1]$, then there exists a non-negative vector $\tilde{\bm u}(t)\in \mathbb R_{\geq 0}^R$ such that 
\begin{equation}\bm x(t+\Delta t)=\bm x(t)+\mathbb S\bm \sigma\odot \tilde{\bm u}(t)\Delta t.\label{eq:dt}\end{equation}
holds in the $\Delta t\to 0$ limit for any $t\in [t_0,t_1]$. Importantly, the converse of this statement also holds. Suppose that there is a path $\bm x$ from $\bm x^{(0)}$ to $\bm x^{(1)}$ in the cell $C(\bm \sigma)$ parameterized by $t\in [t_0,t_1]$ so that $\bm x(t_0)=\bm x^{(0)}$ and $\bm x(t_1)=\bm x^{(1)}$ hold. If for any $t\in [t_0,t_1]$ there exists a non-negative vector $\tilde{\bm u}(t)\in \mathbb R_{\geq 0}^R$ that satisfies Eq.~\eqref{eq:dt} in the $\Delta t\to 0$ limit, the path $\bm x(t)$ is a solution of Eq.~\eqref{eq:ode} with the control $u_i(t)=\tilde{u}_i(t)/(p_i(\bm x(t))f_i(\bm x(t)))$. 

Based on the above arguments, the global nonlinear controllability problem is reduced to finding appropriate conical combinations in each cell. In Fig.~\ref{fig:partition}, the balance manifolds, cells, and directed stoichiometric vectors of the Sel'kov model (Eq.~\eqref{eq:example}) are presented. The controllable region from the state $\bm x^{\rm src}$ (the gray point at the right middle of the figure) is computed following the above argument, and is highlighted in yellow. If negative control is allowed, the controllable region from $\bm x^{\rm src}$ is the entire space $\mathbb R^2_{\geq 0}$, whereas the controllable region is restricted by the non-negativity of the control constraint. An intuitive restriction is that the concentration of chemical $A$ cannot increase further from the starting point $x_A^{\rm src}$ because $x_A^{\rm src}$ is larger than $a/k_1$ and the equilibrium concentration of the reaction $A\leftrightharpoons B$. By taking advantage of the mapping of the controllability problem to the conical combinations, we can develop an efficient method for numerically evaluating the controllable set. Further details on the numerical approach are provided in \cite{Himeoka2024-mo}. 

\tcr{Note that at the introduction of $\tilde{\bm u}$ (Eq.\eqref{eq:ray1a} and \eqref{eq:ray2}), the contribution of $\bm f(\bm x)$ and $|\bm p(\bm x)|$ are cancelled by defining $\tilde{\bm u}$ by $\tilde{\bm u}(t)={\bm u}(t)\odot\bm f(\bm x(t))\odot |\bm p(\bm x(t))|$. This means that the nonlinearity in the kinetic part, e.g., feedback regulation of enzyme activity, has no effect on the controllability because this activation can be counteracted by modulating $\bm u$. In the specific case of the Sel'kov model, the control cancels the feedback term $(c+x_B^\gamma)$ in Eq. \eqref{eq:example}, which plays a critical role in the Hopf bifurcation. Consequently, the Hopf bifurcation does not affect the controllability of the system.}

\section{Universal controllability of free cells} Thus far, we have shown that the controllability of catalytic reaction systems is well-captured by the conical combination inside each cell. In the following, we describe that there are special type of cells, which we call \textit{the free cells}, where the controllability is maximized: An arbitrary pair of states inside the same free cell is mutually controllable as long as the two are in the same SCC. Additionally, a given state can be locally stabilized by feedback control if and only if the state is in a free cell. Furthermore, as the main claim of the present study, an arbitrary state in the free cell is globally controllable from any initial state in the same SCC. This means that if a state is locally stabilizable, then it is globally controllable. In this section, we suppose that SCC of the model (Eq.~\eqref{eq:ode}) matches the whole space $\mathbb R^N_{> 0}$ for readability's sake, whereas we provide proof for the general case in the SI text.

\tcr{First, we formally introcude \textit{free cell};
\begin{defi}[Free Cell]\label{defi:free_cell}
A cell $C(\bm \sigma)$ is termed a free cell if the set of conical combinations of its directed stoichiometric vectors coincides with its full linear span: $${\rm cone}\{\sigma_i \bm S_i\}_{i=1}^R={\rm span}\{\sigma_i \bm S_i\}_{i=1}^R.$$ 
\end{defi}
}
In practical terms, within a free cell $C(\bm \sigma)$ the control is ``unrestricted'': For any path connecting two arbitrarily chosen states $\bm x,\bm y\in C(\bm \sigma)$ there is always a choice of $\tilde{\bm u}\in \mathbb R^R_{\geq 0}$ satisfying Eq.~\eqref{eq:dt} in $\Delta t\to 0$ limit because the conical combination of the directed stoichiometric vectors equals to their full linear span. $C(\bm \sigma_D)$ in Fig.~\ref{fig:partition} is the free cell of the Sel'kov model (Eq.~\eqref{eq:example}) and the others are non-free cells. \tcr{The reaction directions in this free cell are all positive, and the reaction flux flows from the top to the bottom of the network (Fig.~\ref{fig:example}(a)): $\emptyset\to A$, $A\to B$, and $B\to \emptyset$.} Note that the freeness of the cell is set solely by the function form of the thermodynamic part of the reaction rate function $\bm p(\bm x)$ and reversibility constant $k$. The structure of the cells is independent of the choice of, for instance, the maximum catalytic speed $v_{\rm max}$ and the Michaelis-Menten constant $K_m$ in generalized Michaelis-Menten kinetics. 

The property of the free cell offers resilience, or homeostasis in the biological context, of the system to external perturbations. Suppose that the system is perturbed within a free cell from $\bm x\in C(\bm \sigma)$ to $\bm x'\in C(\bm \sigma)$. If $C(\bm \sigma)$ is a free cell, it is always possible to control the system back to its original state $\bm x$ (see Fig.~\ref{fig:free_cell}(a)). In contrast, if $C(\bm \sigma)$ is a non-free cell, there are perturbation directions that cannot be counteracted, regardless of the perturbation strength (Fig.~\ref{fig:free_cell}(b)). This intuition leads to two useful propositions about the free cell.
\tcr{
\begin{prop}[Stabilizability in free cells]\label{prop:stabilize}
    For any state $\bm x^*\in C(\bm \sigma)$, there exists a control that locally stabilizes $\bm x^*$ within $C(\bm \sigma)$ if and only if $C(\bm\sigma)$ is a free cell.
\end{prop}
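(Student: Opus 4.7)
The plan is to prove the two implications separately, using the decomposition of Eqs.~\eqref{eq:ray1}--\eqref{eq:ray2} and convex-geometric characterizations of the free cell condition. Throughout, let $V \coloneqq \mathrm{span}\{\bm S_i\}_{i=1}^R$ be the tangent space of the SCC (which is preserved by the dynamics), and recall that inside $C(\bm\sigma)$ the factorization $v_i(\bm x) = \sigma_i f_i(\bm x)|p_i(\bm x)|$ holds with $f_i,|p_i| > 0$.

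For the sufficient direction ($\Leftarrow$), I would construct an explicit stabilizing feedback. The free cell condition $\mathrm{cone}\{\sigma_i \bm S_i\} = V$ is equivalent, by a standard theorem on cones that coincide with their linear span, to the existence of a strictly positive vector $\bar{\bm u} \in \mathbb{R}^R_{>0}$ with $\mathbb{S}(\bm\sigma \odot \bar{\bm u}) = 0$. It also guarantees surjectivity of the map $\bm u \mapsto \mathbb{S}(\bm\sigma \odot \bm u)$ onto $V$, so one can pick a linear right-inverse $\Lambda : V \to \mathbb{R}^R$ with $\mathbb{S}(\bm\sigma \odot \Lambda(\bm y)) = \bm y$ for all $\bm y \in V$. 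Define the effective control $\tilde{\bm u}(\bm x) = \bar{\bm u} - \alpha\,\Lambda(\bm x - \bm x^*)$ and the physical feedback $u_i(\bm x) = \tilde u_i(\bm x)/(f_i(\bm x)|p_i(\bm x)|)$, which is well defined on $C(\bm\sigma)$. Since $\bar u_i > 0$ and $\Lambda$ is continuous, $\tilde u_i$ stays non-negative on some neighborhood of $\bm x^*$ in $C(\bm\sigma)\cap\mathcal{W}(\bm x^*)$ for sufficiently small $\alpha > 0$. Plugging into Eq.~\eqref{eq:ray2}, the closed-loop dynamics become exactly $\dv{\bm x}{t} = \mathbb{S}(\bm\sigma \odot \tilde{\bm u}(\bm x)) = -\alpha(\bm x - \bm x^*)$ within this neighborhood of the SCC, giving local exponential stability.

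For the necessary direction ($\Rightarrow$), I would argue by contrapositive. If $C(\bm\sigma)$ is not free, then $\mathrm{cone}\{\sigma_i \bm S_i\}$ is a proper convex subcone of $V$, so the supporting hyperplane theorem applied inside $V$ produces a nonzero $\bm\nu \in V$ with $\bm\nu \cdot \sigma_i \bm S_i \geq 0$ for every $i$. For any non-negative feedback $\bm u(\bm x)$ and any $\bm x \in C(\bm\sigma)$,
\begin{equation}
\bm\nu \cdot \dv{\bm x}{t} \;=\; \sum_{i=1}^R u_i(\bm x)\, f_i(\bm x)\,|p_i(\bm x)|\,(\bm\nu \cdot \sigma_i \bm S_i) \;\geq\; 0,
\end{equation}
so $\bm\nu \cdot \bm x(t)$ is monotonically non-decreasing along every trajectory staying in $C(\bm\sigma)$. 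For any candidate fixed point $\bm x^* \in C(\bm\sigma)$, choose the perturbation $\bm x(0) = \bm x^* + \epsilon\bm\nu \in \mathcal{W}(\bm x^*) \cap C(\bm\sigma)$, which satisfies $\bm\nu \cdot \bm x(0) > \bm\nu \cdot \bm x^*$. If $\bm x^*$ were locally stabilizable, the closed-loop trajectory would remain in the open cell $C(\bm\sigma)$ near $\bm x^*$, and the monotonicity would force $\bm\nu \cdot \bm x(t) \geq \bm\nu \cdot \bm x(0) > \bm\nu \cdot \bm x^*$ for all $t$, contradicting convergence to $\bm x^*$.

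The main obstacle is the sufficiency direction, in particular establishing both convex-geometric consequences of freeness (strict positivity of $\bar{\bm u}$ and surjectivity of $\mathbb{S}\,\mathrm{diag}(\bm\sigma)$ onto $V$) and verifying that the constructed feedback stays non-negative on a full neighborhood in $C(\bm\sigma)\cap\mathcal{W}(\bm x^*)$, so that the exact linearization $-\alpha(\bm x - \bm x^*)$ really does govern the closed-loop dynamics rather than only its leading order. The necessity direction, by contrast, is a clean application of convex separation combined with a Lyapunov-type monotonicity along trajectories.
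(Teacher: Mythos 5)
Your proof is correct, and it diverges from the paper's in instructive ways. For sufficiency, both arguments hinge on the same fact (the paper's Lemma on positive dependence): freeness gives a strictly positive circulation $\bar{\bm u}$ with $\mathbb{S}(\bm\sigma\odot\bar{\bm u})=\bm 0$, which is used to offset a signed feedback so that the physical control stays non-negative near $\bm x^*$. The difference is in the feedback itself: the paper builds a quadratic potential $\Phi=\tfrac12\sum_r\sigma_r[\bm S_r\cdot(\bm x-\bm x^*)-w_r^{(0)}]^2$ and obtains gradient dynamics $\dot{\bm x}=-\mathbb{S}\,\mathrm{diag}(\bm\sigma)\,\mathbb{S}^\top(\bm x-\bm x^*)$, whereas you compose with a linear right-inverse $\Lambda$ of $\bm u\mapsto\mathbb{S}(\bm\sigma\odot\bm u)$ to get the exact contraction $\dot{\bm x}=-\alpha(\bm x-\bm x^*)$; yours avoids the rank/cokernel discussion needed to argue that $\bm x^*$ is the unique minimizer of $\Phi$ on the SCC, at the cost of invoking surjectivity onto $V$ (which holds trivially since the image is the span). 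For necessity, the routes genuinely differ: the paper argues primally that if every all-positive perturbation $\sum a_i\sigma_i\bm S_i$ can be conically counteracted then $\{\sigma_i\bm S_i\}$ is positively dependent, hence the cell is free; you argue dually, extracting from non-freeness a supporting functional $\bm\nu$ with $\bm\nu\cdot\sigma_i\bm S_i\ge 0$ and showing $\bm\nu\cdot\bm x(t)$ is non-decreasing under any non-negative feedback, so a perturbation along $+\bm\nu$ can never be undone. Your dual version is the more dynamically rigorous of the two, since it rules out return via arbitrary closed-loop trajectories (using Lyapunov stability to keep the trajectory inside the open cell where the sign factorization $v_i=\sigma_i f_i|p_i|$ holds) rather than via a single conical displacement; the paper's primal version is shorter and feeds directly into its positive-dependence lemma. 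The one point worth making explicit in your write-up is the separation step: the cone is finitely generated, hence closed, so separating it from a point of $V\setminus\mathrm{cone}\{\sigma_i\bm S_i\}$ and using conicity to force $\inf_{\bm y\in K}\bm\nu\cdot\bm y=0$ is what legitimizes the supporting hyperplane with $K$ entirely on one side.
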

\begin{prop}[Steady-states and freeness]\label{prop:fixed_point}   
    A cell $C(\bm \sigma)$ is a free cell if and only if it admits a fixed point $\bm x^*\in C(\bm \sigma)$, either stable or unstable, under a constant control $\bm u(t)=\bm u^c\neq \bm 0$.
\end{prop}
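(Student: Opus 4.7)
The plan is to translate both directions into a conical-relation statement among the directed stoichiometric vectors $\{\sigma_i \bm S_i\}_{i=1}^R$, relying on one standard convex-geometry lemma: for any finite set of vectors, $\mathrm{cone}\{\bm w_i\} = \mathrm{span}\{\bm w_i\}$ if and only if there exist strictly positive coefficients $\mu_i > 0$ with $\sum_i \mu_i \bm w_i = \bm 0$. I would prove this lemma by noting that a strictly positive relation lets one write each $-\bm w_j = \sum_{i \neq j}(\mu_i/\mu_j)\bm w_i \in \mathrm{cone}$, while conversely $\mathrm{cone} = \mathrm{span}$ forces $-\bm w_j \in \mathrm{cone}$ for every $j$; summing the resulting $R$ relations yields one in which every coefficient is at least $1$. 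Applied with $\bm w_i = \sigma_i \bm S_i$, the lemma reduces freeness of $C(\bm\sigma)$ to the existence of $\bm\mu \in \mathbb{R}^R_{>0}$ with $\mathbb{S}(\bm\sigma \odot \bm\mu) = \bm 0$.

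For the forward direction, assuming $C(\bm\sigma)$ is free, I would pick any $\bm x^* \in C(\bm\sigma)$ and obtain such a $\bm\mu > \bm 0$ from the lemma. Because $\bm x^* \in C(\bm\sigma)$, the thermodynamic part $p_i(\bm x^*)$ has sign $\sigma_i$ and $f_i(\bm x^*) > 0$, so $v_i(\bm x^*) = \sigma_i |v_i(\bm x^*)|$ is non-zero. Setting $u^c_i \coloneqq \mu_i / |v_i(\bm x^*)|$ produces a strictly positive constant control, and direct substitution turns $\sum_i \mu_i \sigma_i \bm S_i = \bm 0$ into the fixed-point equation $\mathbb{S}(\bm u^c \odot \bm v(\bm x^*)) = \bm 0$. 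For the reverse direction, I would start from a fixed point $\bm x^* \in C(\bm\sigma)$ under a strictly positive constant control $\bm u^c$, set $\mu_i \coloneqq u^c_i |v_i(\bm x^*)| > 0$, and read the fixed-point equation as the strictly positive conical relation $\sum_i \mu_i \sigma_i \bm S_i = \bm 0$; freeness then follows from the lemma.

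The hard part is conceptual rather than computational, and concerns the strict-positivity clause in the key lemma. A conical relation with some $\mu_i = 0$ is \emph{not} sufficient, since it only witnesses freeness of a sub-cone; one can easily build non-free cells that still admit fixed points once a subset of catalysts is switched off. This is why the condition $\bm u^c \neq \bm 0$ in the statement must be interpreted as $\bm u^c > \bm 0$ componentwise—the physically natural regime in which every catalyst has non-vanishing activity. A minor side-issue is verifying that $C(\bm\sigma)$ is non-empty so that a reference $\bm x^*$ is available for the forward direction, which is automatic for any genuine cell of the partition $\{C(\bm\sigma)\}_{\bm\sigma}$.
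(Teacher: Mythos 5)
Your proposal is correct and follows essentially the same route as the paper's proof: both reduce the statement to the lemma that $\mathrm{cone}=\mathrm{span}$ iff the directed stoichiometric vectors are positively dependent (proved by the same summing-over-$\rho$ trick), and both convert between the positive coefficients $\mu_i$ and the control via $u_i = \mu_i/|v_i(\bm x^*)|$ in one direction and $\mu_i = u_i|v_i(\bm x^*)|$ in the other. Your caveat that $\bm u^c\neq\bm 0$ must really mean strict positivity on the active reactions is exactly the refinement the paper itself makes in its Supplementary Information version of the proposition.
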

Prop.~\ref{prop:stabilize} and \ref{prop:fixed_point} are proven in general setup in the SI text (Prop. S1 and S2).
}

\begin{figure}[htbp]
    \begin{center}
    \includegraphics[width = 140 mm, angle = 0]{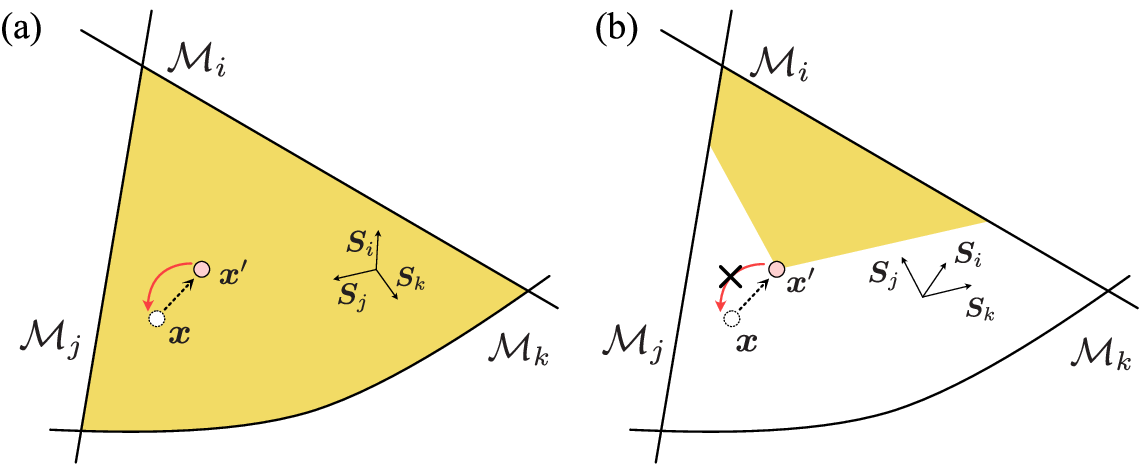}
    \caption{A visual comparison of controllability in free versus non-free cells. Yellow region is the stoichiometric cone inside the cell. (a) A perturbation from $\bm x$ to $\bm x'$ (dashed arrow) in a free cell can be counteracted, restoring $\bm x$ (solid red arc), as the conical combinations span the cell. (b) In a non-free cell, the stoichiometric cone of $\x'$ (yellow region) is insufficient to return the system to $\bm x$.}
        \label{fig:free_cell}
      \end{center}
    \end{figure}

The significance of free cells in terms of control is not only on local stabilizability, but also on global controllability. For global controllability, we need to introduce a class of reaction rate functions, which we call \textit{the stoichiometrically compatible kinetics} (SCK). SCK is a class of reaction rate functions in which the reaction order matches the corresponding stoichiometric coefficients. 
\tcr{
\begin{defi}[Stoichiometrically Compatibe Kinetics]\label{defi:SCK}
A reaction rate function $v_r(\bm x)$ is stoichiometrically compatible kinetics if it satisfies the decomposition in Eqs. \eqref{eq:decomposition1}--\eqref{eq:decomposition3} and the relation $S_{i,r}=n_{i,r}^--n_{i,r}^+$. Here;
\begin{itemize}
\item $S_{i,r}$ is the stoichiometric coefficient of the $i$-th species in the $r$-th reaction.
\item $n_{i,r}^+$ and $n_{i,r}^-$ are the reaction orders for the forward and backward reactions, respectively.
\end{itemize}
\end{defi}
}

The SCK is closely linked to the balancing condition of the chemical potential between substrates and products in the ideal-gas framework. Consider a reaction $n{\rm A}\leftrightharpoons m{\rm B}$, where $n$ and $m$ are the stoichiometric coefficients of the substrate A and product B, respectively. At chemical equilibrium, the energy balance is expressed as $n\mu_A = m\mu_B$, where $\mu_*$ denotes the chemical potential of either A or B. In the ideal-gas approximation, the chemical potential is given by $\mu_* = \mu_*^{0} + \ln x_*$, where $x_*$ is the concentration of the chemical species\footnote{Here, we set $RT = 1$, with $R$ and $T$ representing the gas constant and temperature, respectively.}. Substituting this expression for the chemical potential, taking the exponential of both sides of the equilibrium condition, and rearranging terms yields
\begin{equation}
x_A^n - k x_B^m = 0 \label{eq:energy_balance},
\end{equation}
where $k = \exp(m\mu_B^{0} - n\mu_A^{0})$. Note that the left-hand side of Eq.\eqref{eq:energy_balance} takes the form of the thermodynamic part of the reaction rate function (Eq.\eqref{eq:decomposition3}). By employing the left-hand side of Eq.\eqref{eq:energy_balance} as the reaction rate in energy-imbalanced states, we obtain a rate function implemented by SCK. While Eq.\eqref{eq:energy_balance} determines the energy-balanced state, the overall reaction rate retains a degree of freedom. This freedom is represented by the kinetic part $f(\bm x) > 0$ (Eq.\eqref{eq:decomposition2}), whose specific form depends on the details of the reaction mechanism. In the following, we refer to models in which all reactions are implemented by the SCK as \textit{full SCK models}.

Although the SCK is intimately related to chemical equilibrium, a full SCK model can attain non-equilibrium steady states. The Sel'kov model (Eq.\eqref{eq:example} and Fig.\ref{fig:example}) is a full SCK model while it can exhibit a nonzero steady reaction flux, and even more, a limit cycle oscillation. For the system to relax to a detailed-balanced chemical equilibrium, it is necessary to turn off any one of the three reactions by setting the corresponding $u_i$ to zero. The Sel'kov model relaxes to the detailed-balanced states by turning off one of the three reactions. 

The ability of the Sel'kov model to exhibit both non-equilibrium steady states and detailed-balanced chemical equilibrium by adjusting $\bm u$ is not coincidental; rather, it is a defining feature of the full SCK model, as proven in the SI text. Models that relax to detailed-balanced chemical equilibrium are in the class of \textit{thermodynamically consistent models} in the chemical thermodynamics literature \cite{Avanzini2021-ws}. The full SCK model is one from which thermodynamically consistent models can be derived by switching off the reactions \tcr{so that the number of active ($u_i\neq 0$) reactions is minimized while the rank of the stoichiometric matrix is the same with the original model (see section V in SI text).} However, the original model is not necessarily thermodynamically consistent and may exhibit non-equilibrium behaviors such as non-equilibrium steady states, oscillations, and chaos.

Our main claim is that if the reaction rate function $\bm v(\bm x)$ of all reactions in the model (Eq.~\eqref{eq:ode}) is implemented by SCK (i.e., full SCK model), then any state in arbitrary free cell can be controlled from any state (A simplified version of Corollary.~S1 in the SI text. We have Theorem.~S1 for a general setup), when formally stated,
\tcr{
\begin{thm}[Main Theorem]\label{thm:main}
    If all reactions in the model are implemented by SCK, then every state in any free cell is controllable from an arbitrary state in $\mathbb R_{>0}^N$.
\end{thm}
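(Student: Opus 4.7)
The plan is to decompose the construction of the steering control into two successive stages: first, to drive the state from $\bm x^{\rm src}$ into the free cell $C(\bm \sigma^*)$ containing $\bm x^{\rm tgt}$, and then to exploit the maximal controllability inside the free cell to reach the target. Stage~2 will be an essentially formal unwinding of Definition~\ref{defi:free_cell}: once $\bm x(t_0)\in C(\bm \sigma^*)$, freeness gives $\mathrm{cone}\{\sigma^*_i\bm S_i\}_{i=1}^R=\mathrm{span}\{\sigma^*_i\bm S_i\}_{i=1}^R$, which in particular contains the displacement $\bm x^{\rm tgt}-\bm x(t_0)$. Writing this displacement as $\sum_i a_i\sigma^*_i\bm S_i$ with $a_i\ge 0$, I would realize a straight-line path in $C(\bm \sigma^*)$ from $\bm x(t_0)$ to $\bm x^{\rm tgt}$ through Eq.~\eqref{eq:dt} with the non-negative control $u_r(t)=a_r/[f_r(\bm x(t))|p_r(\bm x(t))|]$, which is well defined because $f_r>0$ and $|p_r|>0$ inside $C(\bm \sigma^*)$.

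Stage~1 is the core of the argument. The plan is to exploit SCK through a pseudo-Helmholtz Lyapunov function
$$V(\bm x)=\sum_{i=1}^N\Bigl[x_i\ln(x_i/x^{\rm tgt}_i)-x_i+x^{\rm tgt}_i\Bigr],$$
which is strictly convex with unique minimum at $\bm x^{\rm tgt}$. A direct computation using $S_{i,r}=n^-_{i,r}-n^+_{i,r}$ will rewrite $\dot V=\sum_r u_r(\bm x)\,v_r(\bm x)\,\phi_r(\bm x)$, where $\phi_r(\bm x)=\sum_i S_{i,r}\ln(x_i/x^{\rm tgt}_i)$ compares substrate- and product-side log-activities at $\bm x$ versus $\bm x^{\rm tgt}$. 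The crucial consequence of SCK is that $p_r(\bm x)$ and $\phi_r(\bm x)$ are controlled by the same linear combination of $\ln x_i$, so if $\bm x^{\rm tgt}$ is detailed-balanced for reaction~$r$ then $p_r\phi_r\le 0$ for all $\bm x$, and hence $v_r\phi_r\le 0$. I would then select a feedback $\bm u(\bm x)\ge 0$ that activates only such ``descent'' reactions, giving $\dot V\le 0$. LaSalle's invariance principle combined with the freeness of $C(\bm \sigma^*)$ (Proposition~\ref{prop:fixed_point}) should reduce the limit set to $\{\bm x^{\rm tgt}\}$, so that the trajectory enters $C(\bm \sigma^*)$ in finite time.

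The hard part will be that $\bm x^{\rm tgt}$ need not be detailed-balanced for every reaction (witness the Sel'kov model, whose free cell carries a non-equilibrium fixed point), and that trajectories may stall on a balance manifold ${\cal M}_r$ where $p_r=0$ renders the $r$th control momentarily powerless. To resolve this, I would first replace the reference point in $V$ by the detailed-balanced equilibrium $\bm x^{\rm db}$ of a rank-preserving sub-model obtained by switching off just enough reactions, exactly as in the paper's construction of thermodynamically consistent sub-models. Standard Horn--Jackson type arguments, applicable because the positive kinetic factors $f_r$ can be absorbed into the control, should then give global convergence of the sub-model to $\bm x^{\rm db}$, which by construction lies in the closure of $C(\bm \sigma^*)$. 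A subsequent small reactivation of the switched-off reactions carries the state into the interior of the free cell, whereupon Stage~2 takes over and yields $\bm x^{\rm tgt}$.
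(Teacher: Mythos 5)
Your final route coincides with the paper's: relax to a detailed-balanced equilibrium of a rank-preserving sub-network, re-activate the remaining reactions to slip into the free cell, then exploit freeness to finish. However, the two steps you dispatch in single sentences are precisely the ones that carry the technical weight, and as written they are genuine gaps. First, you assert that the sub-model's equilibrium $\bm x^{\rm db}$ ``by construction lies in the closure of $C(\bm\sigma^*)$.'' This is not automatic: an arbitrary rank-preserving subset $J$ of reactions has a detailed-balanced equilibrium that can sit far from the target cell. The paper must prove (Prop.~S3) that one can choose $J$ with $|J|=\rank\mathbb S$ so that the balance manifolds $\{\mathcal M_j\}_{j\in J}$ are \emph{facets} of the cell in log-coordinates (via non-redundant inequality representation and Farkas' lemma), which is what forces the resulting equilibrium to be an adherent point of $C(\bm\sigma^*)$ and simultaneously guarantees $\dim\mathcal W_J=\dim\mathcal W$ so that the Horn--Jackson relaxation lands in the right SCC. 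Second, ``a subsequent small reactivation\ldots carries the state into the interior of the free cell'' is the statement of the paper's entire Step~2, not an observation. Re-activating the switched-off reactions changes the vector field, and nothing a priori prevents the flow from pushing the state away from the cell. The paper resolves this by using positive dependence of the directed stoichiometric vectors (freeness) to construct a constant control $\bm u^*$ with $\bm u_K^*=\mathcal O(\epsilon)$ making a point $\bm x^*=\bm x^{\rm eq}+\epsilon\bm c\in C(\bm\sigma^*)$ a fixed point, and then proves its local asymptotic stability by a generalized Bauer--Fike eigenvalue-perturbation argument, including a separate rank computation showing that the zero eigenvalues attached to the conservation laws remain exactly zero under the perturbation. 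Without some version of this, your Stage~1 does not deliver a state inside $C(\bm\sigma^*)$.

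A smaller but real issue in your Stage~2: cells are convex polyhedra only in \emph{logarithmic} coordinates (the sign conditions on the binomials $p_r$ are half-spaces in $\ln\bm x$), so the straight segment from $\bm x(t_0)$ to $\bm x^{\rm tgt}$ in concentration space may exit $C(\bm\sigma^*)$, at which point your formula $u_r=a_r/[f_r|p_r|]$ blows up or the required directions flip sign. You need a path that stays inside the cell (and inside the SCC), along which the instantaneous velocity is re-expanded as a conical combination at each point; this is what the paper's Lemma~S2 supplies.
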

}

In other words, the controllable set of any state in the free cell coincides with the whole space, ${\mathfrak C}(\bm x)=\mathbb R^{N}_{> 0}$ for $\bm x\in C(\bm \sigma)$ with $C(\bm \sigma)$ being a free cell.\footnote{Recall that we supposed SCC matches to the whole space $\mathbb R^N$ for the readability. If SCC is not the whole space, the controllable set of $\bm x$ coincides to SCC, ${\mathfrak C}(\x)={\cal W}(\x)$.} 

In the following, we provide an intuitive description of the claim, while the full proof is provided in the SI text. Recall that we have $N$ chemical species and $R$ reactions, and $R$ is greater than $N$ because otherwise non-zero steady-state flux is impossible. Also, we use the term ``a vertex of a cell'' in the following meaning: The balance manifold of each reaction is the zero locus of binomial (see Eq.~\eqref{eq:decomposition3} and \eqref{eq:balance_manifold}). Thus, the balance manifolds are hyperplanes in the logarithmic-transformed positive orthant $\mathbb R_{>0}^N$, and the cells are polytope or polyhedra in the space. So, each cell has vertices at the intersection of at least $N$ balance manifolds of the reactions. 

Let us denote the target state $\bm x^{\rm tgt}\in C(\bm \sigma)$, the source state $\bm x^{\rm src}$, and $C(\bm \sigma)$ as the free cell. We take one of the vertices of the free cell $C(\bm \sigma)$ and denote it as $\xeq$. By reordering the reaction indices, we can assume that the intersection of the balance manifolds $\{{\cal M}_r\}_{r=1}^N$ is $\xeq$. Since the balance manifolds are the zero-locus of each reaction rate function, $\xeq$ is the detailed-balanced, chemical equilibrium state of the $N$ reactions. Analogous to the uniqueness and global stability of the equilibrium state in thermodynamics, it is known that the detailed-balanced, chemical equilibrium state of a given system is unique and globally asymptotically stable if the model is implemented by SCK \cite{Rao2016-tz,Schuster1989-vf,Craciun2015-zk}. Therefore, the system relaxes to $\xeq$ by the control to stop all reactions whose indices are greater than $N$.

Owing to the global stability of the chemical equilibrium, any state can reach the chosen vertex of the free cell $C(\bm \sigma)$, $\xeq$. The next step is to enter the inside of the free cell from the vertex $\xeq$. This is achieved by controlling all reactions in the system. Since $\xeq$ can be globally asymptotically stable, we can construct a locally asymptotically stable state infinitesimally close to $\xeq$ and inside the free cell, by perturbatively incorporating the contribution of the reactions that were stopped in the previous step. In this step, the feature of the free cell---the conical hull matching the full linear span---plays a crucial role. 

Once the state enters the free cell, it can be controlled to any state in the free cell, particularly to $\bm x^{\rm tgt}$ as any pair of states inside the same free cell is mutually controllable. In summary, one can control any source $\bm x^{\rm src}$ to $\bm x^{\rm tgt}\in C(\bm \sigma)$ in three steps (see Fig.~\ref{fig:procedure}):
\begin{enumerate}
    \item Control the state from $\bm x^{\rm src}$ to $\xeq$ by utilizing the global stability of the chemical equilibrium.
    \item Control the state from $\xeq$ to $\xp$ by utilizing all the reactions.  
    \item Control the state from $\xp$ to $\bm x^{\rm tgt}$ by taking advantage of the free cell property.
\end{enumerate}

All reactions in the Sel'kov model (Eq.~\eqref{eq:example}) are implemented by SCK, and thus, any state in the free cell is controllable from any state. The free cell of the model is the triangle at center filled with yellow in Fig.~\ref{fig:partition} (a cell labeled $C(\bm \sigma_D)$). Readers may see by following the arrows depicted in the figure that any state in the free cell is certainly reachable from any other state in $\mathbb R_{\geq0}^2$.

\tcr{This theorem and Prop.~\ref{prop:fixed_point} are the reason why we needed to make the Sel'kov model reversible (Eq.\eqref{eq:example}). If all reactions are irreversible, the model has only a single cell. Since the irreversible Sel'kov model has a fixed point regardless of parameter values, from Prop..~\ref{prop:fixed_point} this cell is a free cell. Therefore, an arbitrarily chosen pair of states are trivially, mutually controllable. }

\begin{figure}[htbp]
    \begin{center}
    \includegraphics[width = 80 mm, angle = 0]{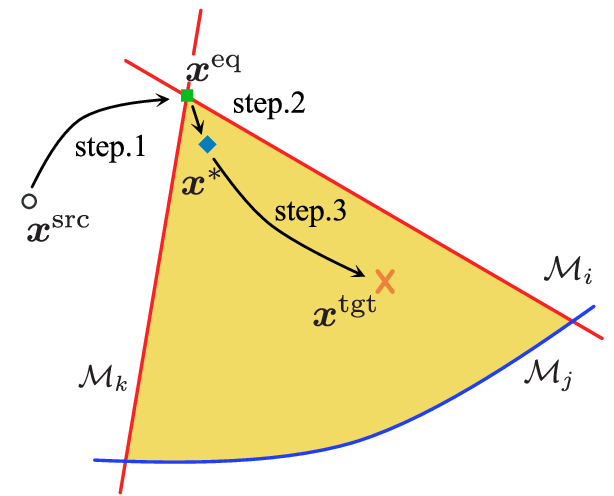}
    
    \caption{A graphical description of the control procedure. The state $\bm x^{\rm src}$ is driven to the target state $\bm x^{\rm tgt}$ via three steps. (Step 1) Control is applied to reach the equilibrium state $\xeq$ by activating the relevant control parameters (the others are set to zero). (Step 2) The control is adjusted to $\bm u^*$, controlling the state to $\xp$. (Step 3) Finally, control within the free cell brings the state to $\bm x^{\rm tgt}$. The free cell is highlighted in yellow. The states $\bm x^{\rm src}$, $\bm x^{\rm eq}$, $\bm x^*$, and $\bm x^{\rm tgt}$ are depicted by an open circle, green square, blue diamond, and orange cross, respectively. Balance manifolds intersecting at $\xeq$ (e.g., ${\cal M}_i$ and ${\cal M}_k$) are shown in red, while the other balance manifold (${\cal M}_j$) is depicted in blue.}
        \label{fig:procedure}
      \end{center}
    \end{figure}

\section{A toy model demonstration} Finally, we demonstrate global controllability using a toy model of metabolism. \tcr{The model is a reduced model of glycolysis consisting of} three metabolites ($S$, ATP, and ADP) and the following three reactions\footnote{\tcr{As stated, implementing the reaction kinetics by non-SCK is a necessary condition for the model to have a state in a free cell whose controllable set does not coincide with the whole space. However, it seemingly not easy to construct such model with a small number of chemical species. We have evaluated the controllability of 2-variable models in literature: the Sel'kov model, Schnakenberg model, and Brusselator model. Even if we make the reaction kinetics non-SCK, these three models do not exhibit uncontrollability to a state in free cells (SI text section VI).}}:
\begin{eqnarray*}
    R_1:&\quad&n\, {\rm ATP}\leftrightharpoons S+n\,{\rm ADP}\\
    R_2:&\quad&S+(n+1)\, {\rm ADP}\leftrightharpoons (n+1)\,{\rm ATP}\\
    R_3:&\quad&{\rm ATP}\leftrightharpoons {\rm ADP}.
\end{eqnarray*} 
A schematic diagram is presented in Fig.~\ref{fig:toy_model}. Here, chemical $S$ is supplied from the external environment of the system by utilizing $n$ molecules of ATP, and it is secreted to the external with converting $(n+1)$ ADPs into $(n+1)$ ATPs. Thus, through the reaction $R_1$ and $R_2$, the system obtains one ATP from one $S$. The reaction $R_3$ is the reaction for balancing ATP and ADP. \tcr{This ``investment and payoff'' architecture mirrors the design of glycolysis. Specifically, in the pathway from glucose-6-phosphate to pyruvate, one ATP molecule is initially consumed to subsequently yield four ATP molecules.} As ATP and ADP are just interconverted by the reactions, the total amount of ATP and ADP is conserved, and SCC is the level set of the total amount of ATP and ADP. Thus, we denote the total amount of ATP and ADP as $A$ and study the reduced model with two variables, given by
    \begin{eqnarray}
        &&\dv{}{t}\begin{pmatrix}
        x_S\\
        x_A
        \end{pmatrix}=
        \begin{pmatrix}
        1&-1&0\\
        -n&n+1&-1&
        \end{pmatrix}
        \begin{pmatrix}
        u_1\\
        u_2\\
        u_3
        \end{pmatrix}
        \odot
        \begin{pmatrix}
            x_A^n-k_1x_S(A-x_A)^n\\
        x_S(A-x_A)^m-k_2 x_A^m\\
        x_A-k_3(A-x_A)
        \end{pmatrix},\nonumber
    \end{eqnarray}
    where $x_S$ and $x_A$ are the concentrations of $S$ and ATP, respectively, and the kinetic part $f_i(\bm x)$ is set to unity for every $i$. The reaction rate function of the reaction $R_2$ becomes non-SCK by setting $m\neq n+1$, whereas the other reactions are implemented by SCK. We study the case in which $n=2$ in the following.

    We demonstrate that the control procedure argued above works for $m=n+1$ case, i.e., the SCK case. Our claim does not guarantee that the control procedure fails in the case of non-SCK; however, in this toy model, the procedure can indeed fail to control in the case of $m\neq n+1$. In the following, we show the result with the total concentration of ATP and ADP, $A$, equal to unity. We denote the balance manifold on SCC with $A=1$ as ${\cal M}_r$ although this is an abuse of the notation.

    First, we present a case with SCK. Here, we computed the controllability to the target state $\bm x^{\rm tgt}=(1.0,0.6)$ in the free cell $C(\bm \sigma)$ with $\bm \sigma=(1,1,1)$ (blue cell in Fig.~\ref{fig:toy_model_result}(a)) where the system steadily generates ATP through reactions $R_1$ and $R_2$. The free cell has two vertices ${\cal M}_1\cap{\cal M}_3$ and ${\cal M}_2\cap{\cal M}_3$. While we chose ${\cal M}_1\cap{\cal M}_3$ as the equilibrium state to be controlled to (see Fig.~\ref{fig:toy_model_result}(a)), the choice does not affect the result. The number and position of the vertices of the cells depend on the reaction order and reversibility parameters of the reactions. To avoid confusion, we explicitly denote the chosen state as the equilibrium state of SCK model, $\bm x^{\rm eq,SCK}$. Following the control procedure, we first set the control parameter to $\bm u^{\rm eq,SCK}=(1,0,1)$ to relax to the equilibrium state $\bm x^{\rm eq,SCK}$. The second step is to control the state inside the free cell by setting the control parameter to $\bm u^*$. Finally, we set the control parameter to the feedback control $\bm u^{\rm fb}$ which locally stabilizes $\bm x^{\rm tgt}$. As shown in Fig.~\ref{fig:toy_model_result}(b), the relaxation to $\xeq$, entrance to $C(\bm \sigma)$, and control to $\bm x^{\rm tgt}$ are successfully realized in each step.

    Next, we show that the control fails depending on the initial state in the non-SCK case. In the non-SCK case, we chose $m=1$ though $n=2$, and used the same target $\bm x^{\rm tgt}=(1.0,0.6)\in C(\bm \sigma)$ with $\bm \sigma=(1,1,1)$. In this case, the free cell has only a single vertex ${\cal M}_1\cap {\cal M}_2$ and we denote it as $\bm x^{\rm eq,nSCK}$ (see Fig.~\ref{fig:toy_model_result}(c)). As the model is non-SCK, the relaxation to $\bm x^{\rm eq,nSCK}$ with an inactivation of the reactions, except reactions $R_1$ and $R_2$ is not guaranteed. Indeed, the system relaxes toward origin $\bm 0$ instead of $\bm x^{\rm eq,nSCK}$ with $\bm u^{\rm eq,nSCK}=(1,1,0)$ as shown in the gray trajectories in Fig.~\ref{fig:toy_model_result}(d). While all trajectories relax toward the origin with $\bm u^{\rm eq,nSCK}$, some trajectories pass through the free cell $C(\bm \sigma)$ depending on the initial states. Once the state enters the free cell, it can be controlled to the target state $\bm x^{\rm tgt}$ regardless of whether the reaction kinetics are SCK or not. The cyan-colored trajectories in Fig.~\ref{fig:toy_model_result}(d) are the trajectories that enter the free cell and are controlled to $\bm x^{\rm tgt}$. 

    The control failure in the non-SCK case is not due to the control procedure adopted here. We recently developed \textit{Stoichiometric Rays} that gives an overestimate of the controllable set with a given number of reaction direction flips (the allowed count that the controlled trajectory crosses the balance manifold) \cite{Himeoka2024-mo}. Using Stoichiometric Rays, we can evaluate the controllability of the catalytic reaction system given in Eq.~\eqref{eq:ode}, from a given source state to a target state with an arbitrary, non-negative control input $\bm u(t)$. The black dots in Fig.~\ref{fig:toy_model_result}(c) are the states judged as uncontrollable to the target state $\bm x^{\rm tgt}$ with the maximum reaction flips being six by Stoichiometric Rays. 
    
    The difference in the controllability in the two cases is intuitively understood from a geometric viewpoint: In the SCK case (Fig.~\ref{fig:toy_model_result}(a)), the conical hull of the directed stoichiometric vectors \footnote{For the visibility's sake, we display the stoichiometric vectors in the logarithm-converted space. Note that the stoichiometric vectors are not straight arrows in the logarithm-transformed space.} in the cells around $\bm x^{\rm eq,SCK}$, $C(\bm \sigma), C(\bm \sigma_B), C(\bm \sigma_C)$, and $C(\bm \sigma_D)$ are directed to the $\bm x^{\rm eq,SCK}$. On the other hand, in the non-SCK case (Fig.~\ref{fig:toy_model_result}(c)), the cell with the reaction direction $\bm \sigma_D$ does not exist, and instead, $C(\bm \sigma_F)$ is a neighbor of $\bm x^{\rm eq,nSCK}$. The direction of the conical hull in $C(\bm \sigma_F)$ is not directed to $\bm x^{\rm eq,nSCK}$. This makes the approach to $\bm x^{\rm eq,nSCK}$ from $C(\bm \sigma_F)$ impossible and leads to the uncontrollablity of the states in the region represented by the black points in Fig.~\ref{fig:toy_model_result}(c) to $\bm x^{\rm eq,nSCK}$.

    \begin{figure}[htbp]
        \begin{center}
        \includegraphics[width = 80 mm, angle = 0]{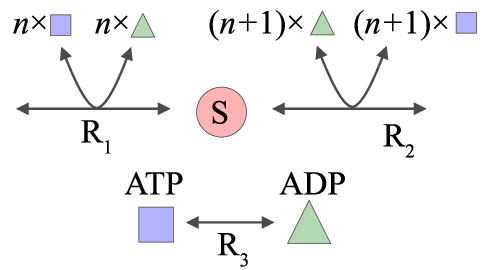}
        \caption{A schematic diagram of the toy model of metabolism. The substrate chemical S is taken up from the external environment via the reaction $R_1$ with consumption of $n$ ATPs. The reaction $R_2$ generates $(n+1)$ ATP by secreting the chemical S, respectively. The reaction $R_3$ is the reaction for balancing ATP and ADP. }
            \label{fig:toy_model}
          \end{center}
        \end{figure}

    \begin{figure}[htbp]
        \begin{center}
        \includegraphics[width = 160 mm, angle = 0]{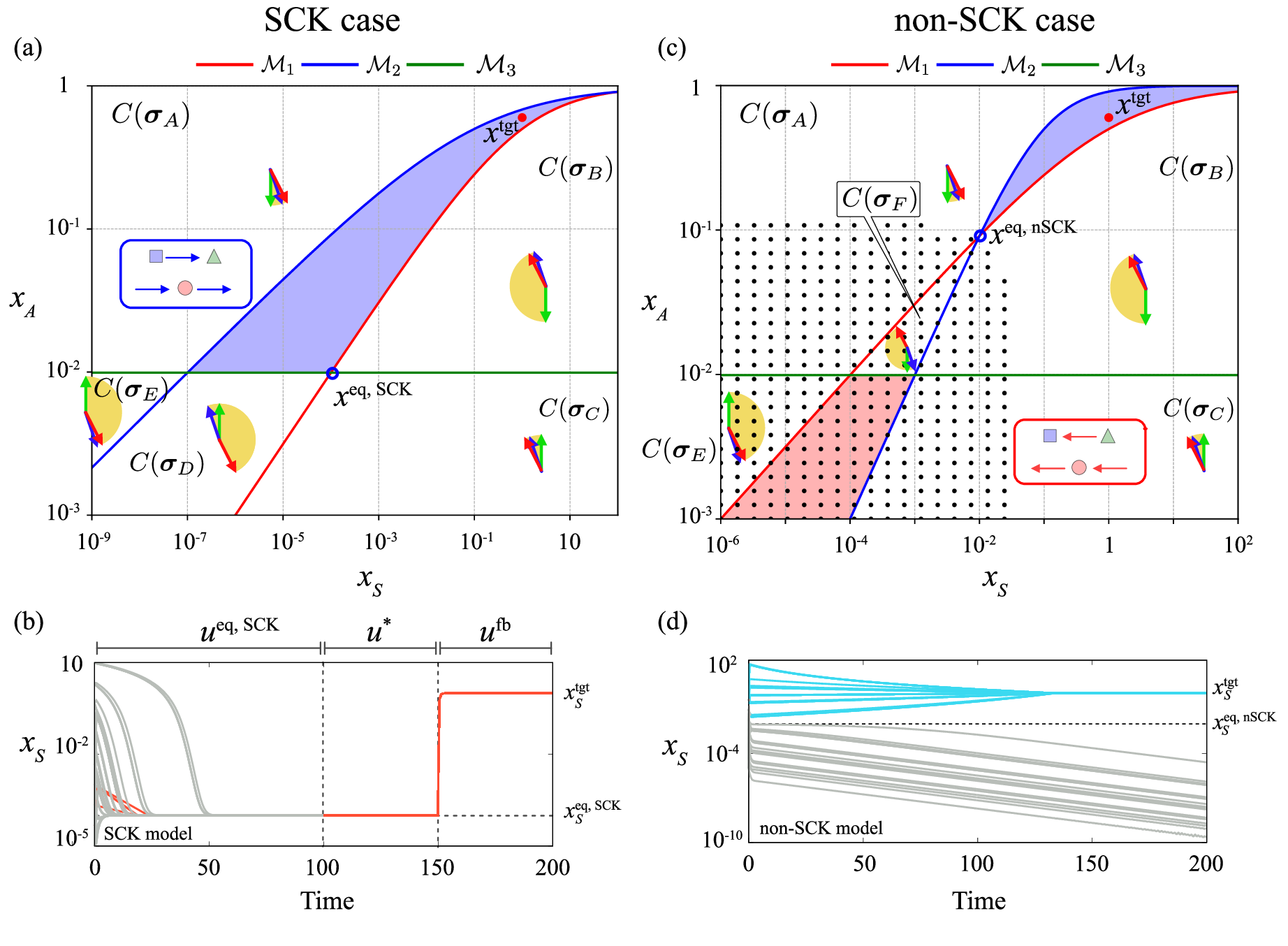}
        \caption{(a) The geometry of the cells in the phase space of SCK model. The blue shaded region is the free cells with the reaction directions shown in the inset. The directed stoichiometric vectors are depicted for the non-free cells with the red, blue, and green arrow for the reaction $R_1$, $R_2$, and $R_3$, respectively. As eye guides, angles of vectors that can be represented by a conical combination of the directed stoichiometric vectors are highlighted in yellow. (b) The controlled trajectories of the toy model with SCK ($n=2$ and $m=3$). The control input is sequentially switched from $\bm u^{\rm eq,SCK}$ to $\bm u^*$ and then to $\bm u^{\rm fb}$ at $t=100$ and $t=150$, respectively. The trajectories are colored in red if the state is in the target free cell, and otherwise are colored in gray. (c). The geometry of the cells in the phase space of non-SCK model. The cells with color shade are the free cells. The reaction directions of the red free cell is illustrated in the inset while the blue one is the same as shown in (a). The black dots are uncontrollable states to the target state regardless of the control. (d) The controlled trajectories of non-SCK ($n=2$ and $m=1$) model. The control input is initially $\bm u^{\rm eq,nSCK}$.  The trajectories are colored in cyan if the state is in the free cell $C(\bm \sigma)$. Once the state enters the free cell, the state is controlled to the target state $\bm x^{\rm tgt}$. Control parameters are set as follows; $\bm u^{\rm eq,SCK}=(1,0,1)$, $\bm u^{\rm eq,nSCK}=(1,1,0)$, and $\bm u^*\approx(4.67, 0.05, 1)$. The detailed description on the feedback control $\bm u^{\rm fb}$ is given in the SI text.}
            \label{fig:toy_model_result}
          \end{center}
        \end{figure}

    \section{Discussion} In the present study, we proved the global controllability to the states in free cells for catalytic models in which all catalysts are independently controllable and the reaction rate functions are implemented using the stoichiometrically compatible kinetics (SCK). For such a model, if the state is locally stabilizable, then the state is guaranteed to be in a free cell; thus, it is always controllable from any state in the same stoichiometric compatibility class (SCC). Taken together with the theorem in the conventional control theory that globally controllable states are locally stabilizable \cite{Clarke1997-ol}, local stabilizability and global controllability are equivalent in catalytic reaction systems. 

    By evaluating controllability, we can estimate the fundamental limits of the system. If a given \textit{in silico} model cannot attain a given task regardless of the control, it is reasonable not to expect the task to be achieved in the real system as long as the \textit{in silico} model is carefully constructed. Such overestimation of the fundamental limits of the systems may be useful for a deeper understanding of these systems. For instance, in the biological context, the fundamental limits of survival, adaptation, differentiation etc. can be estimated by the controllability of biological systems. The present results suggest that the metabolic system itself is not a limiting factor for the controllability of biological systems to homeostatically exhibit desired functions, as long as the reactions are implemented by SCK. 

    \tcr{
    The notion of chemical systems out of equilibrium emphasizes that open, energy-driven reaction networks can exhibit dynamics far from equilibrium. Systems chemistry aims to harness this principle by constructing reaction networks that operate under out-of-equilibrium conditions and use network motifs as programmable modules \cite{Wong2017-or,van-Esch2017-jr}. In peptide-based systems, dynamic features emerge from integrating multiple components: balancing strong folding against adaptivity and flexibility is critical for generating supramolecular order and disorder \cite{Sheehan2021-er}. Our finding that local stabilizability implies global controllability suggests that if a synthetic reaction network can maintain a target state against small perturbations, it can be steered to that state from any initial condition by manipulating the concentrations or activities of the catalytic molecules. This insight could contribute to the design of programmable chemical networks and help balance stability and adaptability in peptide-based systems and other out-of-equilibrium chemical reaction networks. 
    }

    However, the lesson from universal controllability---all chemical states exhibiting homeostasis are globally controllable by modulations of catalytic activity---should be taken with careful consideration of the prerequisites of control. The metabolic system is an example of catalytic reaction systems. The universal controllability, being taken at the face value, counterintuitively indicates that cellular metabolism can be controlled to any state showing homeostasis, even from an apparently ``dead'' state. Here, we discuss two aspects that we believe to be essential on the universal controllability.

    First, SCK is crucial for controllability. As mentioned, SCK is the reaction kinetics with a minimalistic extension of the detailed balance condition to non-equilibrium states with the ideal-gas formalism. While non-SCK reaction kinetics can be derived from SCK kinetics by coarse-graining multistep enzymatic reactions, the derivation needs a specific combination of the elementary steps of the reactions (see SI text Sec.~IV). However, a few mathematical models of cellular metabolism adopt non-SCK reactions for the better fits with the experimental results \cite{Boecker2021-wx,Thornburg2022-nm}. The non-SCK reaction kinetics would be better attributed to the limitation of the ideal-gas based framework rather than the specific construction of the elementary reaction steps. The limitations of the ideal-gas formalism for understanding the biological system have often been discussed, and much effort has been made for developing the chemical thermodynamics and chemical reaction network theory based upon the non ideal-gas formalism \cite{Avanzini2021-ws,Muller2014-cg,Wachtel2018-fp,Horn1972-wt,Sughiyama2022-qr,Kobayashi2024-zo}. In this sense, the deviation of the intracellular chemical reaction kinetics from the ideal-gas kinetics led by the molecular crowding, liquid-liquid phase separation, etc. would be one of the key factors of cellular controllability.\footnote{The global asymptotic stability of the detailed-balanced state is crucial for the whole proof as seen in the SI text. The universal controllability may hold even for the cases with non-SCK kinetics as long as the uniqueness and global asymptotic stability of the detailed-balanced state is guaranteed.}
    
    Another aspect is the autonomy of living systems. In the current setup for control, several unrealistic conditions were implicitly assumed, namely, perfect observability, arbitrariness of the control dynamics, and absence of the control cost. Among them, the absence of the control cost relates to disregarding the autonomy of living systems: Living systems must produce enzymes to control metabolic fluxes by themselves. The production of these enzymes requires energy and building blocks of proteins, such as ATP and amino acids, and such resources should be produced by metabolic reaction systems. 
    
    \tcr{This intrinsic requirement highlights a fundamental distinction between engineered control systems and biological organisms: while the former typically rely on external agents to set goals and manipulate parameters, the latter are characterized by their capacity for self-determination and self-maintenance \cite{Ruiz-Mirazo2004-no}. In the autonomy perspective, such self-determination is grounded in organisational closure, i.e., the mutual dependence of the system's components and operations for their production and maintenance, which collectively determine the conditions under which the system can exist \cite{Moreno2015-tr}. Regulation then appears as ``control from within'': it is exerted by dedicated subsystems that are produced by the organism and materially supported by its ongoing self-maintenance, yet are dynamically decoupled from the processes they modulate, enabling them to selectively adjust internal dynamics in response to specific perturbations \cite{Bich2016-es}. Accordingly, our control-theoretic results should be read as a first step that characterises controllability properties under idealised external actuation; extending them toward genuine biological autonomy will require endogenising the controller and its material/energetic constraints, and relating control objectives to viability norms generated by the organisation itself.
}

    Some of the authors recently proposed a mathematical framework of ``death'' \cite{Himeoka2024-mo}. In the framework, the dead state is defined as the state that is not reachable to the \textit{representative living states} which are the reference states of ``living''. Na\"{i}vely, living states have homeostasis, and otherwise we cannot observe ``living'' states because organisms are under the continuous exposure of the fluctuations. Therefore, the representative living states should have local stabilizability by control, and thus, they should be chosen from free cells. According to the present results, all states can be controlled to the representative living states indicating that death is impossible in metabolic models with the current controllability setup.
    
    For the mathematical understanding of cell deaths, the above two points would be essential, namely, the non ideal-gas based framework of metabolism and theories focusing on the autonomous nature of living systems: The machineries for controlling the system are produced by the system to be controlled.

\section*{Acknowledgments}
This work was supported by JSPS KAKENHI (Grant Numbers JP22H05403 and JP25H01390 to Y.H.; JP24K00542 and JP25H01365 to T.J.K.; JP24KJ0090 to S.A.H), JST (Grant Number JPMJCR25Q2 to T.J.K.), and Joint Research of the Exploratory Research Center on Life and Living Systems (ExCELLS) (ExCELLS program No 25EXC603-2 to YH). This research was partially conducted while visiting the Okinawa Institute of Science and Technology (OIST) through the Theoretical Sciences Visiting Program (TSVP).

\bibliographystyle{unsrt}
\bibliography{ref}

\clearpage
\setcounter{equation}{0}\renewcommand{\theequation}{S\arabic{equation}}
\setcounter{figure}{0}\renewcommand{\thefigure}{S\arabic{figure}}
\setcounter{table}{0}\renewcommand{\thetable}{S\arabic{table}}
\setcounter{thm}{0}\renewcommand{\thethm}{S\arabic{thm}}
\setcounter{defi}{0}\renewcommand{\thedefi}{S\arabic{defi}}
\setcounter{prop}{0}\renewcommand{\theprop}{S\arabic{prop}}
\setcounter{lem}{0}\renewcommand{\thelem}{S\arabic{lem}}
\setcounter{cor}{0}\renewcommand{\thecor}{S\arabic{cor}}

\setcounter{section}{0}

\section*{Supplementary Information}

\preprint{APS/123-QED}

\title{Supplementary Information Text for \textit{Local stabilizability implies global controllability in catalytically-controlled reaction systems}}

\author{Yusuke Himeoka}
\email[]{yhimeoka@g.ecc.u-tokyo.ac.jp}
 \affiliation{Universal Biology Institute, University of Tokyo, 7-3-1 Hongo, Bunkyo-ku, Tokyo, 113-0033, Japan}
\affiliation{Theoretical Sciences Visiting Program (TSVP), Okinawa Institute of Science and Technology Graduate University, Onna, 904-0495, Japan}

 \author{Shuhei A. Horiguchi}%
\affiliation{
    Nano Life Science Institute, Kanazawa University, Kakumamachi, Kanazawa, 920-1192, Japan
}
\affiliation{%
Institute of Industrial Science, The University of Tokyo, 4-6-1, Komaba, Meguro-ku, Tokyo
153-8505, Japan
}%

\affiliation{Theoretical Sciences Visiting Program (TSVP), Okinawa Institute of Science and Technology Graduate University, Onna, 904-0495, Japan}

\author{Naoto Shiraishi}
\affiliation{Faculty of arts and sciences, University of Tokyo, 3-8-1 Komaba, Meguro-ku, Tokyo, Japan}

\author{Fangzhou Xiao}
\affiliation{Westlake University, School of Engineering, 600 Dunyu Road, Xihu District, Hangzhou, China}


\author{Tetsuya J. Kobayashi}
\affiliation{Universal Biology Institute, University of Tokyo, 7-3-1 Hongo, Bunkyo-ku, Tokyo, 113-0033, Japan}

\affiliation{%
 Institute of Industrial Science, The University of Tokyo, 4-6-1, Komaba, Meguro-ku, Tokyo
153-8505, Japan
}%

\affiliation{%
 Department of Mathematical Informatics, Graduate School of Information Science and
Technology, The University of Tokyo, 7-3-1, Hongo, Bunkyo-ku, Tokyo 113-8656, Japan}
\affiliation{Theoretical Sciences Visiting Program (TSVP), Okinawa Institute of Science and Technology Graduate University, Onna, 904-0495, Japan}

\maketitle

\section{Preparation}

In the SI text, we consider the system given by

\begin{equation}
    \dv{\bm x}{t}=\mathbb S \bm u(t)\odot \bm f(\bm x(t))\odot \bm p(\bm x(t)),\label{eq:SI_ode_first}
\end{equation}
where $\bm u$ is the control input and $\bm f$ and $\bm p$ are the kinetic- and thermodynamic parts of the reaction rate function, respectively (see main text). In the following, if a given vector $\bm a$ has all positive (non-negative) elements, we denote $\bm a\succ \bm 0, (\bm a\succeq \bm 0)$. We allocate $I$ to the index set of reactions.  

We often work on the following system
\begin{equation}
\dv{\bm x}{t}=\mathbb S \bm u(t)\odot \bm p(\bm x(t)), \label{eq:SI_ode_wo_f}
\end{equation}
where the contribution of the kinetic part $\bm f(\bm x)$ is canceled by the control parameter $\bm u(t)$. Note that we assumed $\bm f(\bm x)\succ\bm 0$ for $\bm x\in \mathbb R^N_{> 0}$; therefore the cancellation does not violate the non-negativity constraint of the control parameter. 

For the readers' convenience, we recapitulate the definition of several concepts. The stoichiometric compatibility class is the parallel translation of $\Im\mathbb S$,
\begin{eqnarray}
    {\cal W}(\bm x)&\coloneq&\{\bm x+\sum_{i=1}^R a_i\bm S_i\mid a_i\in\mathbb R\}\cap \mathbb R_{> 0}^N.
\end{eqnarray} 
In the present text, we assume $\dim {\cal W}(\bm x)>0$ since no control is allowed in $\dim {\cal W}(\bm x)=0$ case. 

The balance manifold is defined as the zero locus of $p_r(\bm x)$, and the phase space is partitioned by the balance manifolds $\{{\cal M}_r\}_{r\in I}$ into cells $C(\bm \sigma)$ where the reaction directions are fixed and represented by $\bm \sigma$.
\begin{eqnarray}
    {\cal M}_r&\coloneq&\{\bm x\in \mathbb R_{> 0}^N\ |\ p_r(\bm x)=0\},\\
    C(\bm \sigma)&\coloneq&\{\bm x\in \mathbb R_{> 0}^N\ |\ {\rm sgn} \ \bm p(\bm x)=\bm\sigma\}.\label{eq:cell}
\end{eqnarray}

The controllable set ${\mathfrak C}(\bm x)$ is a set of states controllable to $\bm x$: If $\bm x_0$ is in ${\mathfrak C}(\bm x)$, there is a solution of Eq.~\eqref{eq:SI_ode_first} with the initial condition $\bm x_0$ reaching $\bm x$. 

In the SI text, we use \textit{steady-state} generically to refer to states satisfying $d\bm x/dt=\bm 0$, while the \textit{equilibrium state} is for the steady-state with vanishing reaction flux for the reactions with non-zero control, that is, the steady-state with $v_r(\x)=0$ for $u_r>0$. Note that the terms do not contain stability information. The stability of the states will be explicitly stated if necessary.

In the present text, we provide proofs for the following claims.
\begin{prop}[Stabilizability of states in free cells]\label{prop:stabilize}
    For any state $\bm x^*\in C(\bm \sigma)$, there exists a control that locally asymptotically stabilizes $\bm x^*$ within $C(\bm \sigma)\cap {\cal W}(\xp)$ if and only if $C(\bm\sigma)$ is a free cell.
\end{prop}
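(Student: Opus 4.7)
The plan is to prove both directions by exploiting the convex-analytic dichotomy for the cone $K \coloneqq \mathrm{cone}\{\sigma_i \bm S_i\}_{i \in I}$ inside the subspace $V \coloneqq \mathrm{span}\{\sigma_i \bm S_i\}_{i \in I}$; freeness of $C(\bm \sigma)$ is exactly $K = V$, and $V$ coincides with the tangent space of $\calW(\bm x^*)$. Since $\bm x^* \in C(\bm \sigma)$, both $f_i(\bm x^*) > 0$ and $|p_i(\bm x^*)| > 0$, so there is an open neighborhood $U$ of $\bm x^*$ inside $C(\bm \sigma) \cap \calW(\bm x^*)$ on which these positivity bounds persist uniformly, and on which the sign pattern of $\bm p(\bm x)$ remains $\bm \sigma$.

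For the direction ``free $\Rightarrow$ stabilizable'', I will exhibit an explicit state-feedback that drives the closed-loop system to the linear dissipation $\dot{\bm x} = -(\bm x - \bm x^*)$ on $\calW(\bm x^*)$. Fix a basis $\{\bm e_1,\dots,\bm e_d\}$ of $V$; by $K = V$, each $\pm \bm e_k$ admits a conical representation $\pm \bm e_k = \sum_{i} \alpha_i^{(k,\pm)} \sigma_i \bm S_i$ with $\alpha_i^{(k,\pm)} \geq 0$. Expanding $\bm x - \bm x^* = \sum_k c_k(\bm x)\,\bm e_k$ for $\bm x \in \calW(\bm x^*)$, set
\[
u_i(\bm x) \;=\; \frac{1}{f_i(\bm x)\,|p_i(\bm x)|} \sum_{k=1}^d \bigl[\,c_k(\bm x)^+\,\alpha_i^{(k,-)} + c_k(\bm x)^-\,\alpha_i^{(k,+)}\,\bigr],
\]
with $a^\pm \coloneqq \max(\pm a,0)$. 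A direct calculation using $\bm p(\bm x) = \bm \sigma \odot |\bm p(\bm x)|$ on $C(\bm \sigma)$ shows $\bm u(\bm x) \succeq \bm 0$, that $\bm u$ is Lipschitz on $U$, and that the closed-loop drift equals $-(\bm x - \bm x^*)$ on $\calW(\bm x^*)$. Hence $\bm x^*$ is exponentially stable, and shrinking $U$ to a small enough $\calW(\bm x^*)$-ball makes it forward-invariant inside $C(\bm \sigma)$, giving local asymptotic stability within $C(\bm \sigma) \cap \calW(\bm x^*)$.

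For the reverse direction ``stabilizable $\Rightarrow$ free'' I argue by contrapositive. If $K \neq V$, then since $K$ spans $V$, the polar cone $K^\circ \coloneqq \{\bm w \in V : \bm w \cdot \bm y \leq 0 \text{ for all } \bm y \in K\}$ contains some $\bm w \neq \bm 0$. For any non-negative feedback $\bm u(\bm x)$ and any closed-loop trajectory $\bm x(t) \subset C(\bm \sigma) \cap \calW(\bm x^*)$,
\[
\frac{d}{dt}\bigl[\bm w \cdot (\bm x(t) - \bm x^*)\bigr] \;=\; \sum_{i} u_i(\bm x)\,f_i(\bm x)\,|p_i(\bm x)|\;\bm w \cdot \sigma_i \bm S_i \;\leq\; 0,
\]
so $t \mapsto \bm w \cdot (\bm x(t) - \bm x^*)$ is non-increasing. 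Choosing $\bm y_0 = -\bm w \in V$ gives $\bm w \cdot \bm y_0 = -\|\bm w\|^2 < 0$, and for sufficiently small $\epsilon > 0$ the initial condition $\bm x(0) = \bm x^* + \epsilon \bm y_0$ lies in $U \subset C(\bm \sigma) \cap \calW(\bm x^*)$. Then $\bm w \cdot (\bm x(t) - \bm x^*) \leq -\epsilon\|\bm w\|^2 < 0$ for all $t \geq 0$, forbidding $\bm x(t) \to \bm x^*$, which contradicts local asymptotic stabilizability.

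The main obstacle I foresee is the feedback construction in the forward direction: the conical representation of $\bm v \in V$ by $\{\sigma_i \bm S_i\}$ is highly non-unique, and a careless selection can be discontinuous, blow up near the balance manifolds bounding $C(\bm \sigma)$, or fail to vanish at $\bm x^*$. The $c_k^\pm$-splitting above repairs this at the cost of only a Lipschitz (not smooth) feedback, which is still enough for a well-posed closed loop and a standard linear-stability conclusion; a secondary subtlety is extending $\bm u$ outside $U$ so that it is admissible as a map $\mathbb R^N_{>0} \to \mathbb R^R_{\geq 0}$ without disturbing the local fixed-point structure, which can be handled by a smooth cutoff.
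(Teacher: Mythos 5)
Your proposal is correct and follows essentially the same route as the paper: the forward direction realizes a prescribed stabilizing vector field (you use $-(\bm x-\bm x^*)$, the paper uses the gradient flow of a quadratic potential $\Phi$) by conically re-expressing negative coefficients through the positive dependence of $\{\sigma_i\bm S_i\}_{i\in I}$, which is exactly the content of the paper's Lemma on realizing unconstrained controls inside a free cell. For the reverse direction you replace the paper's direct positive-dependence argument with its convex dual --- a nonzero vector in the polar cone of ${\rm cone}\{\sigma_i\bm S_i\}$ yielding a functional that is non-increasing along every admissible trajectory --- which is the same obstruction in separating-hyperplane form and, if anything, handles the asymptotic convergence to $\bm x^*$ a bit more cleanly than the paper's single-displacement counteraction argument.
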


\begin{prop}[Steady-states and freeness]\label{prop:fixed_point}
    \tcr{
    A cell $C(\bm \sigma)$ is a free cell if and only if it admits a steady-state $\bm x^*\in C(\bm \sigma)$ under a constant control $\bm u$ supported on the active reactions (i.e., ${}^\exists \bm u \succeq \bm 0$ such that $\mathbb S (\bm u \odot \bm v(\bm x^*)) = \bm 0$ with $u_i > 0$ for all $i$ where $\sigma_i \neq 0$).}
\end{prop}
Here, we exclude steady-states made by setting all the control inputs to zero $\bm u=\bm 0$. Such marginally stable steady-states can exist in any cell, but they do not admit non-zero flux.

The formal definition of the stoichiometrically compatible kinetics (SCK) is as follows.
\begin{defi}[Stoichiometrically compatible kinetics]
    The reaction rate function $v_r(\bm x)$ is stoichiometrically compatible kinetics if it can be decomposed as 
    \begin{eqnarray}
        v_r(\bm x)&=&f_r(\bm x)p_r(\bm x),\\
        f_r(\bm x)&>&0,\\
        p_r(\bm x)&=&\prod_{i=1}^N x_i^{n^+_{i,r}}-k_r\prod_{i=1}^N x_i^{n^-_{i,r}},\ (k\geq 0)
    \end{eqnarray}
    and in addition, the $n_{i,r}^\pm$ satisfy
    $$S_{i,r}=n_{i,r}^--n_{i,r}^+$$
    where $S_{i,r}$ is the stoichiometric coefficient of the $i$th chemical in the $r$th reaction and $k_r>0$ is the reversibility parameter of the $r$th reaction.
\end{defi}

For the statement of the main theorem, boundaries of cells play a crucial role. We will show in Sec.~\ref{sec:proof_main} that for any cell there is a set of balance manifolds that are the boundaries of the cell. Having established the existence of the boundary, we define the \textit{full SCK subset} as follows:

\begin{defi}[Full SCK subset]\label{defi:sck_boundary}
    Let $J'$ be the index set of the balance manifolds $\{{\cal M}_j\}_{j\in J'}$ which are the boundaries of the cell $C(\bm \sigma)$. If there is an index set $J\subseteq J'$ such that $|J|=\rank \mathbb S$ and all reactions in $J$ are implemented by SCK, the cell $C(\bm \sigma)$ is said to have a full SCK subset within its boundaries. 
    \end{defi}

Here, we state the main theorem in the case with $\bm \sigma\in\{-1,1\}^R$ because the generic claim for $\bm \sigma\in\{-1,0,1\}^R$ is technical and may obscure the main argument. The extension of the main theorem to the case with $\bm \sigma\in\{-1,0,1\}^R$ is straightforward, and the statement and proof are provided after the proof of the main theorem.

\begin{thm}[Main theorem]\label{thm:main}
    If a free cell $C(\bm \sigma)$ has a full SCK subset within its boundaries, then the controllable set of any state in the free cell $\bm x\in C(\bm \sigma)$ coincides with the entire stoichiometric compatibility class, $\frakC(\bm x)={\cal W}(\bm x)$.
\end{thm}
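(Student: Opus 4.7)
The plan is to make precise the three-step strategy sketched in the main text. Fix a source state $\bm x^{\rm src}\in\mathcal{W}(\bm x)$ and a target state $\bm x^{\rm tgt}:=\bm x\in C(\bm\sigma)$, and let $J\subset I$ with $|J|=\rank\mathbb S$ index a full SCK subset of reactions whose balance manifolds bound $C(\bm\sigma)$. Throughout I work inside $\mathcal{W}(\bm x^{\rm tgt})$ and compose three controls whose trajectories concatenate to drive $\bm x^{\rm src}$ asymptotically to $\bm x^{\rm tgt}$.

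\textbf{Step 1, the thermodynamic anchor.} First I would identify a unique vertex $\bm x^{\rm eq}\in\overline{C(\bm\sigma)}\cap\mathcal{W}(\bm x^{\rm tgt})$ that annihilates every $p_j$ with $j\in J$. In logarithmic coordinates each SCK thermodynamic part $p_j$ is a difference of two exponential-affine forms whose exponent difference is precisely the stoichiometric vector $\bm S_j$, so the $\rank\mathbb S$ equations $p_j=0$ cut out independent affine hyperplanes inside the SCC and yield a single intersection point. Applying the constant catalytic control $u_j=1/f_j(\bm x)$ for $j\in J$ and $u_i=0$ otherwise reduces Eq.~\eqref{eq:ode} to a closed, purely SCK mass-action-like system on $\mathcal{W}(\bm x^{\rm tgt})$ whose unique detailed-balanced equilibrium is $\bm x^{\rm eq}$. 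Standard global-stability results for detailed-balanced SCK networks \cite{Craciun2015-zk,Rao2016-tz} then make $\bm x^{\rm eq}$ globally asymptotically stable in $\mathcal{W}(\bm x^{\rm tgt})$, so the trajectory from $\bm x^{\rm src}$ enters an arbitrarily small neighborhood of $\bm x^{\rm eq}$ in finite time.

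\textbf{Step 2, entry into the free cell.} I would then construct an interior equilibrium $\bm x^*\in C(\bm\sigma)$ arbitrarily close to $\bm x^{\rm eq}$ using a perturbed constant control $\bm u^*$ with $u^*_j>0$ for $j\in J$ and small positive $u^*_i=\epsilon_i$ for $i\notin J$. The steady-state condition $\mathbb S(\bm u^*\odot\bm f\odot\bm p)=\bm 0$ defines a smooth branch $\bm x^*(\bm\epsilon)$ via the implicit function theorem, applicable because the Jacobian of the $J$-subsystem at $\bm x^{\rm eq}$ restricted to the tangent space of $\mathcal{W}(\bm x^{\rm tgt})$ is non-singular, as a consequence of the global stability established in Step 1. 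Freeness of $C(\bm\sigma)$, together with Proposition~\ref{prop:fixed_point}, supplies sign choices of $\bm\epsilon$ that push $\bm x^*$ into the interior of $C(\bm\sigma)$, and continuity of the spectrum ensures that $\bm x^*$ remains locally asymptotically stable with a basin containing a fixed neighborhood of $\bm x^{\rm eq}$ once $|\bm\epsilon|$ is small enough. Switching to $\bm u^*$ at the end of Step 1 therefore drives the state to $\bm x^*$.

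\textbf{Step 3, transport inside the free cell, and the main obstacle.} By freeness, the cone of directed stoichiometric vectors equals $\mathrm{span}\{\bm S_i\}_{i\in I}$, so for any smooth curve in $C(\bm\sigma)\cap\mathcal{W}(\bm x^{\rm tgt})$ that connects $\bm x^*$ to a point inside the basin of a feedback stabilizer of $\bm x^{\rm tgt}$, one can pointwise solve $\dot{\bm x}=\mathbb S(\bm\sigma\odot\tilde{\bm u})$ for non-negative multipliers $\tilde u_i(t)\geq 0$ and recover an admissible catalytic control $u_i(t)=\tilde u_i(t)/(|p_i(\bm x(t))|f_i(\bm x(t)))$, the denominator being strictly positive in the interior of $C(\bm\sigma)$; this is exactly the reparameterization exploited in Eq.~\eqref{eq:dt}. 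A final switch to the feedback control furnished by Proposition~\ref{prop:stabilize} then drives the state asymptotically to $\bm x^{\rm tgt}$. The delicate part of the whole argument is Step 2: one must simultaneously ensure that the implicit-function branch stays inside $C(\bm\sigma)$ for the chosen sign pattern of $\bm\epsilon$, that the linearization at $\bm x^*$ keeps only stable eigenvalues modulo the SCC kernel, and that the basin of $\bm x^*$ is large enough to absorb the endpoint of Step 1. This is precisely where the thermodynamic non-degeneracy supplied by the SCK anchor has to be combined with the purely geometric freeness of $C(\bm\sigma)$, and it is where the bulk of the technical work will lie; the other two steps reduce to standard applications of CRNT global stability and of the conical-combination tracking argument already used for reachability in the main text.
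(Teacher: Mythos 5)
Your three-step architecture is the same as the paper's (Step~1: relax to the detailed-balanced anchor $\xeq={\cal M}_J\cap{\cal W}(\bm x^{\rm tgt})$ using global stability of SCK/detailed-balanced systems; Step~3: free-cell transport plus the feedback stabilizer of Prop.~\ref{prop:stabilize}), and those two steps are essentially correct as you state them. The gap is exactly where you locate it, in Step~2, and it is not closed by your sketch. First, your implicit-function-theorem branch $\bm x^*(\bm\epsilon)$ is parameterized by the controls on $K=I\setminus J$, which must be \emph{non-negative}; the claim that ``freeness together with Prop.~\ref{prop:fixed_point} supplies sign choices of $\bm\epsilon$ that push $\bm x^*$ into the interior of $C(\bm\sigma)$'' is both unproven and internally inconsistent (you cannot choose signs of $\bm\epsilon$ freely). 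Prop.~\ref{prop:fixed_point} guarantees a positively-controlled steady state \emph{at points already inside} the free cell; it does not tell you that the IFT branch emanating from $\xeq$ enters $C(\bm\sigma)$ for some admissible $\bm\epsilon\succeq\bm 0$. The paper avoids this problem by reversing the logic: it first \emph{chooses} a point $\xp=\xeq+\epsilon\bm c\in C(\bm\sigma)$, then uses positive dependence of $\{\sigma_i\bm S_i\}_{i\in I}$ (Lemma on free cells) to produce a strictly positive constant control $\bm u^*$ with $u_i^*=w_i/|p_i(\xp)|$ that makes $\xp$ a fixed point, and shows $\bm u^*_K={\cal O}(\epsilon)$ so that the $K$-reactions enter only perturbatively. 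If you want to keep the IFT route you must actually exhibit a non-negative $\bm\epsilon$ whose first-order displacement $-\left({\cal J}_J|_{\Im\mathbb S}\right)^{-1}\sum_{i\in K}\epsilon_i f_i(\xeq)p_i(\xeq)\bm S_i$ points into $C(\bm\sigma)$, which is the nontrivial content.

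Second, ``continuity of the spectrum'' does not establish local asymptotic stability of $\bm x^*$: the unperturbed Jacobian ${\cal J}_J(\bm u_J^*,\xeq)$ has $N-\rank\mathbb S$ zero eigenvalues, and a continuity argument alone allows these to acquire small positive real parts under perturbation. The paper spends a full subsection on this: it bounds the motion of the nonzero eigenvalues via the generalized Bauer--Fike theorem, and separately shows $\Im{\cal J}(\bm u^*,\xp)=\Im{\cal J}_J(\bm u_J^*,\xeq)=\Im\mathbb S$ (a rank argument using surjectivity of $A_J(\xeq)$), which forces the kernel to be preserved and the zero eigenvalues to remain \emph{exactly} zero. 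Your proposal flags both issues as ``the delicate part'' but does not resolve either, so as written it is an accurate plan rather than a proof; the missing ingredients are precisely the positive-dependence construction of $\bm u^*$ and the Bauer--Fike/rank argument for the linearization.
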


\begin{cor}\label{col:main_whole}
If all reactions in the model are implemented by SCK, then every state in any free cell is controllable from an arbitrary state in the stoichiometric compatibility class of the corresponding state.
\end{cor}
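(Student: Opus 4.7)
The plan is to derive Corollary~\ref{col:main_whole} directly from Theorem~\ref{thm:main} by checking its hypothesis: when every reaction is SCK, every free cell $C(\bm\sigma)$ admits a full SCK subset within its boundaries in the sense of Definition~\ref{defi:sck_boundary}. Since all reactions are SCK by assumption, the SCK-ness clause of that definition is automatic; it suffices to exhibit $\rank \mathbb S$ linearly independent boundary balance manifolds of $C(\bm\sigma)$. I will obtain these as the facets meeting at a vertex of the relative closure $\overline{C(\bm\sigma)\cap \mathcal W(\bm x)}$ inside the $\rank\mathbb S$-dimensional affine hull of $\mathcal W(\bm x)$.

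To produce such a vertex I plan to work in log-coordinates $\bm y = \log \bm x$. For an SCK reaction the identity $\bm S_r = \bm n_r^{-} - \bm n_r^{+}$ turns the balance manifold $\mathcal M_r$ into the affine hyperplane $\{\bm y : \bm S_r^{\mathsf T}\bm y = -\log k_r\}$, so $C(\bm\sigma)$ becomes the polyhedron $\{\bm y\in\mathbb R^N : \sigma_r\bm S_r^{\mathsf T}\bm y < -\sigma_r\log k_r \text{ for all } r\}$, whose recession cone equals the polar of $\mathrm{cone}\{\sigma_r\bm S_r\}_r$. The freeness hypothesis $\mathrm{cone}\{\sigma_r\bm S_r\}_r = \mathrm{im}\,\mathbb S$ then forces that recession cone to coincide with $\ker \mathbb S^{\mathsf T}$. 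Combined with the fact that displacements tangent to $\mathcal W(\bm x)$ (in $\bm x$-coordinates) lie in $\mathrm{im}\,\mathbb S$, together with the orthogonality identity $\mathrm{im}\,\mathbb S\cap \ker \mathbb S^{\mathsf T} = \{\bm 0\}$, this rules out any direction of indefinite escape inside $C(\bm\sigma)\cap \mathcal W(\bm x)$, so $\overline{C(\bm\sigma)\cap \mathcal W(\bm x)}$ is a compact $\rank\mathbb S$-dimensional polytope in the interior positive orthant. Any such polytope has at least one vertex $\bm x^{\mathrm{eq}}$ at which $\rank\mathbb S$ linearly independent boundary balance manifolds of $C(\bm\sigma)$ intersect, yielding the required index set $J$ and, via Theorem~\ref{thm:main}, the conclusion $\mathfrak C(\bm x) = \mathcal W(\bm x)$ for every $\bm x\in C(\bm\sigma)$.

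The main obstacle will be the bridge between $\bm x$-coordinates, in which $\mathcal W(\bm x)$ is affine, and $\bm y$-coordinates, in which the polyhedral and recession-cone analysis naturally takes place, since $\bm y = \log \bm x$ is a nonlinear change of variables. My intended work-around is a proof by contradiction in a fixed extent parametrisation $\bm x(\bm q) = \bm x + \mathbb S' \bm q$ of $\mathcal W(\bm x)$, where the columns of $\mathbb S'$ form a basis of $\mathrm{im}\,\mathbb S$: an unbounded sequence $\bm q_n$ with $\bm x(\bm q_n)\in C(\bm\sigma)$ yields a limit direction $\bm e = \lim \bm q_n/\|\bm q_n\|$ whose image $\mathbb S' \bm e$ lies in $\mathrm{im}\,\mathbb S$, while the large-$\|\bm q_n\|$ asymptotics of the log-linear cell inequalities $\sigma_r \bm S_r^{\mathsf T} \log \bm x(\bm q_n) < -\sigma_r \log k_r$ force $\mathbb S' \bm e$ into $\ker \mathbb S^{\mathsf T}$ as well, so $\mathbb S' \bm e = \bm 0$ and $\bm e = \bm 0$, a contradiction with $\|\bm e\|=1$. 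Once this compactness is secured, extracting a vertex and invoking Theorem~\ref{thm:main} are routine convex-geometric steps.
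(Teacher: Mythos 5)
Your overall strategy---reduce the corollary to Theorem~\ref{thm:main} by verifying that, under full SCK, every free cell has a full SCK subset within its boundaries---is the right reduction, and your recession-cone computation is correct: freeness does force the recession cone of $\overline{\log C(\bm \sigma)}$ to equal $\ker \mathbb S^{\top}$. The gap is in how you extract the index set $J$. The set $\overline{C(\bm \sigma)\cap {\cal W}(\bm x)}$ is \emph{not} a polytope in any single coordinate system: in $\bm x$-coordinates ${\cal W}(\bm x)$ is affine but the cell is bounded by zero loci of binomials (curved hypersurfaces unless all reactions are unimolecular), while in $\bm y=\log\bm x$ coordinates the cell is polyhedral but ${\cal W}(\bm x)$ is a curved (toric) submanifold. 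So the step ``any such polytope has a vertex at which $\rank \mathbb S$ linearly independent boundary balance manifolds intersect'' does not apply to this object, and that is precisely the step that is supposed to produce $J$. The compactness argument has a related defect: your contradiction requires forcing $\mathbb S'\bm e$ (an $\bm x$-space direction) into $\ker\mathbb S^{\top}$, but the recession cone constrains limiting directions of $\log\bm x(\bm q_n)$, not of $\bm x(\bm q_n)-\bm x$; these are different vectors, and for components $i$ with $(\mathbb S'\bm e)_i=0$ the quantity $\log x_i(\bm q_n)$ can still diverge at subdominant rates in either sign, so the asymptotics of $\sigma_r\bm S_r^{\top}\log\bm x(\bm q_n)<-\sigma_r\log k_r$ do not reduce to a statement about $\mathbb S'\bm e$. (The compactness conclusion itself is true, but the clean route to it goes through the conservation laws $\langle\bm\ell,\bm x\rangle=\mathrm{const}$ for $\bm\ell\in\ker\mathbb S^{\top}$ applied to the limiting direction of $\log\bm x(\bm q_n)$, not through the extent parametrisation.)

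The paper avoids all of this. Proposition~\ref{prop:facet} works with the (generally unbounded) polyhedron $\overline{\log C(\bm \sigma)}$ alone, never intersecting with the SCC: after removing redundant inequalities via Farkas' lemma, the facet--inequality correspondence (Thm.~\ref{thm:facets}) identifies the boundary balance manifolds, and since redundancy removal preserves the rank of the constraint matrix, one obtains $\rank\mathbb T$ of them with linearly independent normals. This needs neither compactness, nor a vertex, nor even freeness of the cell, and under full SCK ($\mathbb T=\mathbb S$) it immediately supplies the hypothesis of Theorem~\ref{thm:main}. If you want to keep your geometric picture, replace ``vertices of $C(\bm \sigma)\cap{\cal W}(\bm x)$'' by ``minimal faces of $\overline{\log C(\bm \sigma)}$'': these are translates of the lineality space $\ker\mathbb S^{\top}$, and each is an intersection of $\rank\mathbb S$ facets with linearly independent normals, which is all you need.
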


\begin{cor}\label{cor:local_global}
    For a system with reactions implemented only by SCK, if state $\bm x$ is locally stabilizable, then the controllable set of $\bm x$ is the entire stoichiometric compatibility class, $\frakC(\bm x)={\cal W}(\bm x).$ 
\end{cor}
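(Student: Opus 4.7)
The plan is to derive Corollary~\ref{cor:local_global} as an immediate chaining of Proposition~\ref{prop:stabilize} with Corollary~\ref{col:main_whole}, with no fresh analytic machinery required. First I would apply the ``only if'' direction of Proposition~\ref{prop:stabilize}: since $\bm x$ is assumed locally stabilizable by some feedback $\bm u(\bm x)$, the unique cell $C(\bm \sigma)$ containing $\bm x$ must be a free cell. This converts the dynamical hypothesis into the purely geometric statement $\mathrm{cone}\{\sigma_i \bm S_i\}_{i=1}^R = \mathrm{span}\{\sigma_i \bm S_i\}_{i=1}^R$ on $C(\bm \sigma)$, independent of whichever explicit feedback was used to witness stabilizability.

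I would then invoke Corollary~\ref{col:main_whole}. Its premise---that every reaction is implemented by SCK---is exactly the standing assumption of Corollary~\ref{cor:local_global}, and its conclusion is that every state in every free cell is controllable from an arbitrary state in the same stoichiometric compatibility class. Specializing to $\bm x \in C(\bm \sigma)$ yields $\mathcal{W}(\bm x) \subseteq \frakC(\bm x)$. The reverse inclusion $\frakC(\bm x) \subseteq \mathcal{W}(\bm x)$ is automatic and uses neither freeness nor SCK: any solution of Eq.~\eqref{eq:SI_ode_first} satisfies $\bm x(t)-\bm x(0) \in \mathrm{Im}\,\mathbb S$, so any controllable predecessor of $\bm x$ necessarily lies in the same affine translate of $\mathrm{Im}\,\mathbb S$ as $\bm x$. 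Combining the two inclusions gives $\frakC(\bm x) = \mathcal{W}(\bm x)$.

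The genuine difficulties therefore sit not in this final chaining step but in the two results it consumes: the ``only if'' half of Proposition~\ref{prop:stabilize}, which must rule out \emph{every} choice of stabilizing feedback whenever the conical hull fails to coincide with the linear span, and Theorem~\ref{thm:main} underpinning Corollary~\ref{col:main_whole}, whose three-step route from an arbitrary source through the equilibrium vertex $\bm x^{\mathrm{eq}}$ into the free cell hinges on the global asymptotic stability of the detailed-balanced state that SCK guarantees. Once those two are accepted, Corollary~\ref{cor:local_global} follows in a single line, and is best read as the clean packaging of ``local stabilizability $\Leftrightarrow$ free-cell membership'' with ``free cell $+$ SCK $\Rightarrow$ global controllability within the SCC''.
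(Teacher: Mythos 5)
Your proposal is correct and follows exactly the route the paper itself takes: the SI states that Corollary~\ref{cor:local_global} ``is derived from Prop.~\ref{prop:stabilize} and Cor.~\ref{col:main_whole}'', which is precisely your chaining of the ``only if'' direction of the stabilizability proposition (locally stabilizable $\Rightarrow$ free cell) with the SCK controllability corollary, plus the trivial reverse inclusion $\frakC(\bm x)\subseteq\mathcal{W}(\bm x)$ from $\dot{\bm x}\in\mathrm{Im}\,\mathbb S$. No gaps; your version is, if anything, slightly more explicit than the paper's one-line derivation.
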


The Corollary.~\ref{col:main_whole} is a direct consequence of Theorem.~\ref{thm:main}, and Corollary.~\ref{cor:local_global} is derived from the Prop.~\ref{prop:stabilize} and Cor.~\ref{col:main_whole}. We provide proofs for propositions \ref{prop:stabilize}, \ref{prop:fixed_point}, and theorem \ref{thm:main}. 

Note that we have argued Cor.~\ref{col:main_whole} in the main text for readability. Cor.~\ref{col:main_whole} requires all the reactions in the model to be implemented by the stoichiometrically compatible kinetics (SCK), while the main theorem (Theorem~\ref{thm:main}) states a generic argument. For Theorem~\ref{thm:main} to hold, the reactions to be required to be implemented by SCK are only those that the balance manifolds are chosen as a set of boundaries of the free cell. SCK is not required for the other reactions. 

We introduce a useful concept and lemma to prove the claims.
\begin{defi}[Positive dependence \cite{Feinberg2019}]\label{defi:pos_dep}
    The set of vectors $\{\bm v\}_{i\in I}$ is positively dependent if there exist constants $\bm a \succ \bm 0$ such that
     \[
     \sum_{i\in I} a_i\bm v_i = \bm 0.
     \]
   \end{defi}  
   Recall that the free cell is defined as the cell in which the conical combinations of directed stoichiometric vectors span the entire cell. 
\begin{defi}[Free cells]\label{defi:free_cell}
    A cell $C(\bm \sigma)$ is termed a free cell if the set of conical combinations of its directed stoichiometric vectors equals their full linear span:
    $${\rm cone}\{\sigma_i \bm S_i\}_{i\in I}={\rm span}\{\sigma_i \bm S_i\}_{i\in I}.$$
\end{defi}

   \begin{lem}\label{lem:pos_dep}
    The following two statements are equivalent;
    \begin{itemize}
        \item The vectors $\{\bm v_i\}_{i \in I}$ are positively dependent.
        \item ${\rm cone}\{\bm v_i\}_{i\in I}={\rm span}\{\bm v_i\}_{i\in I}$.
    \end{itemize}
   \end{lem}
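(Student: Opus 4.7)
The plan is to prove each implication separately by direct construction. The trivial containment $\mathrm{cone}\{\bm v_i\}_{i\in I}\subseteq\mathrm{span}\{\bm v_i\}_{i\in I}$ always holds, so the equality in the second statement reduces to checking that every $-\bm v_i$ belongs to the cone. This is precisely what a strictly positive linear dependence encodes, which is what makes both directions short.

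For the forward direction $(\Rightarrow)$, I would start from a positive dependence $\sum_{i\in I} a_i\bm v_i=\bm 0$ with each $a_i>0$. Solving for a single generator gives $-\bm v_j=\sum_{i\neq j}(a_i/a_j)\bm v_i$, a conical combination. Then, given an arbitrary $\bm w=\sum_i c_i\bm v_i$ in the span, I would replace every term with $c_i<0$ by $|c_i|\cdot(-\bm v_i)$ using the conic representation above. The resulting expression is a conical combination, proving $\mathrm{span}\{\bm v_i\}_{i\in I}\subseteq\mathrm{cone}\{\bm v_i\}_{i\in I}$.

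For the reverse direction $(\Leftarrow)$, suppose $\mathrm{cone}\{\bm v_i\}_{i\in I}=\mathrm{span}\{\bm v_i\}_{i\in I}$. Then for each $i\in I$, $-\bm v_i$ lies in the span and hence in the cone, so there exist coefficients $b_{ij}\geq 0$ with $-\bm v_i=\sum_{j\in I}b_{ij}\bm v_j$. Rearranging gives a dependence relation $(1+b_{ii})\bm v_i+\sum_{j\neq i}b_{ij}\bm v_j=\bm 0$ in which the coefficient on $\bm v_i$ is strictly positive and the rest are non-negative. I would then sum these $|I|$ relations, obtaining $\sum_{j\in I}\bigl(1+\sum_{i\in I}b_{ij}\bigr)\bm v_j=\bm 0$. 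Each coefficient is at least $1$, hence strictly positive, yielding the required positive dependence in the sense of Def.~\ref{defi:pos_dep}.

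I do not anticipate a real obstacle here: the lemma is a classical convex-geometry fact and both directions are elementary once one notices that the conic expressibility of each $-\bm v_i$ is the pivotal property. The only minor care needed is the summing trick in the $(\Leftarrow)$ direction, which is essential because any single relation $(1+b_{ii})\bm v_i+\sum_{j\neq i}b_{ij}\bm v_j=\bm 0$ may have zero coefficients on some $\bm v_j$, whereas summing over $i$ adds a $+1$ contribution to every index and thus guarantees strict positivity across the whole family.
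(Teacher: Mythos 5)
Your proof is correct and follows essentially the same strategy as the paper's: the reverse direction is the identical summing trick (adding the $|I|$ relations $-\bm v_i=\sum_j b_{ij}\bm v_j$ so that every coefficient picks up a $+1$), and the forward direction differs only cosmetically — you absorb negative coefficients by substituting the conic representation of $-\bm v_i$, whereas the paper adds a large multiple $\gamma\sum_i a_i\bm v_i=\bm 0$ to shift all coefficients nonnegative. Both are valid and equally elementary.
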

   \begin{proof}
   First, we show that ${\rm cone}\{\bm v_i\}_{i\in I}={\rm span}\{\bm v_i\}_{i\in I}$ holds if $\{\bm v_i\}_{i \in I}$ is positively dependent. 
   
   As ${\rm cone}\{\bm v_i\}_{i\in I}\subseteq{\rm span}\{\bm v_i\}_{i\in I}$ is trivial, we show that ${\rm cone}\{\bm v_i\}_{i\in I}\supseteq{\rm span}\{\bm v_i\}_{i\in I}$.
   
   For any vector $\bm v\in {\rm span}\{\bm v_i\}_{i\in I}$, we have $b_i\in \mathbb R, (i\in I)$ such that 
   \begin{equation}
       \bm v = \sum_{i\in I} b_i\bm v_i,\ b_i\in \mathbb R \label{eq:pos_comb1}
   \end{equation}
   holds. From the positive dependence of the vectors, we can construct the zero vector using the positive linear combination as $\bm 0=\sum_{i\in I} a_i \bm v_i,$ ($\bm a \succ \bm 0$). By adding $\gamma\bm 0$ to Eq.~\eqref{eq:pos_comb1}, we obtain the following 
   \begin{equation}
       \bm v+\gamma\bm 0 = \bm v=\sum_{i\in I} (b_i + \gamma a_i)\bm v_i \label{eq:pos_comb2}.
   \end{equation}
Because $\bm a\succ \bm 0$ holds, by setting $\gamma$ sufficiently large, we have $\bm b + \gamma \bm a\succ \bm 0$. Thus, $\bm v\in {\rm cone}\{\bm v_i\}_{i\in I}$ holds.

   
   Next, we show that $\{\bm v_i\}_{i \in I}$ is positively dependent if ${\rm cone}\{\bm v_i\}_{i\in I}={\rm span}\{\bm v_i\}_{i\in I}$ holds.
   
   Since for any $\rho\in I$, $-\bm v_\rho \in {\rm span}\{\bm v_i\}_{i\in I}={\rm cone}\{\bm v_i\}_{i\in I}$ holds, we have that
   \begin{equation}
       -\bm v_{\rho} = \sum_{i\in I} a^{(\rho)}_i\bm v_i, \ (a^{(\rho)}_i\geq 0),
   \end{equation}
   and thus, 
   \begin{equation}
       0 = \sum_{i\in I} \bar{a}^{(\rho)}_i\bm v_i, \label{eq:cone_comb}
   \end{equation}
where $\bar{a}^{(\rho)}_i=a^{(\rho)}_i+1$ for $\rho=i$ and otherwise $\bar{a}^{(\rho)}_i=a^{(\rho)}_i$. The sum of Eq.~\eqref{eq:cone_comb} over every $\rho\in I$ leads to 
   \begin{eqnarray*}
       0 &=& \sum_{\rho \in I}\sum_{i\in I} \bar{a}^{(\rho)}_i\bm v_i\\
   &=&\sum_{i\in I} \Bigl(1+\sum_{\rho \in I}a^{(\rho)}_i\Bigr)\bm v_i.
   \end{eqnarray*}
   As $1+\sum_{\rho \in I}a^{(\rho)}_i$ are positive for any $\rho\in I$, the vectors $\{\bm v_i\}_{i\in I}$ are positively dependent.
   \end{proof}

\section{The Proof of Prop.~\ref{prop:stabilize} and Prop.~\ref{prop:fixed_point}}
\setcounter{prop}{0}

\begin{prop}[Stabilizability of states in free cells]
    For any state $\bm x^*\in C(\bm \sigma)$, there exists a control that locally asymptotically stabilizes $\bm x^*$ within $C(\bm \sigma)\cap {\cal W}(\xp)$ if and only if $C(\bm\sigma)$ is a free cell.
\end{prop}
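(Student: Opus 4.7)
I will prove the two implications separately, both leveraging Lemma \ref{lem:pos_dep}, which equates ``$C(\bm\sigma)$ is a free cell'' with ``the set $\{\sigma_i\bm S_i\}_{i\in I}$ is positively dependent.'' Throughout, it is convenient to set $w_i(\bm x)\coloneqq u_i(\bm x)\,f_i(\bm x)\,|p_i(\bm x)|\geq 0$ and $\Sigma=\mathrm{diag}(\bm\sigma)$, so that inside $C(\bm\sigma)$ the closed-loop dynamics read $\dot{\bm x}=\mathbb{S}\Sigma\bm w(\bm x)$.

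For sufficiency (free $\Rightarrow$ stabilizable), I will use Lemma \ref{lem:pos_dep} to pick $\bm a\succ\bm 0$ with $\mathbb{S}\Sigma\bm a=\bm 0$ and propose the linear feedback
\begin{equation*}
\bm w(\bm x)=\bm a-\mu\,(\mathbb{S}\Sigma)^{+}(\bm x-\bm x^*),
\end{equation*}
where $(\mathbb{S}\Sigma)^{+}$ denotes the Moore--Penrose pseudoinverse and $\mu>0$. A direct computation, using $\mathbb{S}\Sigma\bm a=\bm 0$ together with the identity that $\mathbb{S}\Sigma(\mathbb{S}\Sigma)^{+}$ is the orthogonal projector $P$ onto $\mathrm{image}(\mathbb{S})$, yields $\dot{\bm x}=-\mu(\bm x-\bm x^*)$ for $\bm x\in\mathcal{W}(\bm x^*)$, a linear contraction. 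Because $\bm a\succ\bm 0$, the perturbative term keeps $\bm w$ strictly positive on a neighborhood of $\bm x^*$, so the corresponding $u_i=w_i/(f_i|p_i|)$ is admissible and non-negative; shrinking the neighborhood further keeps the exponentially decaying trajectory inside $C(\bm\sigma)$.

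For necessity (stabilizable $\Rightarrow$ free), I will argue by contradiction. If $C(\bm\sigma)$ were not free, Lemma \ref{lem:pos_dep} would produce some $\bm y\in\mathrm{image}(\mathbb{S})\setminus\mathrm{cone}\{\sigma_i\bm S_i\}_{i\in I}$, and the hyperplane-separation theorem applied to $\bm y$ and the closed convex cone $\mathrm{cone}\{\sigma_i\bm S_i\}_{i\in I}$ would yield $\bm n\in\mathbb{R}^N$ with $\bm n\cdot\bm y>0$ and $\bm n\cdot(\sigma_i\bm S_i)\leq 0$ for every $i$. The linear functional $V(\bm x)=\bm n\cdot(\bm x-\bm x^*)$ then satisfies
\begin{equation*}
\dot V=\sum_{i}w_i(\bm x)\,\bigl(\bm n\cdot\sigma_i\bm S_i\bigr)\leq 0
\end{equation*}
along any trajectory confined to $C(\bm\sigma)$, irrespective of which feedback $\bm u(\bm x)$ one chooses. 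Starting from $\bm x(0)=\bm x^*-\epsilon\bm y\in C(\bm\sigma)\cap\mathcal{W}(\bm x^*)$ for small $\epsilon>0$ gives $V(\bm x(0))=-\epsilon\,\bm n\cdot\bm y<0$; the non-increase of $V$ then prevents it from rising back to $V(\bm x^*)=0$, so the trajectory cannot converge to $\bm x^*$ inside $C(\bm\sigma)$, contradicting stabilizability.

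The main subtlety I anticipate is on the sufficiency side: simultaneously enforcing $\bm w\succeq\bm 0$ and keeping the trajectory inside $C(\bm\sigma)$ requires a careful calibration of the neighborhood radius. The first condition amounts to a local algebraic bound on $\|(\mathbb{S}\Sigma)^{+}(\bm x-\bm x^*)\|$ driven by $\min_i a_i$, while the second is a geometric bound involving the distance from $\bm x^*$ to the surrounding balance manifolds; both can be satisfied by further shrinking the neighborhood for any fixed $\mu>0$, and the feedback may then be extended to the whole positive orthant in any non-negative continuous manner without affecting local stability. The necessity direction is essentially obstruction-free once Lemma \ref{lem:pos_dep} and the separating-hyperplane theorem are in hand.
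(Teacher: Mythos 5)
Your proof is correct, and although it turns on the same pivot as the paper's --- Lemma~\ref{lem:pos_dep}, equating freeness with positive dependence of $\{\sigma_i\bm S_i\}_{i\in I}$ --- both halves are executed differently. For sufficiency, the paper builds a quadratic potential $\Phi(\x,\xp)=\tfrac12\sum_r\bigl[\bm S_r\cdot(\x-\xp)-w_r^{(0)}\bigr]^2$ with feedback $w_r=-\bigl(\bm S_r\cdot(\x-\xp)-w_r^{(0)}\bigr)$ (i.e.\ the transpose $\mathbb S^\top$ where you use the pseudoinverse), obtaining gradient-flow dynamics whose unique minimum on ${\cal W}(\xp)$ is $\xp$, and then restores non-negativity through a separate lemma (Lem.~\ref{lem:free_control}) that conically re-expands each $-\sigma_\rho\bm S_\rho$; your pseudoinverse feedback instead yields the exact contraction $-\mu(\x-\xp)$ on the stoichiometric compatibility class and handles non-negativity in one stroke by offsetting with the positive-dependence vector $\bm a\succ\bm 0$, which is more self-contained and gives an explicit rate, at the cost that the feedback is admissible only on a neighborhood small enough to keep $\bm w\succeq\bm 0$ (the paper's construction is admissible throughout the free cell, with basin the largest sublevel set of $\Phi$ inside it --- both suffice for \emph{local} stabilizability). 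For necessity, the paper argues the contrapositive: if every perturbation $\sum_i a_i\sigma_i\bm S_i$ with $\bm a\succ\bm 0$ can be counteracted by a conical control, then $\{\sigma_i\bm S_i\}_{i\in I}$ is positively dependent and the cell is free. Your separating-hyperplane construction of a linear functional $V=\bm n\cdot(\x-\xp)$ that is non-increasing under \emph{every} admissible control inside the cell makes rigorous what the paper leaves implicit --- that the reachable set from any point of the cell is confined to its stoichiometric cone --- and is arguably the cleaner obstruction. Two cosmetic caveats: the projector identity $\mathbb S\Sigma(\mathbb S\Sigma)^{+}=P_{\Im\mathbb S}$ requires all $\sigma_i\neq 0$ (the binary-$\bm\sigma$ setting, which the paper also adopts without loss of generality), and you should note that a finitely generated cone is closed before invoking the separation theorem.
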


To show this, we use the following lemma.

\begin{lem}\label{lem:free_control}
If the system is in a free cell, for any control without non-negativity constraint \(\bm w(t):\mathbb R\to \mathbb R^R\), there exists a constrained control \(\bm u(t):\mathbb R\to \mathbb R_{\ge 0}^R\) that realizes the same control as $\bm w(t)$.
\end{lem}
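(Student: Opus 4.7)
The plan is to reduce the statement to a pointwise algebraic fact at each state in the free cell and then transfer the construction to the time-dependent control. Fix a point $\bm x \in C(\bm \sigma)$ and observe that, since $p_r(\bm x) = \sigma_r |p_r(\bm x)|$ on the cell, any control $\bm w$ induces the tangent vector
\[
\mathbb S(\bm w \odot \bm v(\bm x)) \;=\; \sum_{r \in I} w_r\, f_r(\bm x)\, |p_r(\bm x)|\, \sigma_r \bm S_r.
\]
Because $f_r(\bm x) > 0$ and $|p_r(\bm x)| > 0$ in the interior of $C(\bm \sigma)$, the reparametrization $w_r \mapsto w_r f_r(\bm x) |p_r(\bm x)|$ is a sign-preserving rescaling. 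Hence the tangent vectors attainable by arbitrary $\bm w \in \mathbb R^R$ form $\mathrm{span}\{\sigma_i \bm S_i\}_{i \in I}$, and those attainable by $\bm u \succeq \bm 0$ form $\mathrm{cone}\{\sigma_i \bm S_i\}_{i \in I}$.

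By Definition~\ref{defi:free_cell}, these two sets coincide in a free cell, so at the level of attainable vector fields the non-negativity constraint is vacuous at $\bm x$. To get an explicit $\bm u$, I would invoke Lemma~\ref{lem:pos_dep}: freeness is equivalent to the existence of $\bm a \succ \bm 0$ with $\sum_{i \in I} a_i \sigma_i \bm S_i = \bm 0$. For each index $r$ with $w_r < 0$, the identity $\sigma_r \bm S_r = -\sum_{i \ne r} (a_i / a_r) \sigma_i \bm S_i$ allows me to rewrite the offending contribution as
\[
w_r f_r(\bm x) |p_r(\bm x)|\, \sigma_r \bm S_r \;=\; \sum_{i \ne r} \bigl(-w_r f_r(\bm x) |p_r(\bm x)|\, a_i/a_r\bigr)\, \sigma_i \bm S_i,
\]
whose coefficients are non-negative since $-w_r > 0$ and $a_i, a_r, f_r, |p_r| > 0$. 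Collecting all such replacements yields non-negative coefficients $\tilde c_r \ge 0$ reproducing the same tangent vector, and setting $u_r(t) := \tilde c_r / \bigl(f_r(\bm x(t))\, |p_r(\bm x(t))|\bigr)$ defines the required $\bm u(t) \succeq \bm 0$.

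I do not anticipate a substantial obstacle here, since the construction is purely algebraic at each $\bm x$ and the positive-dependence coefficients $\bm a$ depend only on $\bm \sigma$ and $\mathbb S$, not on the state. The mildest subtlety is ensuring that $\bm u(t)$ inherits the regularity of $\bm w(t)$; this is automatic, because the replacement is a composition of a fixed linear map with continuous, strictly positive divisions by $f_r(\bm x(t)) |p_r(\bm x(t))|$ on $C(\bm \sigma)$. The entire argument is confined to the interior of the free cell and therefore requires the trajectory driven by $\bm w$ to remain in $C(\bm \sigma)$, an assumption already built into the hypothesis that ``the system is in a free cell.''
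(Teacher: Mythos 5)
Your proposal is correct and follows essentially the same route as the paper: both split the control into its positive and negative parts and replace each negative contribution $w_r\sigma_r\bm S_r$ by a conical combination of the other directed stoichiometric vectors supplied by the free-cell (positive-dependence) property, then divide by the strictly positive factors $f_r(\bm x)\,|p_r(\bm x)|$ to recover a non-negative $\bm u$. The only cosmetic difference is that you derive the expansion coefficients uniformly from the single positive-dependence vector of Lemma~\ref{lem:pos_dep}, whereas the paper selects an arbitrary conical expansion of $-\sigma_\rho\bm S_\rho$ for each index $\rho$.
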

\begin{proof}
Let us consider the system
\[
\dv{\bm x}{t}=\mathbb S\bm w(t)\,\odot \,\bm p(\bm x),
\]
where $\bm w(t):\mathbb R\to \mathbb R^R$ is a non-constrained control. Because the state is in the free cell, for any stoichiometric vector \(\bm S_i\), there exists a positive linear combination of the form
\begin{equation}
    -\sigma_\rho\bm S_\rho = \sum_{i\neq \rho} a_{i}^{(\rho)}\sigma_i\bm S_i, \quad (a_i^{(\rho)}> 0),
\end{equation}
Since the combinations are generally not unique, we choose one arbitrarily and adopt its expansion coefficients; without loss of generality, we may set \(a_\rho^{(\rho)}=0\).

Because $\bm x\in C(\bm \sigma)$, we have $|\bm p(\bm x)|\succ \bm 0$. Thus, we absorb the absolute value of $p_i$ into \(w_i\) and consider
\begin{equation}
\dot{\x} = \mathbb S\,\bm \sigma\odot\,\bm w(t) \label{eq:free_control}
\end{equation}
(For the equation that uses \(\bm u\), we similarly absorb the corresponding factor). Using the step function
\[
\Theta(w)=
\begin{cases}
1 & (w>0)\\[1mm]
0 & (w<0),
\end{cases}
\]
we can rewrite the right-hand side of Eq.~\eqref{eq:free_control} as follows:
\begin{eqnarray}
    \sum_{i=1}^R w_i\,\sigma_i\bm S_i &=& \sum_{i=1}^R \Theta(w_i) \,w_i\,\sigma_i\,\bm S_i + \sum_{i=1}^R \Theta(-w_i)\,w_i\,\sigma_i\,\bm S_i,\\[1mm]
    &=& \sum_{i=1}^R \Theta(w_i) \,w_i\,\sigma_i\,\bm S_i + \sum_{i=1}^R \Theta(-w_i)\,|w_i|\,(-\sigma_i\,\bm S_i),\\[1mm]
    &=& \sum_{i=1}^R \Theta(w_i) \,w_i\,\sigma_i\,\bm S_i + \sum_{i=1}^R \Theta(-w_i)\,|w_i| \Bigl(\sum_{j=1}^R a^{(i)}_{j}\sigma_j\,\bm S_j\Bigr),\\[1mm]
    &=& \sum_{i=1}^R \Bigl[\Theta(w_i)w_i+\sum_{j=1}^R \Theta(-w_j)|w_j|\,a^{(j)}_{i}\Bigr] \sigma_i\,\bm S_i.
\end{eqnarray}
Since the expression inside the brackets is non-negative, we can choose this to be our \(\bm u\). The mapping from \(\bm w\) to \(\bm u\) is smooth for $w_i$ except $w_i=0$. 
\end{proof}

We now prove Proposition~\ref{prop:stabilize}.

\begin{proof}
Consider the differential equation inside the free cell 
\begin{equation}
\dv{\x}{t} = \mathbb S\,\bm \sigma\odot\,\bm w(t), \label{eq:w}
\end{equation}
where the absolute value of $\bm p(\bm x(t))$ is lumped into $\bm w(t)$ as $\bm x\in C(\bm \sigma)$ and $|\bm p(\bm x)|\succ \bm 0$ holds. Since we are considering the dynamics inside the free cell, \(\bm w\) may have either positive or negative values (Lem.~\ref{lem:free_control}). Now, we wish to design dynamics that stabilize a state \(\xp \in C(\bm \sigma)\). Without a loss of generality, we set $\sigma_i=1$ for $\sigma_i\neq 0$ by utilizing the arbitrariness of the reaction direction.

To this end, we consider a function given by
\[
\Phi(\x,\xp) = \frac{1}{2}\sum_{r=1}^R \sigma_r \Bigl[\sum_{i=1}^N S_{ir}(x_i-x^*_i) - w_r^{(0)}\Bigr]^2,
\]
where \(w_r^{(0)}\) is selected such that \(\mathbb S\,\bm \sigma\,\odot\,\bm w^{(0)} = \bm 0\). As shown below, $\bm w^{(0)}$ corresponds to the steady-state reaction flux of the system. If the zero vector is chosen as $\bm w^{(0)}$, all reactions in the system are halted at $\bm x^*$, whereas the system has a non-zero steady flux at $\bm x^*$ if we have $\bm w^{(0)}\neq \bm 0$.

For this $\Phi$, we have
\begin{eqnarray}
    \frac{\partial \Phi}{\partial x_i} &=& \sum_{j,r}  \sigma_r S_{ir}S_{jr}(x_j-x^*_j) - \sum_{r}  \sigma_r S_{ir}w_r^{(0)}\\
    &=& \sum_{j,r} \sigma_r  S_{ir}S_{jr}(x_j-x^*_j).\label{eq:dPhi}
\end{eqnarray}
The rank of matrix \(\mathbb S\,\mathbb S^\top\) is reduced by the dimension of \({\rm coker}\,\mathbb S\). Therefore, while the states that minimize \(\Phi\) are arbitrary in the cokernel directions, once the SCC is specified, the intersection of the SCC with \(\nabla\Phi = \bm 0\) is unique. In particular, if we choose the SCC to which $\xp$ belongs, $\xp$ is the unique state that minimizes \(\Phi\) in SCC. 

On the other hand, we can rewrite Eq.~\eqref{eq:dPhi} as
\begin{eqnarray}
    \frac{\partial \Phi}{\partial x_i} &=& \sum_{r} \sigma_rS_{ir}\Bigl[\sum_{j}S_{jr}(x_j-x^*_j) - w_r^{(0)}\Bigr]
\end{eqnarray}
Thus, by designing the feedback control as 
\begin{equation}
w_r(t)=w_r(\bm x(t))= -\bigl(\bm S_r\cdot(\bm x-\xp) - w_r^{(0)}\bigr), \label{eq:w_feedback}
\end{equation} we obtain Eq.~\eqref{eq:w}. Note that the feedback $\bm w(t)$ given by Eq.\eqref{eq:w_feedback} can be negative-valued. However, since we are considering only the trajectories in the free cell, the control $\bm w(t)$ is always realized by a non-negative control $\bm u(t)$ (Lem.\ref{lem:free_control}). 

Hence, \(\Phi\) is a potential function satisfying
\begin{equation}
\dv{\x}{t}=-\nabla\Phi(\x,\xp). \label{eq:gradient}
\end{equation}
Thus, the dynamics are potential dynamics and relax to the potential minima $\xp$. Note that this feedback control is feasible inside $C(\bm \sigma)$, and the gradient dynamics (Eq.~\eqref{eq:gradient}) does not exclude the possibility that the trajectory in Eq.~\eqref{eq:gradient} reaches the boundary of the free cell $C(\bm \sigma)$. Thus, the basin of attraction for $\xp$ is not necessarily as large as $C(\bm \sigma)\cap {\cal W}(\xp)$. However, the basin of attraction has a non-zero volume because the distance between $\xp$ and the balance manifolds of the reaction with non-zero $\sigma_i$ is larger than zero as $\xp\in C(\bm \sigma)$. 

The impossibility of stabilizing the system at a state in a non-free cell is straightforward. For this purpose, we define the stoichiometric cone with strictly positive coefficients as ${\cal V}^+_{\bm \sigma}(\bm x)\coloneqq\{\bm x+\sum_{i\in I}a_i\sigma_i\bm S_i\mid a_i> 0\}$. Suppose that $C(\bm \sigma)$ is a non-free cell. We take an arbitrary state $\xp\in C(\bm \sigma)$, and consider the $\epsilon$-neighborhood of $\xp$ constrained on the stoichiometric compatibility class, $$N_\epsilon(\xp):=\{\x\in{\cal W}(\xp)\mid \|\bm x-\xp\|<\epsilon\}$$ so that $N_\epsilon(\xp)\subset C(\bm \sigma)$. Since the cell is not free, there exists $\bm x'\in N_\epsilon(\xp)$ such that $\xp \notin {\cal V}_{\bm \sigma}(\bm x')$. 

Concretely, for an arbitrary state $\bm x'\in {\cal V}^+_{\bm \sigma}(\xp)\cap C(\bm \sigma)$, $\xp \notin {\cal V}_{\bm \sigma}(\bm x')$ holds, i.e., if the state changes by occasional activations of the all reactions, the system is no longer able to return to the original state. To show this, we prove if any perturbation within a given cell can be counteracted by a control, the cell should be the free cell. The contraposition corresponds to the statement that we want to show.

Suppose that the system is originally at the state $\xp$ and transits to $\bm x'=\xp+\sum_{i\in I}a_i\sigma_i \bm S_i$ by a perturbation. Since we suppose that any perturbation is counteracted by a control, there is a control $\sum_{i\in I}b_i \sigma_i\bm S_i$ ($\bm b\succeq \bm 0$) which drag the system back to the original state $\xp$. Then, the following holds:
$$\bm 0=\sum_{i\in I}a_i \sigma_i\bm S_i+\sum_{i\in I}b_i \sigma_i\bm S_i=\sum_{i\in I}(a_i+b_i) \sigma_i\bm S_i.$$
Here, we consider the perturbation with $\bm a\succ \bm 0$. $\bm a\succ \bm 0$ and $\bm b \succeq \bm 0$ lead to $\bm a+\bm b\succ \bm 0$, and thus, the vectors $\{\sigma_i\bm S_i\}_{i\in I}$ are positively dependent. From Lem.~\ref{lem:pos_dep} and the definition of the free cell (Def.~\ref{defi:free_cell}), cell $C(\bm \sigma)$ becomes the free cell. 

It is noteworthy that the only way to realize $\dot{\bm x}=0$ in non-free cells is to set the control $u_r$ to zero for $r$ with $p_r(\bm x)\neq 0$ because the stoichiometric vectors are not positively dependent. Thus, the steady-state is not a dynamic equilibrium state in non-free cells, but rather a state in which the reactions are forced to stop by setting the control parameters to zero. This marginality is related to the system's lack of resilience to perturbations.

\end{proof}

Next, we prove the proposition.~\ref{prop:fixed_point}.

\begin{prop}[Steady-states and freeness]\label{prop:fixed_point}
    \tcr{
    A cell $C(\bm \sigma)$ is a free cell if and only if it admits a steady-state $\bm x^*\in C(\bm \sigma)$ under a constant control $\bm u$ supported on the active reactions (i.e., ${}^\exists \bm u \succeq \bm 0$ such that $\mathbb S (\bm u \odot \bm v(\bm x^*)) = \bm 0$ with $u_i > 0$ for all $i$ where $\sigma_i \neq 0$).}
\end{prop}
\begin{proof}
Let $J$ be the index set with non-zero $\sigma$, that is, $J\coloneq\{j\in I\mid \sigma_j\neq 0\}$. The steady-state condition is given by
$$\bm 0=\sum_{j\in J}\sigma_ju_jv_j(\xp)\bm S_j=\sum_{j\in J}\alpha_j\sigma_j\bm S_j,$$
where $\alpha_j=u_jv_j(\xp)$. Since $u_j>0$ and $v_j(\xp)>0$ hold, $\alpha_j$ is positive for $j\in J$, and thus, the vectors $\{\sigma_j\bm S_j\}_{j\in J}$ are positively dependent. From Lem.~\ref{lem:pos_dep}, the cell $C(\bm \sigma)$ is a free cell.

\tcr{Conversely, suppose that $C(\bm \sigma)$ is a free cell.
Let $J\coloneq\{j\in I\mid \sigma_j\neq 0\}$.
According to Lem.~\ref{lem:pos_dep}, the set of directed stoichiometric vectors $\{\sigma_j\bm S_j\}_{j\in J}$ is positively dependent. Therefore, there exist positive coefficients $\alpha_j > 0$ for all $j\in J$ such that
\begin{equation}
    \sum_{j\in J} \alpha_j \sigma_j \bm S_j = \bm 0.
\end{equation}
We choose an arbitrary state $\xp \in C(\bm \sigma)$. Since $\xp$ is in the cell $C(\bm \sigma)$, the reaction rate satisfies $\sgn(v_j(\xp)) = \sigma_j$, and thus $v_j(\xp) = \sigma_j |v_j(\xp)|$ with $|v_j(\xp)| > 0$ for $j \in J$.
We define the control input $u_j$ for $j \in J$ as
\begin{equation}
    u_j \coloneq \frac{\alpha_j}{|v_j(\xp)|}.
\end{equation}
Since $\alpha_j > 0$ and $|v_j(\xp)| > 0$, we have $u_j > 0$.
Substituting this into the reaction equation yields
\begin{equation}
    \sum_{j\in J} u_j v_j(\xp) \bm S_j = \sum_{j\in J} \frac{\alpha_j}{|v_j(\xp)|} (\sigma_j |v_j(\xp)|) \bm S_j = \sum_{j\in J} \alpha_j \sigma_j \bm S_j = \bm 0.
\end{equation}
For $j \notin J$, we have $v_j(\xp)=0$, so the term $u_j v_j(\xp) \bm S_j$ vanishes regardless of the value of $u_j$. Thus, $\xp$ is a steady-state (either stable or unstable) under the constructed positive control $\bm u$.}
\end{proof}


\section{Proof of Theorem.~\ref{thm:main}}\label{sec:proof_main}
In the following section, we provide detailed proofs of the main theorem. 
\setcounter{thm}{0}
\begin{thm}[Main theorem]
If a free cell $C(\bm \sigma)$ has a full SCK subset within its boundaries, then the controllable set of any state in the free cell $\bm x\in C(\bm \sigma)$ coincides with the entire stoichiometric compatibility class, $\frakC(\bm x)={\cal W}(\bm x)$.
\end{thm}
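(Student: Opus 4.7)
My plan follows the three-stage construction sketched in the main text. Fix a target $\bm x^{\mathrm{tgt}}\in C(\bm\sigma)$, fix an arbitrary source $\bm x^{\mathrm{src}}\in\mathcal{W}(\bm x^{\mathrm{tgt}})$, and invoke the full-SCK-subset hypothesis to select an index set $J$ of SCK reactions with $|J|=\mathrm{rank}\,\mathbb{S}$ whose balance manifolds bound $C(\bm\sigma)$. Let $\xeq$ denote the unique point of $\mathcal{W}(\bm x^{\mathrm{tgt}})\cap\bigcap_{j\in J}\mathcal{M}_j$, whose existence and uniqueness follow from classical detailed-balance theory for SCK networks restricted to a given SCC. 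The plan is to build a piecewise control realising $\bm x^{\mathrm{src}}\rightsquigarrow\xeq\rightsquigarrow\xp\rightsquigarrow\bm x^{\mathrm{tgt}}$, where $\xp$ is an auxiliary interior point of $C(\bm\sigma)$ bridging the boundary vertex $\xeq$ and the target.

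For the first stage I apply the constant control $u_j=1$ for $j\in J$ and $u_j=0$ otherwise. Since $\{\bm S_j\}_{j\in J}$ has the same linear span as $\{\bm S_i\}_{i\in I}$ (both of dimension $\mathrm{rank}\,\mathbb{S}$), the SCCs of the restricted subnetwork agree with those of the full network, so the dynamics preserves $\mathcal{W}(\bm x^{\mathrm{tgt}})$. The restricted subnetwork is detailed-balanced with SCK kinetics, and the pseudo-free-energy $V(\bm x)=\sum_i x_i\bigl(\ln(x_i/x^{\mathrm{eq}}_i)-1\bigr)+x^{\mathrm{eq}}_i$ is a strict Lyapunov function for it: the strictly positive kinetic prefactors $f_j(\bm x)$ only rescale flux magnitudes along each reaction coordinate and do not alter the sign of $\dot V$. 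Hence every trajectory of the restricted system on $\mathcal{W}(\bm x^{\mathrm{tgt}})$ converges asymptotically to $\xeq$.

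The second stage is, in my view, the main obstacle. I must push the state off the boundary vertex $\xeq$ into the interior of $C(\bm\sigma)$ in a controlled, stable fashion. At $\xeq$ the thermodynamic factors $p_j(\xeq)$ vanish for $j\in J$ but are generically nonzero for $j\notin J$. Consider the perturbed control $u_j=1+\delta_j$ on $J$ and $u_j=\epsilon\alpha_j$ on its complement, with $\bm\alpha\succ\bm 0$ to be chosen. The steady-state equation $\mathbb{S}(\bm u\odot\bm f\odot\bm p)(\bm x)=\bm 0$ restricted to the SCC depends smoothly on $(\epsilon,\bm\delta)$; by the implicit function theorem, together with negative-definiteness of the $J$-subsystem Jacobian at $\xeq$ on the tangent space to the SCC (a by-product of the Lyapunov argument above), a unique smooth branch $\xp(\epsilon,\bm\delta)$ of steady states emanates from $\xeq$. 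Its first-order displacement is a conical combination with weights $\alpha_j$ of fixed tangent vectors determined by the off-$J$ reactions. Because $C(\bm\sigma)$ is free, its interior relative to $\mathcal{W}(\xeq)$ is an open cone that spans the entire tangent space, so Lem.~\ref{lem:pos_dep} lets me pick $\bm\alpha\succ\bm 0$ such that $\xp(\epsilon,\bm\delta)\in C(\bm\sigma)$ for all sufficiently small $\epsilon>0$; continuity of eigenvalues preserves local asymptotic stability, and continuity of the flow places $\xeq$ in the basin of $\xp$. Reconciling positivity of the control with the requirement that the displacement point into the free cell while preserving stability is the delicate part.

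The third stage is routine from results already available. Inside the free cell Lem.~\ref{lem:free_control} renders the control effectively unconstrained, and Prop.~\ref{prop:stabilize} furnishes, for every $\bm y\in C(\bm\sigma)\cap\mathcal{W}(\bm x^{\mathrm{tgt}})$, a feedback stabilising $\bm y$ within an open basin. I cover a smooth path from $\xp$ to $\bm x^{\mathrm{tgt}}$ inside the connected open set $C(\bm\sigma)\cap\mathcal{W}(\bm x^{\mathrm{tgt}})$ by finitely many such basins and concatenate the associated feedback controls, thereby steering the state to $\bm x^{\mathrm{tgt}}$. Composing the three stages yields an asymptotic control from $\bm x^{\mathrm{src}}$ to $\bm x^{\mathrm{tgt}}$, establishing $\frakC(\bm x^{\mathrm{tgt}})=\mathcal{W}(\bm x^{\mathrm{tgt}})$.
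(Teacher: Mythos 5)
Your three-stage skeleton ($\bm x^{\rm src}\rightsquigarrow\xeq\rightsquigarrow\xp\rightsquigarrow\bm x^{\rm tgt}$) is the same as the paper's, and stages 1 and 3 are sound: stage 1 is the standard pseudo-Helmholtz Lyapunov argument that the paper delegates to the cited detailed-balance literature, and stage 3 is a legitimate concretisation of free-cell controllability by covering a path with stabilizing basins. The gap is in stage 2, precisely at the point you flag as ``delicate'' but do not resolve. Your implicit-function-theorem branch has first-order displacement $-\mathcal{J}_J^{+}\,\mathbb{S}_K\bigl(\bm\alpha\odot\bm p_K(\xeq)\bigr)$, so the set of reachable directions is the image under one fixed linear map of the cone generated by the \emph{off-$J$} directed stoichiometric vectors only. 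Your justification that this can be pointed into $C(\bm\sigma)$ --- that the interior of the free cell ``is an open cone that spans the entire tangent space, so Lem.~\ref{lem:pos_dep} lets me pick $\bm\alpha$'' --- does not hold: near the vertex $\xeq$ the cell is one of several polyhedral cones meeting there and does not span the tangent space of the SCC, and Lem.~\ref{lem:pos_dep} concerns positive dependence of the full family $\{\sigma_i\bm S_i\}_{i\in I}$, not the image of the $K$-cone under $-\mathcal{J}_J^{+}$. Nothing in your argument rules out that every admissible $\bm\alpha\succeq\bm 0$ pushes the steady-state branch into a neighbouring cell.

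The paper closes this hole by running the construction in the opposite direction: it first fixes $\xp=\xeq+\epsilon\bm c$ with $\bm c$ chosen so that $\xp$ lies in the open cell (possible since $\xeq$ is an adherent point of $C(\bm\sigma)$), and only then uses freeness --- positive dependence of \emph{all} directed vectors, with every $p_i(\xp)$ carrying the sign $\sigma_i$ --- to manufacture a strictly positive constant control $u_i^*=w_i/|p_i(\xp)|$ making $\xp$ a steady state; the scaling $\bm u_K^*=\mathcal{O}(\epsilon)$ then falls out, and stability follows from a generalized Bauer--Fike perturbation of $\mathcal{J}_J(\bm u_J^*,\xeq)$ plus a separate rank argument keeping the zero eigenvalues at zero. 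If you wish to keep the IFT formulation you must prove that the achievable-displacement cone meets the tangent cone of $C(\bm\sigma)$ at $\xeq$; this is equivalent to the relation $-\bm q\in\mathrm{cone}\{\sigma_k\bm S_k\}_{k\in K}$ with $\bm q=\mathcal{J}_J\bm c+\mathcal{O}(\epsilon)$ that the paper derives, and the natural proof of it goes through the positive-dependence identity on the full index set, i.e.\ essentially the paper's construction. A secondary point: ``negative-definiteness of the $J$-subsystem Jacobian \ldots\ a by-product of the Lyapunov argument'' needs the additional standard fact that for detailed-balanced SCK kinetics the linearisation at $\xeq$ is similar to a symmetric matrix whose kernel is exactly the left null space of $\mathbb{S}_J$; asymptotic stability alone does not give the invertibility on the SCC tangent that your IFT step requires.
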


Let $C(\bm \sigma)$ be a free cell, and suppose that it has a full SCK subset within its boundaries. We denote the index set of the reactions by $I$, and the set difference of $I$ and $J$ by $K\coloneq I\backslash J$. Recall that the number of chemicals and reactions are $N$ and $R$, respectively.

The source and target states are denoted by $\bm x^{\rm src}$ and $\bm x^{\rm tgt}$, respectively. As no controlled dynamics is possible if $\bm x^{\rm src}$ and $\bm x^{\rm tgt}$ are not in the same SCC, we suppose that $\bm x^{\rm src}\in {\cal W}(\bm x^{\rm tgt})$ holds. 

The steps of the controls are divided into three steps; (step 1) control from $\bm x^{\rm src}$ to an equilibrium state $\xeq$, (step 2) control from $\xeq$ to a state inside the free cell $\xp \in C(\bm \sigma)$, and (step 3) control from $\xp$ to $\bm x^{\rm tgt}$. The feasibility of step 3 results from the unrestricted controllability of free cells (see the main text for details). Thus, we prove only step 1 and 2. Note that our controls are asymptotic control, and thus, the system does not reach exactly the target state of each control step in finite time. 

\subsection{Step 1: control from $\bm x^{\rm src}$ to $\xeq$}

First, we show that the unique equilibrium state $\xeq$ exists as the intersection of the balance manifolds $\{{\cal M}_j\}_{j\in J}$ and stoichiometric compatibility class ${\cal W}(\bm x^{\rm tgt})$. For that, it is useful to introduce the reaction order matrix $\mathbb T$, given by
\begin{equation}
\mathbb T_{i,r}=n_{i,r}^--n_{i,r}^+, \label{eq:order_matrix}
\end{equation}
where $n_{i,r}^\pm$ are the reaction orders of the forward ($+$) and backward ($-$) reactions of the chemical $i$ in reaction $r$, respectively. The $r$th row vector of the reaction order matrix $\mathbb T$ is denoted by $\bm T_r$ and is called the reaction order vector. Note that if the $r$th reaction is implemented by SCK, $\bm S_r=\bm T_r$ follows.

The definitions of the cells and balance manifolds are rewritten using the reaction order matrix $\mathbb T$ as follows:
\begin{eqnarray}
    {\cal M}_r&=&\{\bm x\in \mathbb R_{> 0}^N\ |\ \bm T^\top_{r}\ln \bm x+\ln k_r=0\},\\
    C(\bm \sigma)&=&\{\bm x\in \mathbb R_{> 0}^N\ |\ {\rm sgn}(\mathbb T^\top\ln \bm x+\ln \bm k)=\bm \sigma\}.
\end{eqnarray}

We denote the reaction order matrix and reversibility parameter vector consisting of the reactions in $J$ by $\mathbb T_J$ and $\bm k_J$, respectively. The intersection of the balance manifolds in $J$ is given by
\begin{eqnarray}
\bigcap_{j\in J}{\cal M}_j=\{\bm x\in \mathbb R_{>0}^N \mid \mathbb T^\top_J\ln \bm x=-\ln \bm k_J\}, 
\end{eqnarray}
and denoted as ${\cal M}_J$. At any state in $\cal M_J$, all the reaction fluxes of reactions in $J$ are zero, and thus, the system is in a detailed-balanced state of the system only with the reactions in $J$. For $\cal M_J$ to be non-empty, the equation 
\begin{equation}
    \mathbb T^\top_J\ln \bm x=-\ln \bm k_J \label{eq:balance_equation_T}
\end{equation}
should have a solution. A sufficient condition for Eq.~\eqref{eq:balance_equation_T} to have a solution is that the rank of the reaction order matrix $\mathbb T_J$ equals the number of reactions in $J$, that is, $\rank \mathbb T_J=|J|$. For that, we have a following useful proposition:

\begin{prop}\label{prop:facet}
Suppose \(C(\bm \sigma)\) is non-empty. There is an index set \(J \subset I,\, |J|=\rank \mathbb T\) such that:
\begin{enumerate}
  \item The matrix \(\mathbb{T}_J\) satisfies $\rank \mathbb{T}_J \;=\;\rank \mathbb{T}.$
  \item The balance manifolds \(\{\mathcal{M}_j\}_{j\in J}\) are boundary of the cell $C(\bm \sigma)$.
\end{enumerate}
\end{prop}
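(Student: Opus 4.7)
My plan is to transport the problem to log-coordinates, where the cell and each balance manifold acquire a clean polyhedral structure, and then extract $J$ from a vertex of the resulting polyhedron after quotienting out its lineality space. Setting $\bm y = \ln \bm x$, the cell becomes the relatively open convex polyhedron
\begin{equation*}
\tilde C(\bm \sigma) = \{\bm y \in \mathbb R^N : \mathrm{sgn}(\mathbb T^\top \bm y + \ln \bm k) = \bm \sigma\},
\end{equation*}
and each balance manifold $\mathcal M_r$ becomes the affine hyperplane $\tilde{\mathcal M}_r = \{\bm y : \bm T_r^\top \bm y + \ln k_r = 0\}$. The closure $\overline{\tilde C(\bm \sigma)}$ is a non-empty convex polyhedron whose lineality space is exactly $L := \ker \mathbb T^\top$, because any $\bm v \in L$ satisfies $\bm T_r^\top \bm v = 0$ for every $r$ and hence leaves every defining (in)equality invariant.

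Next I would pass to the quotient $\pi : \mathbb R^N \to \mathbb R^N / L$, a space of dimension $\rank \mathbb T$. The image $\pi(\overline{\tilde C(\bm \sigma)})$ is a non-empty polyhedron with trivial lineality space, so by the standard fact that every pointed non-empty polyhedron admits at least one extreme point, it has a vertex $\hat{\bm y}^*$. Lifting $\hat{\bm y}^*$ to any preimage $\bm y^* \in \overline{\tilde C(\bm \sigma)}$, let $A := \{r \in I : \bm T_r^\top \bm y^* + \ln k_r = 0\}$ be the set of active constraints. The vertex property forces $\mathrm{span}\{\bm T_r : r \in A\} = \mathrm{Im}\,\mathbb T$: otherwise there would exist a nonzero direction $\bm w \in \mathrm{Im}\,\mathbb T$ annihilated by every active normal, and a small translation of $\bm y^*$ along $\bm w$ would preserve all active equalities and, by continuity, all inactive strict inequalities, contradicting isolation of $\hat{\bm y}^*$ in the quotient polyhedron. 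From this spanning family I can pick $J \subseteq A$ with $|J| = \rank \mathbb T$ and $\{\bm T_j\}_{j\in J}$ linearly independent, which directly yields $\rank \mathbb T_J = \rank \mathbb T$. Each such $\tilde{\mathcal M}_j$ passes through $\bm y^*$ and supports $\overline{\tilde C(\bm \sigma)}$ (the defining inequality holds with a fixed sign throughout the closure), so pulling back via the exponential map shows that $\mathcal M_j$ lies in the boundary of $C(\bm \sigma)$.

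The main technical obstacle I anticipate is handling degenerate vertices, where many more than $\rank \mathbb T$ constraints may be simultaneously active; cleanly extracting a linearly independent subfamily of the exact required size relies on the spanning argument sketched above. A secondary subtlety is the uniform treatment of indices with $\sigma_r = 0$: for such $r$ the equality $\bm T_r^\top \bm y + \ln k_r = 0$ is active on the whole cell, so those $r$ automatically lie in $A$ and can be included in $J$ first, contributing $\rank \mathbb T_Z$ linearly independent rows (with $Z := \{r : \sigma_r = 0\}$) before selecting the remaining $\rank \mathbb T - \rank \mathbb T_Z$ from the strict-inequality constraints that genuinely bound the interior of the cell.
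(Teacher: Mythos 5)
Your route is genuinely different from the paper's: the paper first removes redundant inequalities from $\mathbb T^\top\bm y\le\bm b$ via Farkas' lemma (noting that deleting a redundant row cannot drop the rank, since its normal is a nonnegative combination of the remaining normals) and then invokes the one-to-one correspondence between irredundant inequalities and facets to read off $J$; you instead quotient by the lineality space $\ker\mathbb T^\top$, take a vertex of the pointed quotient polyhedron, and extract $J$ from the constraints active at the corresponding minimal face. Your argument correctly delivers claim (1), $\rank\mathbb T_J=\rank\mathbb T$ with $|J|=\rank\mathbb T$, and it has a genuine side benefit the paper's proof does not make explicit: all hyperplanes indexed by your $J$ pass through a common point $\bm y^*$ in the closure of the cell, which is exactly the property needed downstream when $\xeq$ is required to be an adherent point of the free cell.

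However, there is a gap in claim (2). A constraint that is \emph{active} at a minimal face need not be \emph{facet-defining}: at a degenerate vertex a redundant inequality can be tight while its hyperplane meets the closed cell only in that single low-dimensional face. Concretely, for the cell $\{y_1\le 0,\ y_2\le 0,\ y_1+y_2\le 0\}$ in $\mathbb R^2$, all three constraints are active at the origin and your selection rule permits $J=\{1,3\}$; but $\mathcal M_3=\{y_1+y_2=0\}$ touches the cell only at the origin and is not a boundary of the cell in the sense the paper intends (and uses in the definition of a full SCK subset), namely that $\mathcal M_j$ carries a facet of $\overline{C(\bm\sigma)}$. Your "spanning family" argument guarantees linear independence of the chosen normals but says nothing about which active constraints are irredundant. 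The fix is short: a minimal face of a polyhedron is the intersection of the \emph{facets} containing it, so the normals of the facet-defining constraints active at $\bm y^*$ already span $(\ker\mathbb T^\top)^\perp=\Im\,\mathbb T$, and you should restrict your extraction of $J$ to those; alternatively, prune redundant inequalities first, as the paper does, before selecting a linearly independent subfamily.
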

The proof of the proposition is provided at the end of this section. 

Following the assumption of theorem \ref{thm:main}, we assume that the free cell $C(\bm \sigma)$ has a full SCK subset within its boundaries $\{{\cal M}\}_{j\in J}$. Thus, $\mathbb T_J$ can be replaced by $\mathbb S_J$, and we have $\rank\mathbb S_J=\rank\mathbb S$. The intersection ${\cal M}_J$ is then given by 
\begin{eqnarray}
{\cal M}_J=\{\bm x\in \mathbb R_{>0}^N \mid \mathbb S_J^\top\ln \bm x=-\ln \bm k_J\}, \label{eq:balance_sub_manifold}
\end{eqnarray}
and $\rank \mathbb S_J=|J|$ holds, where $\bm k_J$ is the vector consisting of $k_j\, (j\in J)$. Because $\mathbb S_J^\top$ is column full-ranked and $\rank \mathbb S\leq N$ holds, the equation 
\begin{equation}
    \mathbb S^\top_J\ln \bm x=-\ln \bm k_J,\label{eq:wegscheider}
\end{equation}
has solutions, and thus, ${\cal M}_J$ is not empty. Eq.~\eqref{eq:wegscheider} provides $\rank \mathbb S_J=\rank \mathbb S$ independent conditions, leading to ${\rm dim}[{\cal M}_J]=N-\rank \mathbb S$. 

Let ${\cal W}_J(\bm x)$ be the SCC of an arbitrary state $\bm x\in \mathbb R_{>0}^N$ with reactions in $J$, i.e., 
\begin{equation}
{\cal W}_J(\bm x)=\{\bm x+\sum_{j\in J}a_j \bm S_j \mid a_i\in \mathbb R\}\cap \mathbb R^N_{>0}. \label{eq:SCC_J}
\end{equation}
Note that ${\cal W}_J(\bm x)$ does not necessarily have the same dimension as the original SCC, ${\cal W}(\bm x)$, because the reactions in $K$ are not included. However, owing to Prop.~\ref{prop:facet}, we have ${\rm dim}[{\cal W}_J(\bm x)]={\rm dim}\,\Im \mathbb S_J=\rank \mathbb S_J=\rank \mathbb S$, and which equals to ${\rm dim}[{\cal W}(\bm x)]$. By combining this result with the dimensionality argument of ${\cal M}_J$, we have 
\begin{equation}
{\rm dim}[{\cal W}_J(\bm x)]+{\rm dim}[{\cal M}_J]=N.\label{eq:dimension_sum}
\end{equation}

In addition, for any states $\bm x_1,\bm x_2\in {\cal W}_J(\bm x)$ and $\bm y_1,\bm y_2 \in {\cal M}_J$, the following holds:
\begin{equation}
    \langle \bm x_1-\bm x_2,\ln \bm y_1-\ln \bm y_2\rangle=\langle \mathbb S_J \bm a,\ln \bm y_1-\ln \bm y_2\rangle=\langle \bm a,\mathbb S_J^\top(\ln \bm y_1-\ln \bm y_2)\rangle=0, \label{eq:dual_product}
\end{equation}
where $\langle \cdot,\cdot \rangle$ represents the dual product between the concentration space and logarithm-transformed concentration space. In the first equality, we have used the fact that $\bm x_1$ and $\bm x_2$ are in the same SCC, and thus, there is a vector $\bm a\in \mathbb R^{|J|}$ satisfying $\mathbb S_J \bm a=\bm x_1-\bm x_2$. The second equality is due to the fact that $\bm y_1$ and $\bm y_2$ are in ${\cal M}_J$, and thus, satisfy Eq.~\eqref{eq:wegscheider}. From Eq.~\eqref{eq:dimension_sum} and Eq.~\eqref{eq:dual_product}, ${\cal W}_J(\bm x)$ and ${\cal M}_J$ are dually orthogonal complements of each other. The intersection of dually orthogonal complements with Eq~\eqref{eq:dimension_sum} always exists and is unique \cite{Kobayashi2024-zo,Sughiyama2022-qr}. We denote the intersection of ${\cal W}_J(\bm x^{\rm tgt})$ and ${\cal M}_J$ by $\xeq$.


The reactions in $J$ are exactly balanced, and thus, the thermodynamic part of the reactions in $J$ is zero at $\xeq$. Also, $\xeq$ is an adherent point of the free cell. Thus, an arbitrarily small displacement from $\xeq$ can lead to the entrance of the free cell, that is, there is a vector $\bm a$ with an infinitesimally small norm satisfying ${\rm sgn}\ \bm p(\xeq+\mathbb S \bm a)=\bm \sigma$. As this balanced state can be seen as the detailed-balanced state for the reaction system (Eq.~\eqref{eq:SI_ode_first}) with $u_j>0$ for $j\in J$ while $u_k=0$ for $k\in K$, we term it the equilibrium state. 

Let us consider the reduced model with only reactions in $J$
\begin{equation}
    \dv{\bm x}{t}=\mathbb S_J\bm u_J\odot \bm p_J(\bm x), \label{eq:reduced_model}
\end{equation}
where $\bm u_J\in \mathbb R^{|J|}_{>0}$ and $\bm p_J(\bm x)$ are the control inputs for the reaction in $J$, and thermodynamic part of the reactions in $J$, respectively. Here, we have 
\begin{equation}
    {\cal W}_J(\bm x)={\cal W}(\bm x) \label{eq:same_SCC}
\end{equation} 
for any $\bm x\in \mathbb R_{> 0}^N$ because $\Im\mathbb S_J\subseteq \Im \mathbb S$ and $\rank \mathbb S_J=\rank \mathbb S$ hold.

Since $\bm u_J$ is a constant vector, Eq.~\eqref{eq:reduced_model} can be considered as an autonomous dynamical system with the mass--action kinetics. Also, because $\xeq$ satisfies Eq.~\eqref{eq:wegscheider}, $\xeq$ is the detailed-balanced state of Eq.~\eqref{eq:reduced_model}. Thus, $\xeq$ is the globally asymptotically stable steady-state, particularly, the equilibrium state of the system within ${\cal W}_J(\bm x^{\rm tgt})={\cal W}(\bm x^{\rm tgt})$ \cite{Rao2016-tz,Schuster1989-vf,Craciun2015-zk}. The original model (Eq.~\eqref{eq:SI_ode_wo_f}) is reduced to Eq.~\eqref{eq:reduced_model} by setting $u_i(t)>0$ for $i\in J$ and $u_i(t)=0$ for $i\in K$. Let us denote one such control as $\bm u^{\rm eq}$. 

Overall, any state in ${\cal W}(\bm x^{\rm tgt})$ can be asymptotically controlled to $\xeq$ by setting $\bm u(t)=\bm u^{\rm eq}$.



\subsection{Step 2: control from a neighboring state of $\xeq$ to $\xp$}

\subsubsection{Overview of the proof of Step 2}
First, we divide Eq.~\eqref{eq:SI_ode_wo_f} into the two parts, namely, the term generated by the reaction with indices in $J$ and $K$, respectively. We denote the stoichiometric matrices, controls, and thermodynamic part of the reactions as \(\mathbb S_J\), \(\mathbb S_K\), \(\bm u_J(t)\), \(\bm u_K(t)\), \(\bm p_J(\x)\), and \(\bm p_K(\x)\), respectively. The differential equation is given by 

\begin{eqnarray}
    \dv{\x}{t}&=&\mathbb S_J \bm u_J(t)\odot \bm p_J(\x) +\mathbb S_K \bm u_K(t)\odot \bm p_K(\x) \label{eq:eq1}
\end{eqnarray}
In the following, we show the existence of control to $\xp\in C(\bm \sigma)$ from neighbors of $\xeq$ by utilizing the stability of an equilibrium state $\xeq$. For this purpose, we consider the linearized dynamics of Eq.~\eqref{eq:eq1} around $\xp$ which is sufficiently close to $\xeq$.

Let $\xp$ be a neighbor of $\xeq$ and be in the free cell $C(\bm \sigma)$. The relationship between $\xp$ and $\xeq$ is given by
\begin{equation}
\xp \coloneqq \xeq + \epsilon\bm c \quad (\|\bm c\|=1,\ \epsilon\ll 1),\label{eq:defi_xp}
\end{equation}
where $\bm c$ is selected such that \(\xp\in C(\bm \sigma)\). By expanding Eq.~\eqref{eq:eq1} around \(\xp\) and retaining the terms up to the first order in \(\Delta \x \coloneqq \x - \xp\), we obtain
\begin{eqnarray}
    \dv{\Delta \x}{t} &=& \Bigl[\mathbb S_J \bm u_J(t)\odot \bm p_J(\xp) +\mathbb S_K \bm u_K(t)\odot \bm p_K(\xp)\Bigr] \nonumber \\
    &&+\Bigl[\mathbb S_J\,\mathrm{diag}\{\bm u_J(t)\}\frac{\partial \bm p_J}{\partial \x}(\xp) + \mathbb S_K\,\mathrm{diag}\{\bm u_K(t)\}\frac{\partial \bm p_K}{\partial \x}(\xp)\Bigr]\Delta \x\label{eq:eq2}
\end{eqnarray}
Hereafter, we denote
\[
{\cal J}_X(\bm u_X,\xp) \coloneqq \mathbb S_X\,\mathrm{diag}\{\bm u_X(t)\}\frac{\partial \bm p_X}{\partial \x}(\xp), \quad (X=J,K).
\]
We say ${\cal J}_X(\bm u_X,\xp)$ is stable if the real parts of all eigenvalues of ${\cal J}_X(\bm u_X,\xp)$ are negative except for the zero eigenvalue corresponding to the left null space of $\mathbb S_X$. For the sum of the matrices, \({\cal J}_J(\bm u_J,\xp)+{\cal J}_K(\bm u_K,\xp)\), we say that it is stable if the eigenvalue of the sum of the matrix satisfies the above condition. In this case, we allow the eigenvalues corresponding to the left null space of $\mathbb S$ to be zero. 

The condition for $\Delta\bm x=\bm 0$ being stable is the vanishment of the first term on the right-hand side of Eq.~\eqref{eq:eq2} and the stability of the second term. We will show the following two claims: First, there are constants \(\bm u_J(t)=\bm u^*_J\) and \(\bm u_K(t)=\bm u^*_K\) \footnote{As an abbreviation, we also denote the concatenated vector of $\bm u_J(t)$ and $\bm u_K(t)$ by $\bm u(t)=\bm u^*$} which leads to the vanishment of the first term on the right-hand side in Eq.~\eqref{eq:eq2}. Second, the matrix in the second term \({\cal J}_J(\bm u_J,\xp)+{\cal J}_K(\bm u_K,\xp)\) is stable with such $\bm u^*_J$ and $\bm u^*_K$. The local asymptotic stability of $\Delta \bm x=\bm 0$ means that $\bm x=\xp$ is locally asymptotically stable in Eq.~\eqref{eq:eq1}. The basin of attraction to $\Delta \bm x=\bm 0$ is at least as large as the range of linearization of the dynamics (Eq.~\eqref{eq:eq2}) being validated. Thus, when the state reaches sufficiently close to $\xeq$ by the control in the step 1, we can switch the control from $\bm u^{\rm eq}$ to $\bm u^*$ to asymptotically control the state to $\xp$.

\subsubsection{Vanishment of the first term}\label{sec:vanish_app}
First, we show the vanishment of the first term. According to the definition of the free cell, we have ${\rm cone}\{\sigma_i\bm S_i\}_{i\in I}={\rm span}\{\sigma_i\bm S_i\}_{i\in I}$. From Lemma.~\ref{lem:pos_dep}, $\{\sigma_i\bm S_i\}_{i\in I}$ is positively dependent, and thus, there is a positive linear combination of $\{\sigma_i\bm S_i\}_{i\in I}$ leading to the zero vector, that is
\begin{equation}
\sum_{i\in I} w_i \sigma_i \bm S_i = \bm 0,(w_i>0).\label{eq:pos_comb_vanish}    
\end{equation}
By comparing Eq.~\eqref{eq:pos_comb_vanish} with the first term in Eq.~\eqref{eq:eq1} being zero, 
\begin{eqnarray}
    &&\mathbb S_J \bm u_J^*\odot \bm p_J(\xp) +\mathbb S_K \bm u_K^*\odot \bm p_K(\xp)=\bm 0\label{eq:vanish}\\
    &\Leftrightarrow&\sum_{i\in I}u_i^*p_i(\xp)\bm S_i=\bm 0
\end{eqnarray}
we have $u_i^*=w_i/|p_i(\xp)|>0$ for $i\in I$. 

For the later proof, we show that the leading term of $\bm u_K^*$ is of the order of $\epsilon$ if we take $\bm u_J^*$ so that its leading term is the $0$th order of $\epsilon$, i.e, $\bm u_J^*=\bm u_J^{(0)}+{\cal O}(\epsilon)$. Intuitively, this is because $\bm p_J(\xp)= {\cal O}(\epsilon)$ and $\bm p_K(\xp)= {\cal O}(1)$ hold, and thus, $\bm u_K^*$ should be ${\cal O}(\epsilon)$ if $\bm u_J^*$ is chosen to be ${\cal O}(1)$.

To verify this claim, we expand $\bm p_J(\xp)=\bm p_J(\xeq+\epsilon\bm c)$ in Eq.~\eqref{eq:vanish} on $\epsilon$. Then, we have
\begin{eqnarray}
    \bm 0&=&\mathbb S_J\,(\bm u_J^{(0)}+{\cal O}(\epsilon)) \odot \bm p_J(\xeq+\epsilon\bm c)+\mathbb S_K \bm u_K^*\odot \bm p_K(\xp)\\
    &=& \Bigl(\mathbb S_J\,(\bm u_J^{(0)}+{\cal O}(\epsilon)) \odot \bm p_J(\xeq) + \epsilon\,\mathbb S_J\,\mathrm{diag}\{\bm u_J^{(0)}\}\frac{\partial \bm p_J}{\partial \x}(\xeq)\bm c + {\cal O}(\epsilon^2)\Bigr)+\mathbb S_K \bm u_K^*\odot \bm p_K(\xp)\nonumber\\
    &=& \Bigl(\epsilon\,\mathbb S_J\,\mathrm{diag}\{\bm u_J^{(0)}\}\frac{\partial \bm p_J}{\partial \x}(\xeq)\bm c + {\cal O}(\epsilon^2)\Bigr)+\mathbb S_K \bm u_K^*\odot \bm p_K(\xp)\\
    &=&\epsilon \bm q+\mathbb S_K \bm u_K^*\odot \bm p_K(\xp) \label{eq:q}
\end{eqnarray}
where $\bm q= \mathbb S_J\,\mathrm{diag}\{\bm u_J^{(0)}\}\frac{\partial \bm p_J}{\partial \x}(\xeq)\bm c + {\cal O}(\epsilon)$ and note that $\bm p_J(\xeq)=\bm 0$ holds. 

From Eq.~\eqref{eq:q}, we have
$$-\epsilon\bm q\in{\rm cone}\{\bm S_i p_i(\xp)\}_{i\in K}={\rm cone}\{\bm S_i \sigma_i\}_{i\in K},$$
in particular, $-\bm q\in{\rm cone}\{\bm S_i \sigma_i\}_{i\in K}.$ Thus, there is a non-negative constant $a_i$ such that
\begin{equation}
    -\bm q = \sum_{i\in K} a_i \bm S_i \sigma_i.
\end{equation}
Note that $a_i$ remains ${\cal O}(1)$ as $\epsilon\to 0$ since the leading term of $\bm q$ is independent of $\epsilon$. The choice of $\bm a$ is not necessarily unique. We choose one arbitrarily and multiply both sides by $\epsilon$. By comparing this with the coefficient of the second term in Eq.~\eqref{eq:q}, we obtain
\begin{equation}
\epsilon a_i=u_i^*p_i(\xp), (i\in K),
\end{equation}
and $u_i^*$ is given by $$u_i^*=\epsilon \frac{a_i}{p_i(\xp)}, \  (i\in K).$$ Because $p_i(\xp)\to p_i(\xeq)\neq 0$ holds as $\epsilon\to 0$, we have $u_i^*={\cal O}(\epsilon)$. Henceforth, to emphasize that \(\bm u_K^*\) is of the order \(\epsilon\), we write
\[
\bm u_K^*=\epsilon\,\bm u_K^\circ.
\]

\subsubsection{Stability of the second term}\label{sec:stability_app}

Next, we show that the second term is stable. We evaluate the change in the eigenvalues of the matrix \({\cal J}_J(\bm u_J^*,\xp)+{\cal J}_K(\bm u_K^*,\xp)\) by perturbatively incorporating the effect of \({\cal J}_K(\bm u_K^*,\xp)\) into \({\cal J}_J(\bm u_J^*,\xeq)\). Since $\xeq$ is asymptotically stable, the eigenvalues of \({\cal J}_J(\bm u_J^*,\xeq)\) are all negative except for the zero eigenvalue corresponding to the left null space of \(\mathbb S_J\). The changes in the eigenvalues with the non-zero real part are evaluated by utilizing the generalized Bauer--Fike theorem \cite{Chu1986-mj}. 

Let $\|\cdot \|_p$ be the $p$-norm of the matrix and $\kappa_p(V)$ be the condition number of a matrix \(V\) induced by the $p$-norm. Suppose that $A\in \mathbb C^{N,N}$ is a matrix with eigenvalues \(\{\lambda_i\}_{i=1}^N\) and is converted to 
$$\tilde{A}=V^{-1}A V={\rm diag}(A_i),$$ where $A_i$'s are triangular Schur form. Suppose that the matrix $A$ is perturbed by matrix $B\in \mathbb C^{N,N}$ as \(A+\delta B\), $(\delta>0)$, and the eigenvalues of $A+\delta B$ are denoted by \(\{\mu_i\}_{i=1}^N\). The generalized Bauer--Fike theorem states that for any $i\in\{1,2\ldots,N\}$, there exists $j\in \{1,2\ldots,N\}$ such that

\begin{equation}
    |\mu_i-\lambda_j| \leq \max\{\theta,\theta^{1/q}\}\label{eq:gbf} 
\end{equation}
holds with
\begin{equation}
\theta=\delta C\kappa_p(V)\|B\|_p,\label{eq:gbf_theta}
\end{equation}
where $C$ and $q$ are positive, finite-valued constants computed from the Schur blocks of the matrix $A$ (see \cite{Chu1986-mj} for details). Note that if $A$ is diagonalizable, $C$ and $q$ become unity, and the classical Bauer--Fike theorem \cite{bauer1960norms} is recovered.


Here, we apply the generalized Bauer--Fike theorem to the matrix \({\cal J}_J(\bm u_J^*,\xp)+{\cal J}_K(\bm u_K^*,\xp)\). As \(\xp\coloneqq\xeq+\epsilon\,\bm c\) and \(\epsilon \ll 1\), we have
\begin{eqnarray*}
    {\cal J}_J(\bm u_J^*,\xp)
    &=& {\cal J}_J(\bm u_J^*,\xeq+\bm c\,\epsilon)\\
    &=& {\cal J}_J(\bm u_J^*,\xeq) + \epsilon \sum_{i=1}^N\frac{\partial}{\partial x_i}{\cal J}_J(\bm u_J^*,\x)\Big|_{\x=\xeq}c_i + {\cal O}(\epsilon^2).
\end{eqnarray*}
Also, for \({\cal J}_K(\bm u_K^*,\xp)\) we have
\begin{eqnarray*}
    {\cal J}_K(\bm u_K^*,\xp)
    &=& {\cal J}_K(\epsilon\,\bm u_K^\circ,\xeq+\epsilon\,\bm c)\\
    &=& \epsilon\,{\cal J}_K(\bm u_K^\circ,\xeq+\epsilon\,\bm c)\\
    &=& \epsilon\,{\cal J}_K(\bm u_K^\circ,\xeq) + {\cal O}(\epsilon^2).
\end{eqnarray*}
Retaining terms up to the first order in \(\epsilon\), we have
\begin{eqnarray}
    \tilde{\cal J}_\epsilon(\bm u_J^*,\bm u_K^*,\xeq) &\coloneqq& {\cal J}_J(\bm u_J^*,\xeq) + \epsilon \sum_{i=1}^N\frac{\partial}{\partial x_i}{\cal J}_J(\bm u_J^*,\x)\Big|_{\x=\xeq}c_i + \epsilon\,{\cal J}_K(\bm u_K^\circ,\xeq) 
    \label{eq:JIJ}
\end{eqnarray}
We denote the first order of $\epsilon$ term as $$\Delta(\bm u_J^*,\bm u_K^*,\xeq)=\sum_{i=1}^N\frac{\partial}{\partial x_i}{\cal J}_J(\bm u_J^*,\x)\Big|_{\x=\xeq}c_i + \,{\cal J}_K(\bm u_K^\circ,\xeq)$$
in the following.

Now we apply the generalized Bauer--Fike theorem. Here, we denote $\delta$-independent part in Eq.~\eqref{eq:gbf_theta} as $\theta_0$ which is determined by the original matrix \({\cal J}_J(\bm u_J^*,\xeq)\) and the perturbation matrix \(\Delta(\bm u_J^*,\bm u_K^*,\xeq)\). Let the eigenvalues of \(\tilde{\cal J}_\epsilon(\bm u_J^*,\bm u_K^*,\xeq)\) be denoted by \(\{\mu_i\}_{i=1}^N\) and those of \({\cal J}_J(\bm u_J^*,\xeq)\) by \(\{\lambda_i\}_{i=1}^N\). Because the infinitesimal parameter corresponding to $\delta$ in Eq.~\eqref{eq:gbf_theta} is $\epsilon$, from the generalized Bauer--Fike theorem, for any $\mu_i$ there exists $\lambda_j$ such that
\begin{equation}
    |\mu_i-\lambda_j| \leq \max\{\epsilon\theta_0,(\epsilon\theta_0)^{1/q}\}\label{eq:bf}
\end{equation}
holds. Therefore, we can take $\epsilon$ small enough so that if we have \(\Re\lambda_j<0\) then \(\Re \mu_i<0\) holds.


Although Eq.~\eqref{eq:bf} also holds for zero eigenvalues, tighter bounds on the zero eigenvalues are necessary for the stability analysis because there is a possibility that some of the zero eigenvalues in \({\cal J}_J(\bm u_J^*,\xp)\) become non-zero by perturbations.\footnote{Since $\xeq$ is globally asymptotically stable in SCC, ${\cal J}_{J}(\bm u^*_J,\xeq)$ has no pure imaginary eigenvalue.} We show that the zero eigenvalues of \({\cal J}_J(\bm u_J^*,\xeq)\) remain zero after infinitesimally small perturbations, that is, the zero eigenvalues of ${\cal J}_J(\bm u_J^*,\xeq)$ remain zero in ${\cal J}_J(\bm u_J^*,\xp)+{\cal J}_K(\bm u_K^*,\xp)$. 

Recall that we have $\Im \mathbb S=\Im \mathbb S_J$ (Eq.~\eqref{eq:same_SCC}). We rewrite $\Im\Bigl[{\cal J}_J(\bm u_J^*,\xp)+{\cal J}_K(\bm u_K^*,\xp)\Bigr]$ as follows
\begin{eqnarray}
    \Im\Bigl[{\cal J}_J(\bm u_J^*,\xp)+{\cal J}_K(\bm u_K^*,\xp)\Bigr]&=&\Im\Bigl[\mathbb S_J\,\mathrm{diag}\{\bm u_J^*\}\frac{\partial \bm p_J}{\partial \x}(\xp) + \mathbb S_K\,\mathrm{diag}\{\bm u_K^*\}\frac{\partial \bm p_K}{\partial \x}(\xp)\Bigr]\nonumber\\
    &=&\Im\Bigl[\begin{pmatrix}\mathbb S_J&\mathbb S_K\end{pmatrix}\mathrm{diag}\{\bm u_J^*,\bm u_K^*\}\begin{pmatrix}\frac{\partial \bm p_J}{\partial \x}\\\frac{\partial \bm p_K}{\partial \x}\end{pmatrix}(\xp)\Bigr]\\
    &=&\Im\Bigl[\mathbb S\mathrm{diag}\{\bm u^*\}\frac{\partial \bm p}{\partial \x}(\xp)\Bigr]\\
    &\eqcolon&{\Im} {\cal J}(\bm u^*,\xp)
\end{eqnarray}
where $(\mathbb S_J\,\mathbb S_K)$ and $({\partial \bm p_J}/{\partial \x}\,{\partial \bm p_K}/{\partial \x})^\top$ are the horizontal and vertical concatenation of the matrices, respectively. $\mathrm{diag}\{\bm u_J^*,\bm u_K^*\}$ is the diagonal matrix of the concatenated vector of $\bm u_J^*$ and $\bm u_K^*$. $\mathbb S$, $\bm u$, and $\bm p$ are the stoichiometric matrix, control input vector, and thermodynamic part of the reaction function vector consisting of all reactions, respectively. As ${\cal J}_J(\bm u_J^*,\xp)+{\cal J}_K(\bm u_K^*,\xp)$ and $ {\cal J}(\bm u^*,\xp)$ are equivalent as a linear map, we deal with ${\cal J}(\bm u^*,\xp)$ in this section.

In the following, we denote $\mathrm{diag}\{\bm u_X^*\}\frac{\partial \bm p_X}{\partial \x}(\x)$ as $A_X(\x)$. $A(\x)$ without subscript represents $\mathrm{diag}\{\bm u^*\}\frac{\partial \bm p}{\partial \x}(\x)$. The goal of this section is to show the existence of the state $\xp\in C(\bm \sigma)\cap{\cal W}(\xeq)$ being sufficiently close to $\xeq$ such that the following equation holds 
\begin{equation}
    {\Im}{\cal J}(\bm u^*,\xp)={\Im} {\cal J}_J(\bm u_J^*,\xeq).
\end{equation} If this holds, we have $\ker {\cal J}(\bm u^*,\xp)=\ker {\cal J}_J(\bm u_J^*,\xeq)$, and thus, for any eigenvectors of ${\cal J}_J(\bm u_J^*,\xeq)$ corresponding to the zero eigenvalue, we have ${\cal J}(\bm u^*,\xp)\bm v=\bm 0$. Therefore, the eigenvalues of ${\cal J}(\bm u^*,\xp)$ that correspond to these vectors remain zero. 

To this end, we show 
\begin{equation}
    {\Im}{\cal J}(\bm u^*,\xeq)={\Im} {\cal J}_J(\bm u_J^*,\xeq)\label{eq:im_consition_toshow}
\end{equation} holds (note that the input of ${\cal J}$ here is $\xeq$, not $\xp$). Since $\xeq$ is asymptotically stable within the SCC, $\rank \mathbb S_J=\rank {\cal J}_J(\bm u_J^*,\xeq)$ holds; otherwise, $\xeq$ cannot attract states from the entire SCC. Thus, the matrix rank of $A_J(\xeq)$ and $\mathbb S_J$ satisfy $\rank A_J(\xeq)\geq\rank \mathbb S_J$. Further, $$\rank A_J(\xeq)= \rank \mathbb S_J$$ holds because $A_J$ is $|J|\times N$ matrix, $|J|$ is less than or equal to $N$, and $\rank \mathbb S_J$ is $|J|$. 

The matrix $A_J(\xeq)$ is surjective as a linear map $\mathbb R^N\to \mathbb R^M$, and thus, we have 
\begin{equation}
{\Im} \mathbb S_J A_J(\xeq)={\Im} \mathbb S_J. \label{eq:rank1}
\end{equation}

As $\Im\mathbb S=\Im\mathbb S_J$, the condition to be proven (Eq.~\eqref{eq:im_consition_toshow}) is equivalent to the following conditions
\begin{eqnarray}
    &&{\Im}{\cal J}(\bm u^*,\xeq)={\Im} {\cal J}_J(\bm u_J^*,\xeq)\\
    &\Leftrightarrow&{\Im}\mathbb SA(\xeq)={\Im}\mathbb S_JA_J(\xeq)\\
    &\Leftrightarrow&{\Im}\mathbb SA(\xeq)={\Im}\mathbb S_J\\
    &\Leftrightarrow&{\Im}\mathbb SA(\xeq)={\Im}\mathbb S.
\end{eqnarray}
Thus, it is sufficient to show 
\begin{equation}{\Im}\mathbb SA(\xeq)={\Im}\mathbb S\label{eq:rank_condition1}.\end{equation} 
As Eq.~\eqref{eq:rank_condition1} is equivalent with 
\begin{equation} \rank A(\xeq)\geq \rank \mathbb S,\label{eq:rank_condition2}\end{equation}
we show Eq.~\eqref{eq:rank_condition2} as follows. Note that the $(n,r)$ element of the $A_X(\x)$ matrix is given by $u_r\partial p_r/\partial x_n(\x)$ where $r$ is the $r$th index of $X$ which is either nothing or $J$. Since $J$ is a subset of $I$, all the column in $A_J(\x)$ is in $A(\x)$. So, $\Im A(\x)\supseteq \Im A_J(\x)$ holds.

By using $\Im A(\x)\supseteq \Im A_J(\x)$ and $\rank A_J(\xeq)=|J|$, we have
\begin{equation}
\rank A(\xeq)=\dim\Im A(\xeq)\geq \dim \Im A_J(\xeq)=\rank A_J(\xeq)=|J|=\rank \mathbb S.\nonumber
\end{equation}
Therefore, we have $\rank A(\xeq)\geq \rank \mathbb S$, and so, ${\Im} {\cal J}_I(\bm u_I^*,\xeq)={\Im}{\cal J}_J(\bm u_J^*,\xeq)$. 

Since $\mathbb S$ and $\mathbb S_J$ are constant matrices and $A(\x)$ and $A_J(\x)$ are continuous matrices of $\x$, we have $\xp$ in $N_\epsilon(\xeq)\cap C(\bm \sigma)$ such that ${\Im} {\cal J}_I(\bm u_I^*,\xp)={\Im}{\cal J}_J(\bm u_J^*,\xeq)$ holds, where $N_\epsilon(\xeq)$ is the $\epsilon$ neighborhood of $\xeq$ restricted to SCC, that is,
$$N_\epsilon(\xeq):=\{\x\in{\cal W}(\xeq)\mid \|\bm x-\xeq\|<\epsilon\}.$$

\subsection{Summary of the proof}
We have shown the existence and uniqueness of the equilibrium state $\xeq$ on the boundary of the free cell $C(\bm \sigma)$ and the local asymptotic stability of $\xp\in C(\bm \sigma)$ in Eq.~\eqref{eq:eq2} with the control $\bm u^*$. Thus, first, we can control the system from $\bm x^{\rm src}\in {\cal W}(\bm x^{\rm tgt})$ to $\xeq$ with the constant control satisfying 
\begin{equation}
    \bm u=\begin{cases}>0&(i\in J)\\
    =0&(i\in K)\end{cases}.
\end{equation}
Next, we switch the control to $\bm u_J=\bm u_J^*$ and $\bm u_K=\bm u_K^*$. The state converges to a state being arbitrarily close to $\xp$, in particular, inside the free cell. Once the state enters the free cell, we can take advantage of the unrestricted controllability inside the free cell to control the state to $\bm x^{\rm tgt}$. The proof of theorem~\ref{thm:main} is completed.

It is noteworthy that the asymptotic stability of the equilibrium state is utilized twice; first, to show the existence of the control to $\xeq$ and second, to show the stability of the equilibrium state for checking the stability of \(\tilde{\cal J}_\epsilon(\bm u_J^*,\bm u_K^*,\xeq)\). Also note that the coincidence of the controllable set and the SCC holds as long as the global asymptotic stability of the detailed-balancing, adherent point of the target free cell $C(\bm \sigma)$ holds. Therefore, our argument never exclude the possibility that models with non-SCK also have the universal controllability. 

In the proof, we supposed that all reactions are reversible, i.e, $\bm k\in \mathbb R^R_{>0}$. However, this is not necessary. If reactions in the index set $H$ are irreversible, the balance manifolds of reactions in $H$ do not exist. However, the arguments above hold by replacing the index set of the reactions $I$ by $I\backslash H$. If there is a full SCK subset that indices of corresponding reactions are chosen from $I \backslash H$, we have an equilibrium state, $\xeq$, as an adherent point of the free cell $C(\bm \sigma)$, and it is globally asymptotically stable. The following part of the proof does not depend on the reversibility of the reactions.

\subsection{Cases with $\bm \sigma\notin \{-1,1\}^R$}
Here, we consider cases where the target state $\bm x^{\rm tgt}$ is in the free cell $C(\bm \sigma)$ with $\sigma_i=0$ for some indices. Note that there is the upper limit of the count of the zero elements in $\bm \sigma$. The $r$th element of $\bm \sigma$ being zero means that the cell $C(\bm \sigma)$ is a subset of the balance manifold of the $r$th reaction. The maximum number of intersecting balance manifold is given by $\rank \mathbb S\leq N$. 

For a clear statement of the condition, we introduce the \textit{parental cell}
\begin{defi}[Parental cell]
A cell $C(\bm \eta)$ is the \textit{parental cell} of the cell $C(\bm \sigma)$ if the sign vector $\bm \eta$ is given by
    \begin{equation}
        \eta_i=\begin{cases}
            1\ {\rm or}\ -1& (\sigma_i=0)\\
            \sigma_i& (\sigma_i\neq 0).
        \end{cases}
    \end{equation}
\end{defi}
The parental cell of $C(\bm \sigma)$ is generally not unique while the parental cell of $C(\bm \sigma)$ with $\bm \sigma\in \{-1,1\}^R$ is $C(\bm \sigma)$ itself. 

Let $L_{\bm \sigma}$ be the index set of zero elements in $\bm \sigma$, that is, $L_{\bm \sigma}\coloneq\{i\in I\mid \sigma_i=0\}$. For proving the controllability of a free cell $C(\bm \sigma)$ with $\bm \sigma\in \{-1,0,1\}^R$, we assume that there is at least one parental cell of the free cell $C(\bm \sigma)$, $C(\bm \eta)$, with a full SCK subset $\{{\cal M}_j\}_{j\in J}$ where $L_{\bm \eta}\subset J$ holds. Also, we denote the set difference of $I$ and $J$ by $K$, i.e., $K\coloneqq I\backslash J$. 

Under the condition described above, we can show the existence and uniqueness of the equilibrium state $\xeq\coloneq{\cal W}(\bm x^{\rm tgt})\cap {\cal M}_J$ in the identical manner as in the previous proof. Therefore, the equilibrium state $\xeq$ exists uniquely and is globally asymptotically stable in the SCC.

Next, we define $\xp$ so that 
$$\xp=\xeq +\epsilon \bm c\,,\xp\in C(\bm \sigma)\subset{\cal M}_{L_{\bm \sigma}},\,(\|c\|=1,\epsilon\ll1)$$
holds. Note that such $\bm c$ exists for arbitrarily small $\epsilon$ because $\xeq\in {\cal M}_J\subset {\cal M}_{L_{\bm \sigma}}$ (Recall that $J$ is chosen so that $L_{\bm \sigma}\subset J$ holds.). By expanding the right-hand side of Eq.~\eqref{eq:eq1} on $\epsilon$ in the same manner as in the previous proof, 
\begin{eqnarray}
    \dv{\Delta \x}{t} &=& \Bigl[\mathbb S_{J\backslash L_{\bm \sigma}} \bm u_{J\backslash L_{\bm \sigma}}(t)\odot \bm p_{J\backslash L_{\bm \sigma}}(\xp) +\mathbb S_K \bm u_K(t)\odot \bm p_K(\xp)\Bigr] \nonumber \\
    &&+\Bigl[\mathbb S_{J}\,\mathrm{diag}\{\bm u_{J}(t)\}\frac{\partial \bm p_{J}}{\partial \x}(\xp) + \mathbb S_K\,\mathrm{diag}\{\bm u_K(t)\}\frac{\partial \bm p_K}{\partial \x}(\xp)\Bigr]\Delta \x\label{eq:eq2_sigmazero}.
\end{eqnarray}
Note that $p_l(\xp)=0$ holds for $l\in L_{\bm \sigma}$ because $\xp \in {\cal M}_{L_{\bm \sigma}}$ while the derivative $\partial p_{l}/\partial \bm x\,(l\in {L_{\bm \sigma}})$ is a non-zero vector. Since $C(\bm \sigma)$ is the free cell, the stoichiometric vectors $\{\sigma_i\bm S\}_{i\in I\backslash {L_{\bm \sigma}}}$ are positively dependent, and thus, there exists a control vector $\bm u$ such that the first term of Eq.~\eqref{eq:eq2_sigmazero} vanishes. Also, such control vector can be chosen so that $\bm u_{J\backslash L_{\bm \sigma}}$ and $\bm u_K$ be ${\cal O}(1)$ and ${\cal O}(\epsilon)$ for $\epsilon\to 0$, respectively. The second term of Eq.~\eqref{eq:eq2_sigmazero} is identical to that in Eq.~\eqref{eq:eq1}. Thus, we can apply the same argument as that in the previous proof. 

As $C(\bm \sigma)$ is a free cell, we can utilize the unrestricted controllability inside the free cell to control the state to $\bm x^{\rm tgt}$. 

The statement of the main theorem is summarized as follows. 

\begingroup
\setcounter{thm}{0}
\makeatletter
\renewcommand{\thethm}{S\arabic{thm}'}
\makeatother
\begin{thm}[Main theorem with ternary $\bm \sigma$]\label{thm:main_ter}
    For a given free cell $C(\bm \sigma)$, if there is a parental cell $C(\bm \eta)$ having a full SCK subset within its boundaries $\{{\cal M}_j\}_{j\in J}$ satisfying $L_{\bm \eta}\subset J$, then the controllable set of any state in the free cell $\bm x\in C(\bm \sigma)$ coincides with the entire stoichiometric compatibility class, $\frakC(\bm x)={\cal W}(\bm x)$.
\end{thm}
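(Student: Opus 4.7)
My strategy is to follow the three-step template of the binary case (source $\to \xeq \to \xp \to$ target), making only the adjustments forced by the zero components of $\bm \sigma$. The new bookkeeping is that $C(\bm \sigma) \subseteq \bigcap_{l \in L_{\bm \sigma}} {\cal M}_l$, so both the landing equilibrium $\xeq$ and the cell-interior state $\xp$ must respect these additional balance constraints, while the reactions in $L_{\bm \sigma}$ cannot contribute to cancelling any forcing at $\xp$.

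For the first stage (driving $\bm x^{\rm src}$ to $\xeq$), I let $J$ index the full SCK subset supplied by the hypothesis, read with the condition $L_{\bm \sigma} \subseteq J$ that is used in the proof paragraph preceding the theorem, and set $K = I \setminus J$. The binary argument transcribes without change: since $|J| = \rank \mathbb{S}$ and every $j \in J$ is SCK, $\mathcal{W}_J(\bm x)$ and $\mathcal{M}_J$ are dually orthogonal complements of complementary dimension, so $\xeq := \mathcal{W}(\bm x^{\rm tgt}) \cap \mathcal{M}_J$ exists and is unique. Turning off the $K$-reactions yields a detailed-balanced SCK system whose unique equilibrium $\xeq$ is globally asymptotically stable in $\mathcal{W}(\bm x^{\rm tgt})$. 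The condition $L_{\bm \sigma} \subseteq J$ forces $\xeq \in \mathcal{M}_J \subseteq \mathcal{M}_{L_{\bm \sigma}}$, so $\xeq$ is an adherent point of $C(\bm \sigma)$.

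For the second stage (entering and stabilizing inside $C(\bm \sigma)$), I pick $\xp = \xeq + \epsilon \bm c$ with $\xp \in C(\bm \sigma)$ and $\epsilon$ small; such $\bm c$ exists since $\xeq \in \mathcal{M}_{L_{\bm \sigma}}$. Expanding the ODE about $\xp$ splits its right-hand side into a constant forcing plus a linear term. In the forcing, the contributions indexed by $L_{\bm \sigma}$ vanish because $p_l(\xp) = 0$. Freeness of $C(\bm \sigma)$ applied via Lemma~S1 to $\{\sigma_i \bm S_i\}_{i \in I \setminus L_{\bm \sigma}}$ yields the positive dependence needed to solve the cancellation equation, and the $\epsilon$-scaling argument of Sec.~IIIB of the binary proof produces a non-negative constant $\bm u^*$ with $\bm u^*_{J \setminus L_{\bm \sigma}} = \mathcal{O}(1)$ and $\bm u^*_K = \mathcal{O}(\epsilon)$; I also peg $u_l = \mathcal{O}(\epsilon)$ for $l \in L_{\bm \sigma}$. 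At leading order the Jacobian equals $\mathcal{J}_J(\bm u_J^*, \xeq)$, whose spectrum is stable modulo $\ker \mathbb{S}_J = \ker \mathbb{S}$ by the first stage. All corrections from $K$-reactions, from $L_{\bm \sigma}$-reactions, and from moving from $\xeq$ to $\xp$ are $\mathcal{O}(\epsilon)$, so the generalized Bauer--Fike argument of the binary proof transcribes, and the image identity $\Im \mathcal{J}(\bm u^*, \xp) = \Im \mathbb{S}$ pins the zero eigenvalues to $\ker \mathbb{S}$. The third stage is unchanged: inside the free cell, unrestricted controllability carries the trajectory to $\bm x^{\rm tgt}$.

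The only genuinely new difficulty I expect is the treatment of the $L_{\bm \sigma}$ reactions: their $p_l$ vanish at $\xp$ so $u_l$ is unconstrained by the forcing-cancellation equation, yet $\partial p_l/\partial \bm x(\xp) \neq \bm 0$ so $u_l$ enters the Jacobian linearly at its own leading order. I plan to handle this by holding $u_l$ at $\mathcal{O}(\epsilon)$ so that its perturbation of the Jacobian falls within the Bauer--Fike tolerance and leaves the image identity intact. Should this tuning prove too delicate, a robust fallback is to stabilize $\xp$ by the explicit feedback of Proposition~S1 --- which is guaranteed to exist because $\xp$ lies in a free cell --- and splice it into the second stage in place of the constant $\bm u^*$, completing the chain $\bm x^{\rm src} \to \xeq \to \xp \to \bm x^{\rm tgt}$.
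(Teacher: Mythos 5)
Your stages 1 and 3, and the forcing-cancellation part of stage 2, match the paper's argument (including your reading of the hypothesis as $L_{\bm\sigma}\subseteq J$, which is what the paper's proof actually uses). The gap sits exactly at the point you flag as the ``only genuinely new difficulty,'' and your proposed resolution goes the wrong way. You peg $u_l={\cal O}(\epsilon)$ for $l\in L_{\bm\sigma}$, so that at leading order the Jacobian is ${\cal J}_{J\setminus L_{\bm\sigma}}(\bm u^{(0)}_{J\setminus L_{\bm\sigma}},\xeq)$ rather than the full ${\cal J}_J(\bm u_J^{(0)},\xeq)$. But $\rank\,\mathbb S_{J\setminus L_{\bm\sigma}}=\rank\,\mathbb S-|L_{\bm\sigma}|$, so this truncated matrix carries $|L_{\bm\sigma}|$ extra zero eigenvalues beyond those of ${\rm coker}\,\mathbb S$. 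The generalized Bauer--Fike bound only localizes perturbed eigenvalues to within ${\cal O}(\epsilon^{1/q})$ of unperturbed ones, which gives no sign information for eigenvalues perturbed off zero; and the image identity $\Im{\cal J}(\bm u^*,\xp)=\Im\mathbb S$ forces exactly those $|L_{\bm\sigma}|$ eigenvalues to move off zero at ${\cal O}(\epsilon)$ with undetermined sign. Stability of $\xp$ is therefore not established. The correct (and the paper's implicit) choice is the opposite: keep $u_l={\cal O}(1)$ for $l\in L_{\bm\sigma}$. This costs nothing in the forcing equation, because $p_l(\xp)=0$ exactly ($\xp\in C(\bm\sigma)\subset{\cal M}_{L_{\bm\sigma}}$), so these reactions never contribute to the constant term no matter how large $u_l$ is; meanwhile their full ${\cal O}(1)$ contribution to the Jacobian restores the leading-order matrix to ${\cal J}_J(\bm u_J^{(0)},\xeq)$ of the complete $J$-reduced SCK system, which is stable modulo ${\rm coker}\,\mathbb S$ because $\xeq$ is its globally asymptotically stable equilibrium. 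With that single change the binary argument transcribes verbatim.

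Your fallback does not repair this. The feedback of Proposition~S1 is realizable by non-negative catalytic control only through Lemma~S2, which requires the state to already lie in $C(\bm\sigma)$ (it uses $\sgn\bm p(\bm x)=\bm\sigma$ to convert the sign-free $\bm w$ into $\bm u\succeq\bm 0$). After stage 1 the state sits near $\xeq$, an adherent point of $C(\bm\sigma)$ at which $p_j=0$ for all $j\in J$, i.e.\ generically outside the cell; the entire purpose of the constant control $\bm u^*$ is to pull the state across the boundary into $C(\bm\sigma)$, which is precisely what Proposition~S1's feedback cannot do.
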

If all reactions are implemented by SCK, every cell has at least $\rank \mathbb S$ balance manifolds as its boundary and full SCK subsets, Cor.~\ref{col:main_whole} and \ref{cor:local_global} are unmodified.

\subsection{Proof of the Prop.~\ref{prop:facet}}

\begingroup
\setcounter{prop}{2}
\begin{prop}
Suppose \(C(\bm \sigma)\) is non-empty. Then one can construct an index set \(J \subset I\) with $|J|=\rank\mathbb T$ such that:
\begin{enumerate}
  \item The matrix \(\mathbb{T}_J\) satisfies $\rank \mathbb{T}_J \;=\;\rank \mathbb{T}.$
  \item The balance manifolds \(\{\mathcal{M}_j\}_{j\in J}\) are boundary of the cell $C(\bm \sigma)$.
\end{enumerate}
\end{prop}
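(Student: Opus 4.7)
The strategy is to pass to logarithmic coordinates, where every balance manifold becomes an affine hyperplane, and then read off the boundary structure of the (closure of the) cell using elementary polyhedral geometry.

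First, I would set $\bm y \coloneqq \ln \bm x$, so that the balance manifold $\mathcal M_r$ becomes the affine hyperplane $\tilde{\mathcal M}_r = \{\bm y\in\mathbb R^N : \bm T_r^\top \bm y + \ln k_r = 0\}$ and the cell becomes $\tilde C(\bm\sigma) = \{\bm y : \sgn(\bm T_r^\top \bm y + \ln k_r) = \sigma_r \ \forall r\in I\}$. Its closure $P \coloneqq \overline{\tilde C(\bm\sigma)}$ is the convex polyhedron cut out by the non-strict inequalities $\sigma_r(\bm T_r^\top \bm y + \ln k_r) \ge 0$ for $\sigma_r \ne 0$ together with the equalities $\bm T_r^\top \bm y + \ln k_r = 0$ for $\sigma_r = 0$.

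Next, I would identify the lineality space of $P$. A direction $\bm d$ lies in $\mathrm{lin}(P)$ iff $\pm\bm d$ both belong to the recession cone, which forces $\bm T_r^\top \bm d = 0$ for every $r$; hence $\mathrm{lin}(P) = \ker \mathbb T^\top$, of dimension $N - \rank \mathbb T$. Passing to the quotient $P/\mathrm{lin}(P)$ (equivalently, intersecting $P$ with a linear complement of $\ker \mathbb T^\top$) yields a pointed convex polyhedron of ambient dimension $\rank \mathbb T$. Since $C(\bm\sigma)$ is non-empty, so is this quotient, and a classical result of convex analysis then guarantees the existence of at least one vertex.

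Let $\bm y^\star \in P$ represent such a vertex. Because the vertex is zero-dimensional in a space of dimension $\rank \mathbb T$, the normals of the constraints active at $\bm y^\star$ must span the full ambient direction, so from the indices $r$ with $\bm T_r^\top \bm y^\star + \ln k_r = 0$ one extracts a subset $J$ of size exactly $\rank \mathbb T$ whose rows $\{\bm T_j\}_{j\in J}$ are linearly independent. This gives claim~(1): $\rank \mathbb T_J = \rank \mathbb T$. For claim~(2), $\bm y^\star$ satisfies $\bm y^\star \in \bigcap_{j\in J}\tilde{\mathcal M}_j \cap P$, so pulling back via $\bm x = \exp(\bm y)$ produces a common point of $\bigcap_{j\in J} \mathcal M_j$ lying in $\overline{C(\bm\sigma)}$; hence each $\mathcal M_j$, $j\in J$, meets the closure of the cell and bounds it in the sense demanded by the main theorem.

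The point deserving the most care is the quotient step: one must verify that the quotient polyhedron really is pointed (this follows from dividing out exactly the lineality space) and that a vertex there lifts to a point of $P$ at which $\rank \mathbb T$ of the original constraints are simultaneously active with linearly independent normals. Both facts are standard polyhedral theory, but they should be invoked explicitly so that the selection of $J$ from possibly degenerate vertices is unambiguous.
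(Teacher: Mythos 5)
Your argument is correct, but it reaches the conclusion by a genuinely different route than the paper. The paper also passes to logarithmic coordinates, but then works with \emph{facets}: it invokes Farkas' lemma to strip the system $\mathbb T^\top\bm y\le\bm b$ down to a non-redundant subsystem (observing that rank is preserved, since any redundant row is a non-negative combination of the surviving rows) and then uses the bijection between non-redundant inequalities and facets of a full-dimensional polyhedron to conclude that at least $\rank\mathbb T$ balance manifolds carry facets of $\overline{C(\bm\sigma)}$, from which a linearly independent subset $J$ is extracted. You instead quotient out the lineality space $\ker\mathbb T^\top$, take a vertex of the resulting pointed polyhedron, and read $J$ off the active constraints there. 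Both are sound pieces of standard polyhedral theory, but they certify slightly different senses of ``boundary.'' The paper's construction guarantees that each $\mathcal M_j$, $j\in J$, contains an entire facet of the cell, which is the reading suggested by Definition of the full SCK subset; your construction only guarantees that each $\mathcal M_j$ is a supporting hyperplane through the chosen vertex, and at a degenerate vertex an active constraint may be weakly redundant and touch the cell only in a low-dimensional face. If the facet reading is intended, you should restrict your selection to the facet-defining constraints active at $\bm y^\star$ --- this costs nothing, since every vertex is the intersection of the facets containing it, so their normals already span the quotient space. On the other hand, your route buys something the paper's does not make explicit: because $\{\bm T_j\}_{j\in J}$ spans the row space of $\mathbb T$, the affine subspace $\bigcap_{j\in J}\tilde{\mathcal M}_j$ equals $\bm y^\star+\ker\mathbb T^\top=\bm y^\star+\mathrm{lin}(P)\subseteq P$, so the whole intersection $\mathcal M_J$ lies in $\overline{C(\bm\sigma)}$. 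This is exactly the adherence property of $\xeq$ that Step~1 of the main theorem later relies on, and which does not follow automatically from the facet-based choice of $J$ (distinct facets need not meet). It would strengthen your write-up to state this consequence explicitly rather than leaving it at ``meets the closure of the cell.''
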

\endgroup

\begin{proof}
Without loss of generality we assume \(\sigma_i=1\) for all \(i\).

As the closure of the cell is the polyhedron in the logarithm-transformed space, we rewrite the variables and parameters as \(\bm y = -\ln\bm x\) and \(\bm b = \ln\bm k\). Then, the closure of the cell is given by 
\[
  \overline{C(\bm \sigma)}
  = \bigl\{\bm y\in\mathbb{R}^N \mid \mathbb{T}^\top\,\bm y \;\le\;\bm b \bigr\}.
\]
Since \(C(\bm\sigma)\neq\emptyset\), this system has an interior point and is full-dimensional.\footnote{In general, it is possible that the combination of inequalities leads to the equality constraint (e.g. \(\bm a^\top\bm x\le b\) and \(-\bm a^\top\bm x\le -b\) together enforce \(\bm a^\top\bm x=b\).). Such resulting equalities are termed \textit{implicit equalities}. Full-dimensional polyhedra are the polyhedra without implicit equalities. Full-dimensional polyhedra have the same dimension as the space that the polyhedra are embedded.}

The $(N-1)$ dimensional boundaries of $N$ dimensional polyhedra are termed \textit{facet} and there is a useful theorem for identifying them: 
\begin{thm}[Theorem 8.1 in \cite{schrijver1998theory}]\label{thm:facets}
Let
\(\mathbb{A}\bm x\le\bm b\)
be a system of non-redundant inequalities in \(\mathbb{R}^N\).  Then there is a one-to-one correspondence between the facets of the polyhedron
\[
  P = \{\bm x\in\mathbb{R}^N \mid \mathbb{A}\bm x\le \bm b\}
\]
and the individual inequalities in \(\mathbb{A}\bm x\le\bm b\).  Explicitly, the \(i\)th facet is
\[
  F_i = \{\bm x\in P \mid \bm a_i^\top\bm x = b_i\},
\]
where \(\bm a_i\) is the \(i\)th row of \(\mathbb{A}\).
\end{thm}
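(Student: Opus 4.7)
The plan is to prove the stated bijection in two directions: (i) each hyperplane $H_i = \{\bm x : \bm a_i^\top \bm x = b_i\}$ meets $P$ in a face $F_i$ of codimension one, and (ii) every facet $F$ of $P$ arises as $F_i$ for exactly one index $i$. I would work under the convention that $P$ is full-dimensional, which is automatic in the intended application (Prop.~\ref{prop:facet}); if implicit equalities were present one would first factor them out and apply the theorem inside $\mathrm{aff}\, P$.

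For direction (i), the essential input is non-redundancy. Because the $i$-th inequality cannot be dropped from the system, there must exist a witness $\bar{\bm x}$ satisfying $\bm a_j^\top \bar{\bm x} \le b_j$ for all $j \ne i$ yet violating $\bm a_i^\top \bm x \le b_i$. Fixing an interior point $\bm x_0$ of $P$, where every inequality is strict, the segment from $\bm x_0$ to $\bar{\bm x}$ must cross $H_i$ at some $\bm x^\star$, and by continuity all other inequalities stay strict there. A small open ball inside $H_i$ around $\bm x^\star$ then lies entirely in $P$, so $F_i$ has dimension $N-1$ and is a genuine face because $H_i$ is a supporting hyperplane. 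This step is purely geometric and I expect no serious difficulty.

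For direction (ii), I would rely on the standard face characterization: every nonempty face of $P$ equals the set of points at which a fixed subset $I(F) = \{i : \bm a_i^\top \bm x = b_i \text{ on all of } F\}$ of constraints is tight, and $\mathrm{aff}\, F \subseteq \bigcap_{i\in I(F)} H_i$. Since $\dim F = N-1$, the right-hand side has dimension at least $N-1$, forcing the normals $\{\bm a_i\}_{i\in I(F)}$ to span a one-dimensional subspace; in particular, they are pairwise parallel. Non-redundancy then reduces $|I(F)|$ to one: two parallel codirectional inequalities would make the looser one redundant, while two opposite-pointing ones would jointly impose an implicit equality on $P$, contradicting full-dimensionality. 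The same dichotomy yields injectivity, since $F_i = F_j$ with $i\ne j$ would again make $\bm a_i, \bm a_j$ parallel and trigger one of those contradictions.

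The main obstacle is the ``one-dimensional span of active normals forces exactly one active inequality'' step at the core of (ii); it must simultaneously invoke non-redundancy and full-dimensionality, and one has to be careful to distinguish codirectional from antiparallel duplicates. The cleanest backup route, if the direct dimension argument feels fragile, is a Farkas-lemma derivation: any valid inequality that is tight on $F$ can be written as a nonnegative combination of the rows of $[\mathbb{A}\mid\bm b]$, and the requirement that the tight set have dimension $N-1$ forces the combination to involve a single row once non-redundancy has been used to eliminate parallel duplicates.
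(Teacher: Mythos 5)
This statement is quoted by the paper as Theorem~8.1 of Schrijver's \emph{Theory of Linear and Integer Programming} and is not proved in the paper at all, so there is no internal proof to compare against; what you have written is a self-contained proof of the cited classical fact, and it is essentially correct. Your direction (i) is the standard argument: irredundancy supplies a witness $\bar{\bm x}$ violating only the $i$-th constraint, the segment from a point of $P$ at which all constraints are strict to $\bar{\bm x}$ meets $H_i$ at a point where the remaining constraints are still strict (this follows from linearity along the segment, not merely ``continuity''), and a relatively open $(N-1)$-ball in $H_i$ around that point lies in $P$, so $F_i$ is a face of dimension $N-1$. Your direction (ii) correctly reduces to the rank-one condition on the active normals and then uses the codirectional/antiparallel dichotomy, which is exactly where irredundancy and full-dimensionality must both enter. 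Two small points you should make explicit if this were to be written out in full: first, irredundancy (together with $P\neq\emptyset$, guaranteed here since a facet exists) forces every row $\bm a_i$ to be nonzero, which you need before concluding that the active normals span a space of dimension exactly one; second, the ``every face is the set of points where a fixed subset of constraints is tight'' characterization is itself a nontrivial lemma (Farkas-type), so either cite it or fall back on the Farkas route you sketch at the end. Your framing of the full-dimensionality convention matches how the paper actually invokes the theorem in Prop.~S3, where the nonempty open cell $C(\bm\sigma)$ guarantees an interior point.
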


Here the inequalities \(\mathbb{A}\bm x\le\bm b\) are \emph{redundant} if removing one inequality from \(\mathbb{A}\bm x\le\bm b\) does not change the polyhedron. In general, \(\mathbb{T}\bm y\le\bm b\) may contain redundancies, so we apply Farkas' Lemma to eliminate them:

\begin{prop}[Farkas' Lemma]
Let \(\mathbb{A}\) be an $\mathbb R^{m\times d}$ matrix and let \(\bm \alpha \in\mathbb{R}^d\), \(\beta\in\mathbb{R}\). Then, \(\bm \alpha^\top\bm x\le \beta\) holds for all \(\bm x\) satisfying \(\mathbb{A}\bm x\le\bm b\) if and only if there exists \(\bm\lambda\ge\bm0\) such that \(\bm\lambda^\top\mathbb{A}=\bm \alpha^\top\) and \(\bm\lambda^\top\bm b \le \beta\).
\end{prop}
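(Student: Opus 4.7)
The plan is to prove the biconditional by treating the two directions separately: the ``if'' direction is an immediate application of weak duality, while the ``only if'' direction follows from separating a finitely generated cone in $\mathbb R^{d+1}$.

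For the sufficiency direction, suppose $\bm\lambda \geq \bm 0$ satisfies $\bm\lambda^\top \mathbb A = \bm\alpha^\top$ and $\bm\lambda^\top \bm b \leq \beta$. Then any $\bm x$ with $\mathbb A \bm x \leq \bm b$ obeys
\begin{equation*}
\bm\alpha^\top \bm x \;=\; \bm\lambda^\top \mathbb A \bm x \;\leq\; \bm\lambda^\top \bm b \;\leq\; \beta,
\end{equation*}
where the middle inequality uses only the componentwise non-negativity of $\bm\lambda$ paired with $\mathbb A \bm x \le \bm b$. This step is elementary and requires no geometric machinery.

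For the necessity direction, I would form the finitely generated cone
\begin{equation*}
K \;=\; \bigl\{\, (\mathbb A^\top \bm\lambda,\ \bm b^\top \bm\lambda + t) \in \mathbb R^{d+1} \,:\, \bm\lambda \geq \bm 0,\ t \geq 0 \,\bigr\},
\end{equation*}
whose points encode precisely the certificates produced by the conclusion of the lemma, so showing $(\bm\alpha,\beta) \in K$ is equivalent to exhibiting the required $\bm\lambda$. Assuming $(\bm\alpha,\beta) \notin K$, closedness of $K$ permits hyperplane separation, yielding $(\bm y, r) \in \mathbb R^{d+1}$ satisfying $(\bm y, r)\cdot(\bm\alpha,\beta) < 0$ together with $(\bm y, r)\cdot(\bm z, s) \geq 0$ for every $(\bm z, s) \in K$. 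Testing against the generator $(\bm 0, 1)$ forces $r \geq 0$; testing against each generator $(\mathbb A_{i*}^\top, b_i)$ yields $\mathbb A \bm y + r \bm b \geq \bm 0$ componentwise.

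A case split on $r$ then closes the argument. If $r > 0$, rescaling $\bm x = -\bm y/r$ gives a point with $\mathbb A \bm x \leq \bm b$, and the separation inequality rearranges to $\bm\alpha^\top \bm x > \beta$, contradicting the hypothesis. If $r = 0$, then $\mathbb A \bm y \geq \bm 0$ and $\bm\alpha^\top \bm y < 0$; starting from any feasible $\bm x_0$ (which exists since the polyhedron under consideration in Prop.~\ref{prop:facet} is non-empty), the ray $\bm x_0 - t\bm y$ remains feasible for every $t \geq 0$ while $\bm\alpha^\top(\bm x_0 - t\bm y) \to +\infty$, again contradicting the hypothesis. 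The principal obstacle will be closedness of $K$, which underpins the separation step; I would either invoke the Farkas-Minkowski-Weyl correspondence between finitely generated and polyhedral cones, or justify it directly by noting that $K$ is the linear image of the closed orthant $\mathbb R_{\geq 0}^{m+1}$, whose image is polyhedral and hence closed. The degenerate case of an infeasible primal system makes the hypothesis vacuous and technically requires separate handling, but it does not arise in the paper's application.
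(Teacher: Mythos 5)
The paper does not actually prove this proposition: it is stated as the classical (affine) Farkas Lemma and used as a black box in the proof of Prop.~S3, with no in-paper argument to compare against. Your proposal is a correct, self-contained proof by the standard route: the sufficiency direction is exactly weak duality, and the necessity direction correctly encodes the certificates as membership of $(\bm\alpha,\beta)$ in the finitely generated cone $K$, separates, reads off $r\ge 0$ and $\mathbb A\bm y + r\bm b\ge \bm 0$ from the generators, and disposes of both branches of the case split ($r>0$ produces a feasible point violating $\bm\alpha^\top\bm x\le\beta$; $r=0$ produces an unbounded feasible ray along which $\bm\alpha^\top\bm x\to+\infty$). The only technical debt is the closedness of $K$, and your two proposed justifications are really the same fact (the Minkowski--Weyl theorem that a finitely generated cone is polyhedral, hence closed); either is acceptable, but be aware that it is this polyhedrality, not a general ``linear images of closed sets are closed'' principle, that does the work. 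You are also right to flag feasibility: as printed, the proposition is false for an infeasible system $\mathbb A\bm x\le\bm b$ (take $\mathbb A=\bm 0$, $\bm b\prec\bm 0$, $\bm\alpha\neq\bm 0$: the hypothesis is vacuous but no $\bm\lambda$ with $\bm\lambda^\top\mathbb A=\bm\alpha^\top$ exists), so the statement implicitly assumes a non-empty polyhedron; this is harmless here because Prop.~S3 assumes $C(\bm\sigma)\neq\emptyset$, but it is a genuine, if minor, imprecision in the paper's statement that your proof correctly isolates.
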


By iteratively removing the redundant inequalities, one obtains the non-redundant inequalities defining the polyhedron
\begin{equation}
  \mathbb{T}^*\,\bm y\le \bm b^*. \label{eq:irredundant_ineq}
\end{equation}
From Farkas' lemma, the rank of the matrix $\mathbb T$ does not change by the removal of redundant inequalities. Thus, \(\rank \mathbb{T}^* = \rank \mathbb{T}\) holds. From Thm.~\ref{thm:facets}, there are at least \(\rank\mathbb{T}\) facets of \(\overline{C(\bm \sigma)}\), each corresponding to a balance manifold. One can construct the index set $J$ by selecting the reaction index such that the corresponding column vectors of \(\rank\mathbb{T}^*\) become all linearly independent. Then, $|J|=\rank \mathbb T_J=\rank \mathbb T$ and the balance manifolds $\{{\cal M}\}_{j\in J}$ correspond to the boundary of the cell $C(\bm \sigma)$.
\end{proof}

\section{Derivation of the non-SCK from SCK reactions}

Let us consider the reaction $$2X\leftrightharpoons Y,$$ catalyzed by enzyme E. 
An ordinary reaction diagram for this reaction is shown in Fig.~\ref{fig:diagram}(a), which leads to the following coarse-grained reaction kinetics when the quasi-equilibrium method is applied:
$$v\propto\frac{[X]^2-k[Y]}{K_M+[X]/K_{X,1}+[X]^2/K_{X,2}+[Y]/K_Y},$$
where $[\cdot]$ is the concentration of the corresponding chemical. $K_*$s are the Michaelis-Menten constant set by the forward and backward reaction rate constants of each elementary step. $k$ is the reversibility parameter. Since we focus only on the function form, we do not go into the details of the parameters. The point here is that the thermodynamic part of this kinetics is given by $[X]^2-k[Y]$, and the reaction order matches the stoichiometry of each chemical, i.e., the reaction rate function is stoichiometrically compatible kinetics (SCK).

Another reaction diagram is shown in Fig.~\ref{fig:diagram}(b), where the irreversible maturation process $EX\to E^*X$ is introduced. Owing to the irreversibility of the maturation process, the process $E+X\leftrightharpoons EX\to E^*X \leftrightharpoons E^*XX\to E+Y$ becomes irreversible \footnote{Even if the final step is made reversible, the following argument does not change qualitatively. We set the final step irreversible for the ease of calculation.}. Thus, we must introduce another diagram for the backward reaction $Y\to 2X$ to make the coarse-grained reaction reversible. The diagram for the backward reaction is the right side of the diagram. The dynamics of the concentrations is given by the following equation

\begin{eqnarray}
    \dv{[E]}{t}&=&-k^a_+[E][X]+k^a_-[EX]+v[E^*XX]-l^a_+[E][Y]+(l^a_-+w)[EY]\\
  \dv{[EX]}{t}&=&k^a_+[E][X]-k^a_-[EX]-k^b_+[EX]\\
  \dv{[E^*X]}{t}&=&k^b_+[EX]-k^c_+[E^*X][X]+k^c_-[E^*XX]\\
  \dv{[E^*XX]}{t}&=&k^c_+[E^*X][X]-k^c_-[E^*XX]-v[E^*XX]\\
  \dv{[EY]}{t}&=&l^a_+[E][Y]-l^a_-[EY]-w[EY].
\end{eqnarray}
The \textit{de-novo} production of the enzyme is not considered within the timescale that we focus on, and the total concentration of the enzyme is conserved: $$[E]+[EX]+[E^*X]+[E^*XX]+[EY]=[E]_T={\rm const.}$$ 

The steady-state solutions of the final complexes $E^*XX$ and $EY$ are given by 
\begin{eqnarray}
{[E^*XX]}&=&\frac{k^a_+k^b_+}{v(k^a_-+k^b_+)}[E][X]\\
{[EY]}&=&\frac{l^a_+}{l^a_-+w}[E][Y]
\end{eqnarray}
Thus, the net reaction flux is given by 
$$\tilde{v}=v[E^*XX]-w[EY]=\tilde{v}_{\rm max}[E]\bigl([X]-\tilde{k}[Y]\bigr),$$
where $\tilde{v}_{\rm max}=\frac{k^a_+k^b_+}{k^a_-+k^b_+}$ and $\tilde{k}=\frac{wl^a_+}{l^a_-+w}\frac{k^a_-+k^b_+}{k^a_+k^b_+}$. Note that $[E]$ is the concentration of the free enzyme and $\tilde{v}_{\rm max}[E]$ leads to the kinetic part of the reaction rate function. The thermodynamic part is given by $[X]-\tilde{k}[Y]$, and thus, it is non-SCK. 

In this derivation, the enzyme must recognize the metabolite with which the complex is formed, and the subsequent reaction is determined. In addition, for non-SCK kinetics, the reactions $EX\to E^*X$ and $EY\to E+2X$ must be irreversible. Thus, this reaction $2X\leftrightharpoons Y$ needs to be externally driven.

\begin{figure}[htbp]
    \begin{center}
    \includegraphics[width = 160 mm, angle = 0]{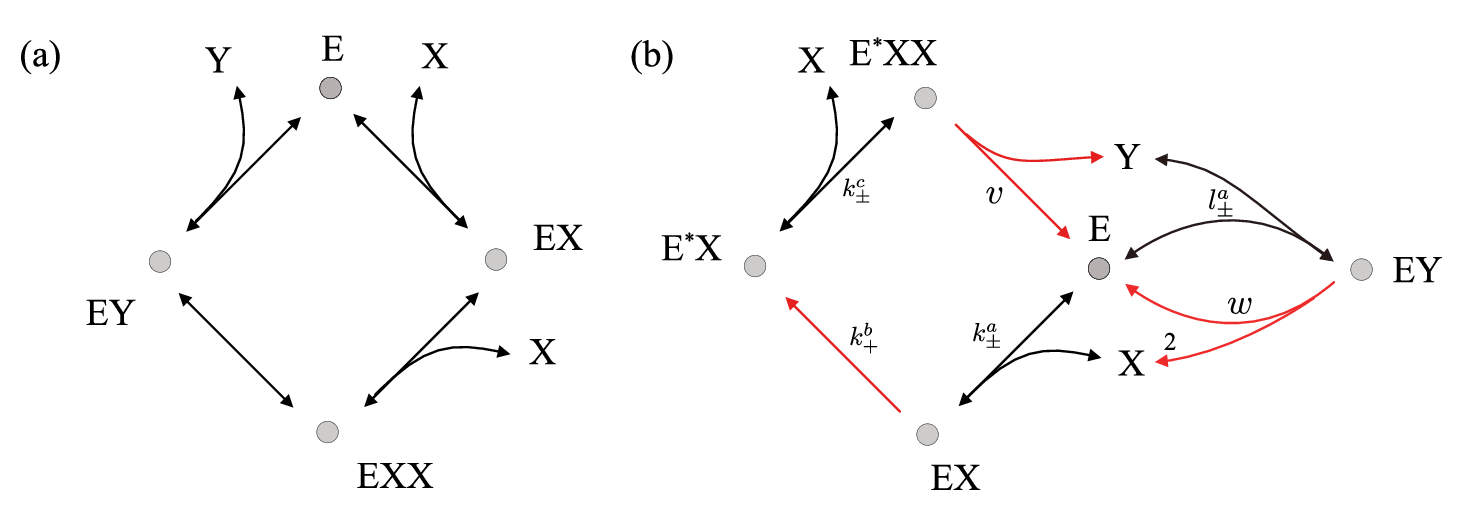}
    \caption{(a) A reaction diagram for SCK. (b) A reaction diagram for non-SCK. The coarse-grained reaction scheme is the same with (a), $2X\leftrightharpoons Y$. The black arrows and red arrows represent the reversible reactions and irreversible reactions, respectively. ``2'' on the red arrow on the right bottom represents that two molecules of X are produced by the reaction. The corresponding symbols of rate constants of the reactions are put on the arrows in (b).}
        \label{fig:diagram}
      \end{center}
    \end{figure}

{\color{black}
\section{Construction of the sub-network and its link to thermodynamic consistency}
We have discussed the relationship between Stoichiometrically Compatible Kinetics (SCK) and thermodynamic consistency in the main text. In this section, we clarify how the active sub-network used in our proof is constructed and, more importantly, why our results do not rely on global parameter constraints often associated with thermodynamic consistency. In this section, we term the model with some reactions turned-off for the relaxation to the equilibrium state the \textit{reduced model}.

Specifically, while thermodynamic consistency in a cyclic network generally requires the kinetic parameters to satisfy the detailed-balance condition at equilibrium, our theorem ensures global controllability even when such conditions are violated. We illustrate this point using the Onsager model as an example.

The Onsager model is given by the following reactions:
\begin{eqnarray*}
    \text{R}_1&:&\text{A}_{\rm ext}\leftrightharpoons\text{A}\\
\text{R}_2&:&\text{A}\leftrightharpoons \text{B}\\
\text{R}_3&:&\text{B}\leftrightharpoons \text{C}\\
\text{R}_4&:&\text{C}\leftrightharpoons \text{A}\\
    \text{R}_5&:&\text{C}_{\rm ext}\leftrightharpoons\text{C}
\end{eqnarray*}
the schematic of the reaction network is given by Fig.~\ref{fig:onsager}(a). The ODE with control is given by 
\begin{eqnarray}
\dv{}{t}\begin{pmatrix}
    a\\b\\c
\end{pmatrix}
=
\begin{pmatrix}
    1&-1&0&0&0\\
    0&1&-1&0&0\\
    0&0&1&-1&1
\end{pmatrix}
\begin{pmatrix}
u_1\\u_2\\u_3\\u_4\\u_5
\end{pmatrix}
\odot
\begin{pmatrix}
v_1\\v_2\\v_3\\v_4\\v_5
\end{pmatrix}
\end{eqnarray}
Here, we consider a simple case that the reaction rate function follows the mass-action kinetics:
    
\begin{eqnarray}
v_1&=&k_1^+a_{\rm ext}-k_1^-a,\\
v_2&=&k_2^+a-k_2^-b,\\
v_3&=&k_3^+b-k_3^-c,\\
v_4&=&k_4^+c-k_4^-a,\\
v_5&=&k_5^+c_{\rm ext}-k_5^-c.
\end{eqnarray}

The rank of the stoichiometric matrix is three, and thus, we need to turn off two reactions to obtain a reduced model that relaxes to the equilibrium state. The choice of the reactions turned off depends on the target of the control.

Note that the model has a reaction cycle $\text{A}\to\text{B}\to\text{C}\to\text{A}$ and its reverse. Let us consider the model with these three reactions, $\text{R}_2,\text{R}_3$ and $\text{R}_4$. The detailed-balance condition for the reaction cycle is given by 
\begin{equation}
\frac{k_2^+k_3^+k_4^+}{k_2^-k_3^-k_4^-}=1. \label{eq:onsager_param}
\end{equation}
If this condition is not satisfied, the relaxation to equilibrium is prohibited, but the cycle current persists, i.e., the system relaxes to the non-equilibrium steady state. 

The remark of this section is that our theorem (Theorem.~\ref{thm:main} and \ref{thm:main_ter}) holds even if a model does not satisfy the condition on the parameters like Eq.\eqref{eq:onsager_param}. It is because for the construction of the reduced model, at least a single reaction in each cycle is turned off. According to Proposition.\ref{prop:facet}, for any non-empty cell $C(\bm \sigma)$ there is an index set $J$ such that the balance manifolds $\{{\cal M}_j\}_{j\in J}$ are the boundaries of the cell and the stoichiometric matrix consisting of the reactions in $J$, $\mathbb S_J$, satisfies ${\rm rank}\ \mathbb S={\rm rank}\ \mathbb S_J=|J|$. Therefore, the stoichiometric vectors of the reactions in $J$ must be linearly independent, and no cycle can be in the reaction network $\mathbb S_J$.

In the Onsager model, the rank of the full stoichiometric matrix is three, and thus, we need to select three reactions to keep activated and the others are turned off. If we construct a reduced model with reaction $\text{R}_2,\text{R}_3$ and $\text{R}_4$, the stoichiometric matrix $\mathbb S_J$ is not full-ranked, but ${\rm rank }\ \mathbb S_J=2$. All allowed choices of the reactions so that ${\rm rank }\ \mathbb S={\rm rank }\ \mathbb S_J=|J|$ are illustrated in Fig.~\ref{fig:onsager}(b). Readers can see that every reduced model has the detailed-balanced equilibrium as its steady-state. 

Two choices $\{\text{R}_2,\text{R}_3,\text{R}_4\}$ and $\{\text{R}_1,\text{R}_4,\text{R}_5\}$ do not satisfy the rank condition. Consistently, the steady-state of the reduced model with either $\{\text{R}_2,\text{R}_3,\text{R}_4\}$ or $\{\text{R}_1,\text{R}_4,\text{R}_5\}$ is, in general, the non-equilibrium steady-state (Fig.\ref{fig:onsager}(c)).

\begin{figure}[htbp]
    \begin{center}
    \includegraphics[width = 150 mm, angle = 0]{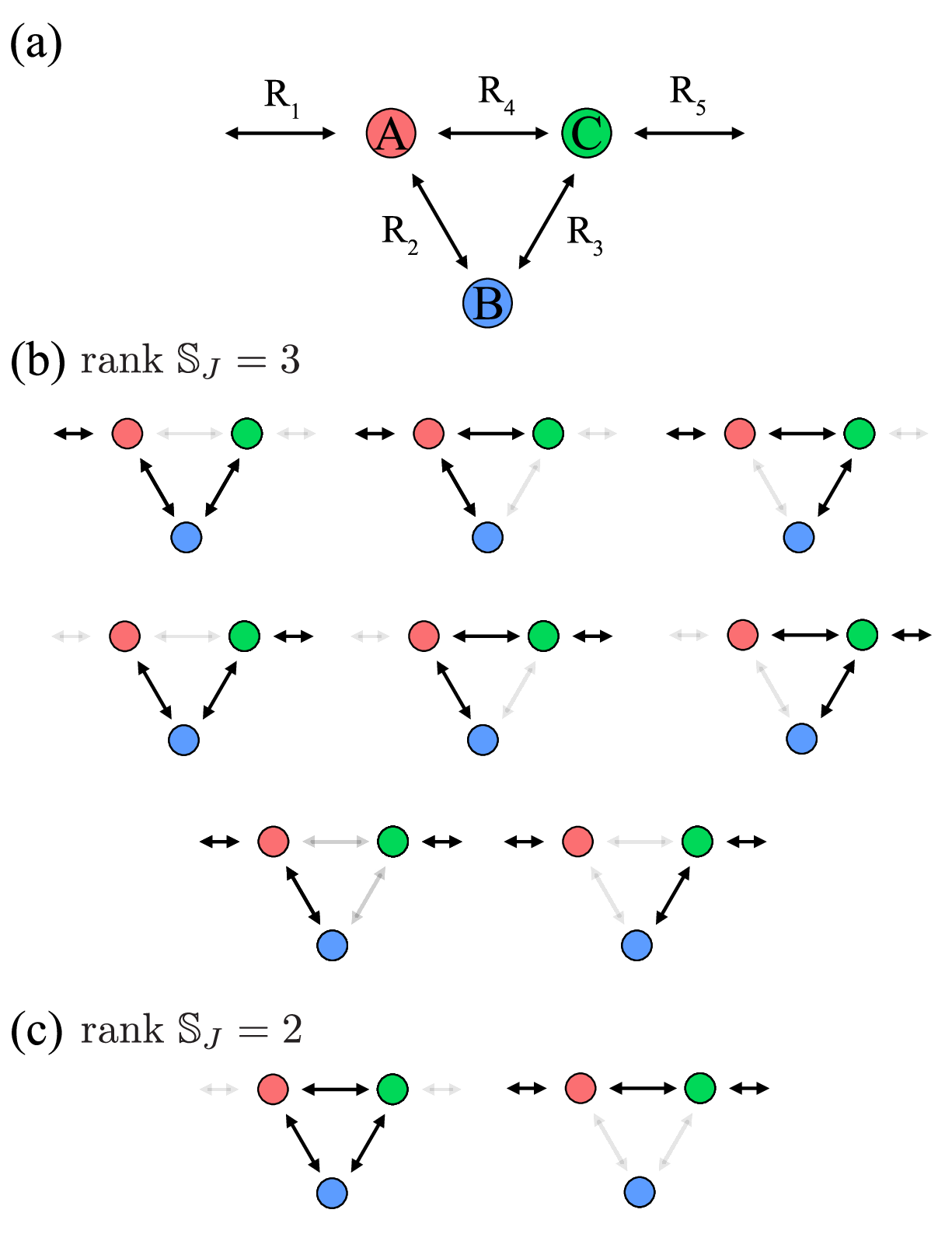}
    \caption{\tcr{(a) A reaction network of the Onsager model. (b) The allowed combinations of reactions to be kept active for constructing the reduced model that relaxes to the equilibrium state (${\rm rank}\mathbb S_J=3$), (c) and does not relax to the equilibrium state but to the non-equilibrium steady state (${\rm rank}\mathbb S_J=2$). The external A and C are not depicted for the ease of illustration.}}
        \label{fig:onsager}
      \end{center}
    \end{figure}

\tcr{While the detailed balance condition is not required for global controllability to states within free cells, it is nevertheless noteworthy that the detailed balance condition determines the number of free cells. 
Figure~\ref{fig:onsager_freecell} illustrates the free cells of the Onsager model for different choices of the reversibility parameters. In Fig.~\ref{fig:onsager_freecell}(a), the detailed balance condition is violated, and there exists a free cell in which a net reaction flow circulates along the cycle $A \to B \to C \to A$. 
By contrast, when the detailed balance condition is satisfied, only a single free cell remains, in which no net flow exists along the cycle as shown in Fig.~\ref{fig:onsager_freecell}(b).}

\begin{figure}[htbp]
    \begin{center}
    \includegraphics[width = 150 mm, angle = 0]{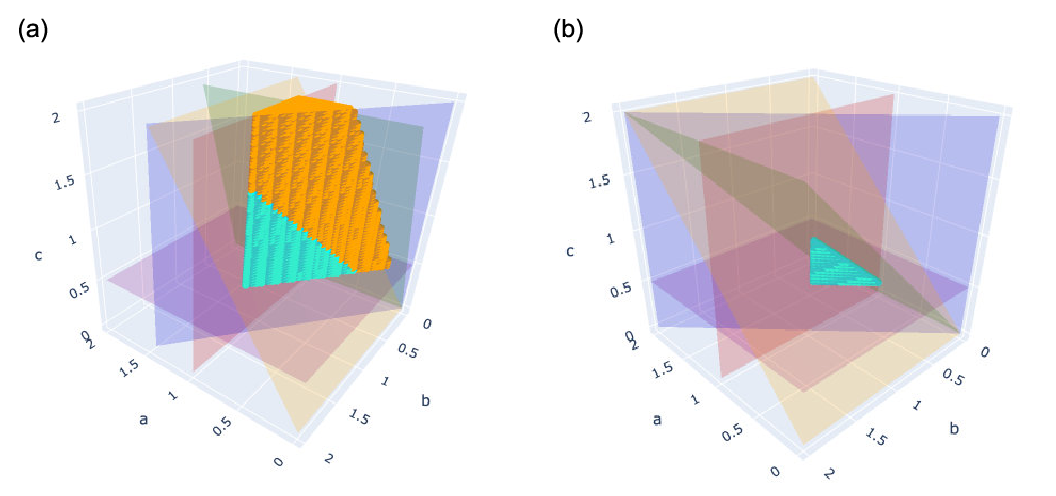}
    \caption{\tcr{The union of free cells in the Onsager model for different choices of the reversibility constants $\bm{k}$. (a) A case in which detailed balance is violated, 
$\left(k_2^+ k_3^+ k_4^+ / k_2^- k_3^- k_4^-\right) \neq 1$. The orange and cyan regions correspond to free cells. In the orange cell, the reaction direction is given by $(1,1,1,1,-1)$, indicating a net reaction flow along the cycle $A \to B \to C \to A$. In contrast, in the cyan cell, the reaction direction is $(1,1,1,-1,-1)$, for which no net flow exists along the cycle. (b) A case in which detailed balance is satisfied, $\left(k_2^+ k_3^+ k_4^+ / k_2^- k_3^- k_4^-\right) = 1$. In this case, there is only a single free cell, whose reaction direction is given by $(1,1,1,-1,-1)$. The red, blue, green, yellow, and purple planes are the balance manifolds of the reaction $\text{R}_1$, $\text{R}_2$, $\text{R}_3$, $\text{R}_4$, and $\text{R}_5$, respectively. The following parameters are identical in panels (a) and (b): $a_{\mathrm{ext}} = 1$, $c_{\mathrm{ext}} = 0.1$, $k_1^+ = 1.0$, $k_1^- = 1.0$, $k_5^+ = 5.0$, and $k_5^- = 1.0$. In panel (a), the parameters are set to $k_2^+ = 2.0$, $k_2^- = 1.5$, $k_3^+ = 3.0$, $k_3^- = 1.0$, $k_4^+ = 0.7$, and $k_4^- = 1.0$, whereas in panel (b) all of $k_2^+$, $k_2^-$, $k_3^+$, $k_3^-$, $k_4^+$, and $k_4^-$ are set to unity.
}}
        \label{fig:onsager_freecell}
      \end{center}
    \end{figure}

In this section we used the Onsager model as a simple model for highlighting the link between the control and thermodynamic consistency. However, note that in linear models such as the Onsager model, the controllability to the free cell is guaranteed from the chemical reaction network theory \cite{Feinberg2019}. We consider the model given by 
$$\dv{\bm x}{t}=\mathbb S\bm u(t)\odot \bm v(\bm x),$$
where $v(\bm x)$ is linear on $\bm x$ (monomolecular), and all reactions are reversible. According to the deficiency zero theorem, every reversible monomolecular reaction system has a unique fixed point in the positive stoichiometrically compatible class which is globally asymptotically stable for a given parameter set. Given the uniqueness of the fixed point and Prop.~\ref{prop:fixed_point}, the model has only a single free cell for a given parameters other than $\bm u$. Thus, for an arbitrarily selected constant control $\bm u(t)=\bm u^c \succ\bm 0$, all states converge to the attractor inside the free cell, and there is no other free cell. Since the control between any two states in the same free cell is possible, the controllable set of a given state in the free cell is a whole stoichiometric compatibility class.

\section{Absence of uncontrollable states in three biochemical models}
We have shown that the non-SCK kinetics is a key factor that the model has uncontrollable states to a free cell, and indeed, there were regions from which no control is possible to one of the two free cells in the toy model introduced in the section V in the main text. Here we show, however, implementing reactions with non-SCK kinetics does not necessarily result in the existence of uncontrollable states to a free cell. In this section we show the absence of such states in the popular biochemical models: Sel'kov model, Brusselator model, and Schnakenberg model.

The original Sel'kov model is a minimal model of the glycolitic oscillation, and is given by the following ODE
\begin{eqnarray}
\dv{x}{t}&=&b-(a+y^\gamma)x,\\
\dv{y}{t}&=&(a+y^\gamma)x -y.
\end{eqnarray}
The Sel'kov model is a model focusing on the positive feedback regulation of the phosphofructokinase (PFK) whose reaction stoichiometry is $$\text{fructose-6-phosphate}+\text{ATP}\leftrightharpoons\text{fructose-1,6-bisphosphate}+\text{ADP}.$$
In the Sel'kov model, the substrates and products of the reaction are lumped up to a single variable $X$ and $Y$, respectively. 

There are three reactions
\begin{eqnarray*}
    \text{R}_1&:&\emptyset\to \text{X}\\
\text{R}_2&:&\text{X}\to \text{Y}\\
\text{R}_3&:&\text{Y}\to \emptyset,
\end{eqnarray*}
where $\emptyset$ represents the external environment. The second reaction is the PFK reaction and PFK enzyme activity is positively regulated by Y (ADP) as $(a+y^\gamma)$. $a$ is the basal PFK activity and the activity increases as $y$ increase. For $\gamma>1$, there is a parameter region of $a$ and $b$ in which the Hopf bifurcation occurs and limit cycle attractor emerges.  

To study the controllability of the Sel'kov model, first we need to make the reactions reversible. In the original Sel'kov model, all reactions are irreversible, meaning that there is no balance manifold in $\mathbb R_{>0}^2$ and there is a single cell. Indeed, this cell is a free cell, and thus, control between arbitrarily chosen pair of states is always possible. The freeness of the cell is guaranteed by Proposition~\ref{prop:fixed_point}. The proposition states that a fixed point can exist only in free cells and if there is a fixed point in a cell, the cell is a free cell. The original Sel'kov model has the fixed point 
$$x=\frac{b}{a+b^\gamma},y=b$$
regardless of the parameter choice. Thus, the whole positive orthant  $\mathbb R_{>0}^2$ is a free cell.

To avoid this trivial controllability consequence, we need to introduce the reversibility to the model. We modify the reactions as follows;
\begin{eqnarray}
    \text{R}_1&:&\text{X}_{\rm ext}\leftrightharpoons \text{X}\label{eq:rxn_eq_selkov1}\\
\text{R}_2&:&\text{X}\leftrightharpoons \text{Y} \label{eq:rxn_eq_selkov2}\\
\text{R}_3&:&\text{Y}\leftrightharpoons \text{Y}_{\rm ext}. \label{eq:rxn_eq_selkov3}
\end{eqnarray}
We updated the model equation accordingly. By additing the control parameter $\bm u$, now the model equation is  given by 
\begin{eqnarray}
\dv{x}{t}&=&u_1(t)(b-x)-u_2(t)(a+y^\gamma)(x-k_2 y),\\
\dv{y}{t}&=&u_2(t)(a+y^\gamma)(x-k_2y) + u_3(t)(c-y),
\end{eqnarray}
where we have newly introduded parameters $c$ and $k_2$ representing the external concentration of Y, and the reversibility parameter of the reaction $R_2$, respectively. The reversibility parameter should also be introduced for the reaction $R_1$ and $R_3$, though the reversibility parameters of those reactions are considered to be absorbed to the external concentration and maximum reaction rate. The maximum reaction rate of each reaction is finally set to unity by absorbing it to the control parameter $u_i(t)$. 

Now the modified model is fully implemented by SCK. Here, we allow the reactions to be implemented by non-SCK 
\begin{eqnarray}
\dv{x}{t}&=&u_1(t)(b-x)-u_2(t)(a+y^\gamma)(x^\alpha-k_2 y^\beta),\label{eq:selkov_x}\\
\dv{y}{t}&=&u_2(t)(a+y^\gamma)(x^\alpha-k_2y^\beta) + u_3(t)(c-y).\label{eq:selkov_y}
\end{eqnarray}
The introduction of $\alpha$ and $\beta$ ($\alpha,\beta\geq 0$) allows us to make the reaction rate kinetics to deviate from SCK. Here, we did not make the reaction rate function for $R_1$ and $R_3$ be non-linear on $x$ and $y$, respectively. This is because even if we introduce nonlinearity, the balance manifold is inherently unchanged, that is, the balance manifold of the reaction $R_1$ with its rate function as $b-x$ is given by $${\cal M}_1=\{(x,y)\in\mathbb R^2_{>0}\mid x=b\},$$ while with rate function as $b-x^\delta$ is given by $${\cal M}'_1=\{(x,y)\in\mathbb R^2_{>0}\mid x=b^{1/\delta}\}.$$ Thus, the difference of ${\cal M}_1$ and ${\cal M}'_1$ is recovered by changing $b$ value, and introduction of $\delta$ does not qualitatively change the structure of the balance manifold.

In this extension we allowed the reaction order to be an arbitrary positive real number. However, we do not allow the chemical species which are not in the reaction equation to have non-zero value; for example, in the reaction $R_2$ (Eq.\eqref{eq:rxn_eq_selkov2}), we do not allow the reaction rate function to have a form like $(x^\alpha y^\mu-k_2 y^\beta)$ with $\mu\neq 0$ because Y is not a reactant of the forward reaction of $R_2$. Note that the substrate-level regulations are possible within this framework; for example, the reaction rate function of $R_2$ can have a form like $x^\lambda(a+y^\gamma)(x^\alpha-k_2 y^\beta)$, where the regulation term $x^\lambda(a+y^\gamma)$ modulates the enzyme activity but this part does not affect the controllability because it is the kinetic part of the reaction rate function and is absorbed to the control parameter $u_2(t)$.

The balance manifolds of Eq.\eqref{eq:selkov_x} and \eqref{eq:selkov_y} are given by 
\begin{eqnarray}
    {\cal M}_1&=&\{(x,y)\in\mathbb R^2_{>0}\mid x=b\} \label{eq:selkov_balance1}\\
    {\cal M}_2&=&\{(x,y)\in\mathbb R^2_{>0}\mid y=(x^\alpha/k_2)^{1/\beta}\}\\
    {\cal M}_3&=&\{(x,y)\in\mathbb R^2_{>0}\mid y=c\}\label{eq:selkov_balance3}
\end{eqnarray}
Note that the regulation term of the reaction $R_2$, $(a+y^\gamma)$, does not appear in the balance manifold ${\cal M_2}$. This is because the change in the overall rate (enzyme activity) is a part of the kinetic part in the decomposition of the reaction rate function into the thermodynamic and kinetic parts, $v(\bm x)=p(\bm x)f(\bm x)$. The kinetic part is absorbed into the control parameter $u$ and does not affect the controllability; only the thermodynamic part determines the structure of the cells and controllability.

In Eq.\eqref{eq:selkov_balance1}-\eqref{eq:selkov_balance3}, only ${\cal M}_2$ can change its functional shape while ${\cal M}_1$ and ${\cal M}_3$ are the parallel line to the Y-axis and X-axis, respectively. By chaning $\alpha$ and $\beta$, ${\cal M}_2$ can be convex, linear, and concave function, while it is the monotonically increasing function of $x$ regardless of the values of $\alpha,\beta$, and $k_2$ in $\alpha,\beta,k_2>0$. As shown in Fig.~\ref{fig:selkov}, the possible configuration of the cells is fully determined by the order that ${\cal M}_2$ has the intersection with ${\cal M}_1$ and ${\cal M}_3$ (e.g., first ${\cal M}_2$ intersects with ${\cal M}_1$ and then ${\cal M}_3$) regardless of whether ${\cal M}_2$ is convex, linear, or concave function because ${\cal M}_2$ is a monotonically increasing function of $x$.

\begin{figure}[htbp]
    \begin{center}
    \includegraphics[width = 150 mm, angle = 0]{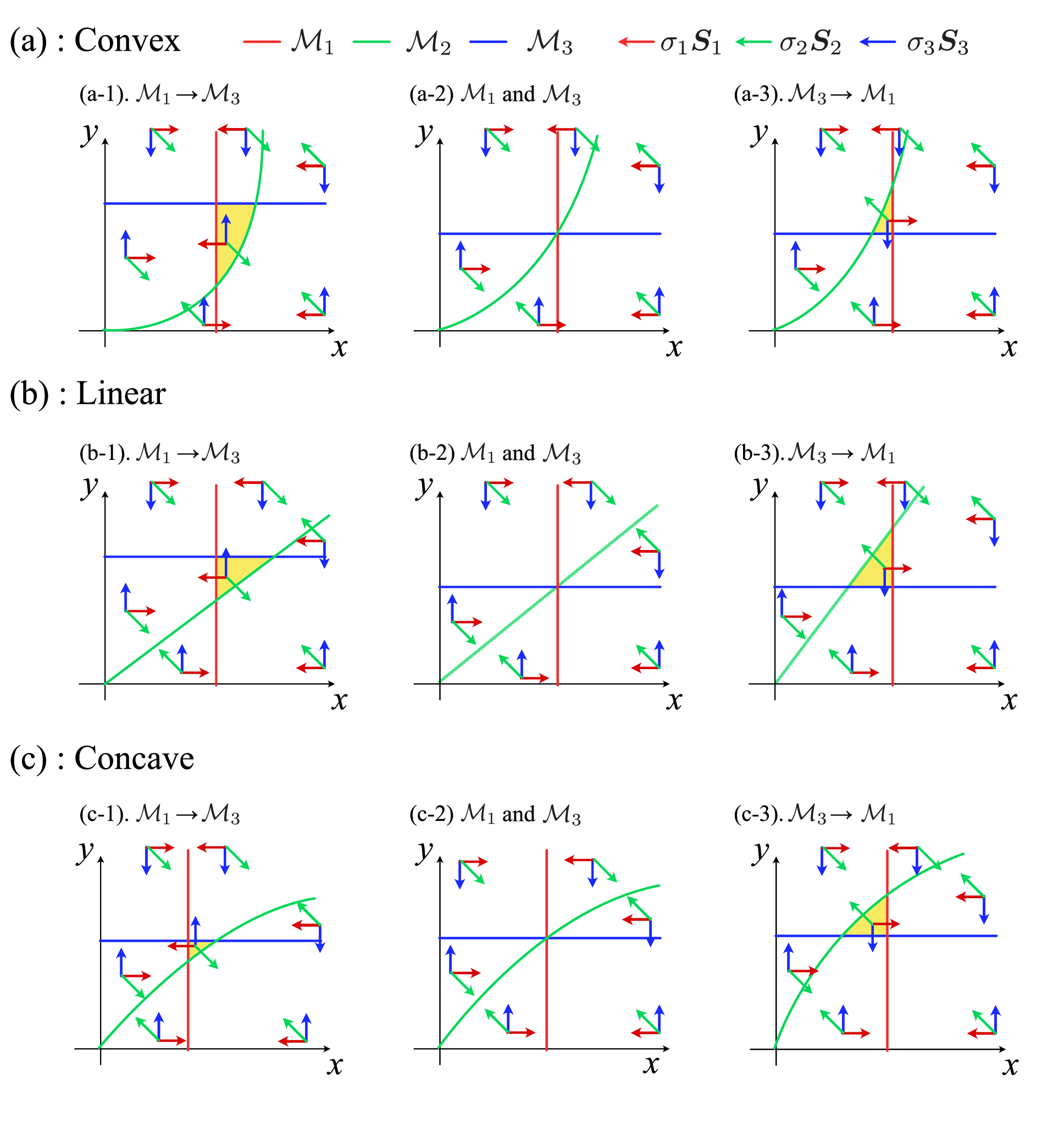}
    \caption{\tcr{All configurations of the cells by changing the parameters $a,b,c,\alpha,\beta$, and $k_2$ which are not control inputs but fixed constants. By changing $\alpha$ and $\beta$, the balance manifold ${\cal M}_2$ becomes the convex, linear, or concave function. The modulation of the parameters changes the order that ${\cal M}_2$ intersects with ${\cal M}_1$ and ${\cal M}_3$. When we follow ${\cal M}_2$ from the origin, the possible intersection patterns are as follows; (1) first intersects with ${\cal M}_1$ and then intersects with ${\cal M}_3$, (2) intersects with ${\cal M}_1$ and ${\cal M}_3$ at an indentical point, and (3) first intersects with ${\cal M}_3$ and then intersects with ${\cal M}_1$. In the figure, we showed the order of intersection for three different functional forms of ${\cal M}_2$, while the configuration of cells are the same once the order of intersection is determined, regardless of the functional form. The red, green, and blue lines or curve are the balance manifold ${\cal M}_1$, ${\cal M}_2$, and ${\cal M}_3$, respectively. The directed stoichiometric vectors of the reaction $R_1$, $R_2$, and $R_3$ are represented by the red, green, and blue arrows in each cell. The cell filled with yellow is the free cell.}}
        \label{fig:selkov}
      \end{center}
    \end{figure}

As shown in Fig.~\ref{fig:selkov}, the cell at the center is the only free cell in all cases. The case where ${\cal M}_2$ has an intersection simultaneously with ${\cal M}_1$ and ${\cal M}_3$ is an exception. In this case, only the intersection point is the free cell\footnote{By definition, a cell with a single point is a free cell because ${\rm span}\{\bm 0\}={\rm cone}\{\bm 0\}$. Also, since such a cell consists of only a single point, any points from the cell are mutually controllable.}. 

In the both cases of order of intersection, there is always a control to the free cell at the center. First, reaching a balance manifold ${\cal M}_3$ using the reaction $R_3$. If it is on the boundary of the free cell, by using the conical combination of the reaction $R_1$ and $R_2$, the state is controlled to the free cell. If it is not on the boundary of the free cell, by using only the reaction $R_1$, the state is controlled to the intersection of ${\cal M}_3$ and either ${\cal M}_1$ or ${\cal M}_2$, i.e., on a vertex of the free cell. As shown in the proof of the main theorem, the state is controlled into the free cell from a vertex of the free cell. Thus, the arbitrary state is controllable to the free cell.

The same argument is applied for the Schnakenberg model. The model is extended to have the reversible reactions, non-SCK rate functions, and the control as follows;

\begin{eqnarray}
\dv{x}{t}&=&u_1(t)(a-x)-u_2(t)(b-x)+u_3(t)x^2(y^\beta-k_3 x^\alpha),\label{eq:schnakenberg_x}\\
\dv{y}{t}&=&u_3(t)x^2(y^\beta-k_3 x^\alpha) + u_4(t)(c-y).\label{eq:schnakenberg_y}
\end{eqnarray}
In the Schnakenberg model, there are two exchange reactions of chemical X. As the function form of the third reaction is nonlinear $x^2(y^\beta-k_3 x^\alpha)$, the balance manifold can be convex, linear, or concave depending on $\alpha$ and $\beta$ values. But with the same argument that we have presented for the Sel'kov model, the configuration of the cell does not depend on whether the ${\cal M}_3$ is convex, linear, or concave. The all configurations of the cells are shown in Fig.~\ref{fig:schnakenberg}. The congulations of the cells are symmetric for the exchange of position of the balance manifold ${\cal M}_1$ and ${\cal M}_2$. Thus, we depict only the cases where ${\cal M}_2$ is located at higher value of X-coordinate than that of ${\cal M}_1$. An inherent difference in the Schnakenberg model is that it has two connected free cells. But in any cases, every state is controllable to any state in the free cells by utilizing the reaction $R_4$ to reach a sufficiently close state to ${\cal M}_4$, and then reach the free cells along ${\cal M}_4$.

\begin{figure}[htbp]
    \begin{center}
    \includegraphics[width = 160 mm, angle = 0]{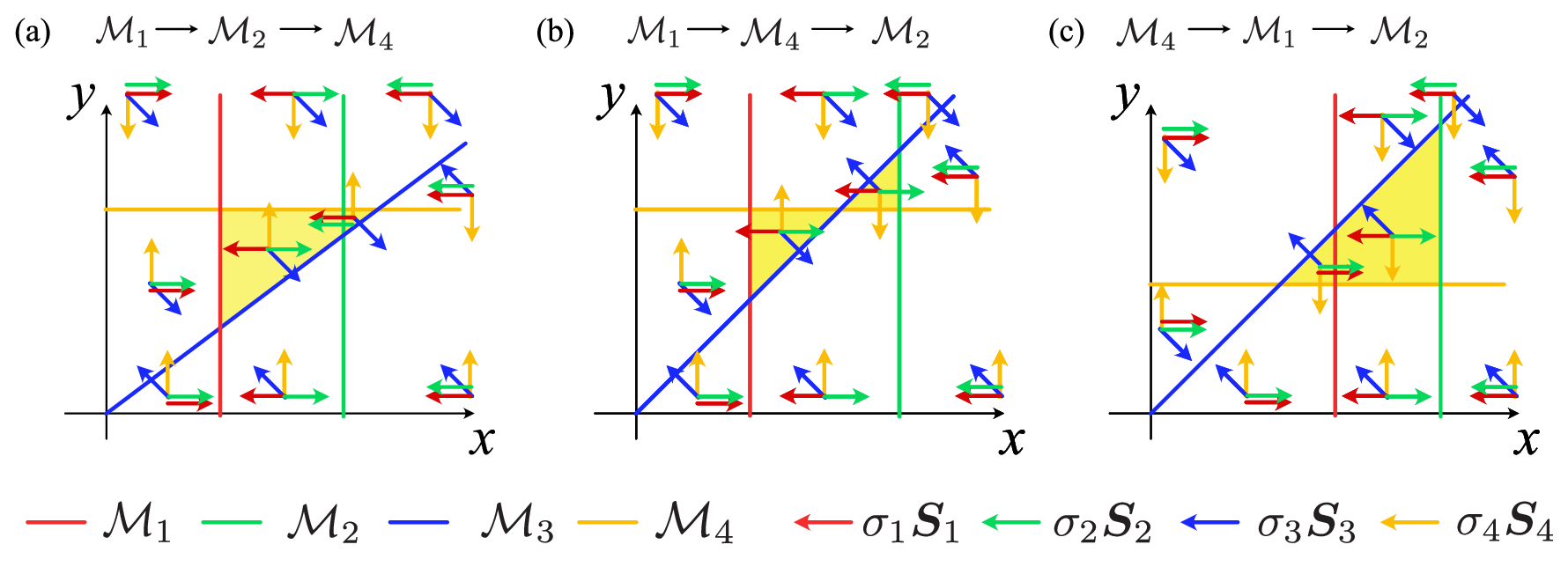}
    \caption{\tcr{The three configurations of the cells in the Schnakenberg model. The figure is made for the case that ${\cal M}_3$ is linear while the configuration does not change even if ${\cal M}_3$ is either convex or concave. Colored lines and arrows represent the corresponding balance manifolds and directed stoichiometric vectors, respectively. The regions filled with yellow are the free cells.}}
        \label{fig:schnakenberg}
      \end{center}
    \end{figure}

Finally, we check the Brusselator model. The reversible Brusselator model with control and non-SCK kinetics is given by 
\begin{eqnarray}
    \dv{x}{t} &=& u_1(t)(a-x)-u_2(t)(x^\alpha-k_1 y^\beta)+u_3(t)x^2(y^\gamma-k_3x^\delta)\\
    \dv{y}{t} &=& u_2(t)(x^\alpha-k_1 y^\beta)-u_3(t)x^2(y^\gamma-k_3x^\delta).
\end{eqnarray}
An important feature of the Brusselator is that the stoichiometric vector of the reaction $R_2$ and $R_3$ is identical except for the sign. $\bm S_2=-\bm S_3$. Because of this redundancy, there is no 2-dimensional free cell in the Brusselator model. All possible configurations of the cells are presented in Fig.~\ref{fig:brusselator}. The configurations are classified into three types based on where ${\cal M}_2$ and ${\cal M}_3$ have intersection in $\mathbb R_{>0}^2$: Fig.~\ref{fig:brusselator}(a) no intersection, (b) at the greater $x$ than $a$ (external concentration of X), and (c) at the lower $x$ than $a$. 

While there is no 2-dimensional free cell, the interval highlighted in yellow in the panels are ``free'' in terms that any two points are mutually controllable; in Fig.~\ref{fig:brusselator}(a) case for instance, for controlling a state to another state with lower $y$ value (note that both states are on the same interval), one can utilize the stoichiometric vector of the reaction $R_3$ to decrease $y$. Since this operation drags the state out from the interval, the stoichiometric vector of the reaction $R_1$ can recover the system's state onto the interval. Iteration of this operation allows us to control the system from $(a,y_1)$ to $(a,y_2)$ with $y_1>y_2$. Feasibility of the control to the opposite direction is confirmed by the same argument while we utilize the stoichimoetric vector of the reaction $R_2$ for this purpose. In this sense, the interval is a free cell, and indeed, the fixed points are only on this interval. 

By checking the configurations and cells and directed stoichiometric vectors, we can see that an arbitrary state is controllable to the interval. 


We have checked the controllability of the three popular 2-dimensional biochemical models. All of them have no state which is uncontrollable to a state in a free cell. In the low-dimensional models, models with uncontrollable state to states in free cells might be rare. Our toy model in the main text is dealt as a two-dimensional model, while the model consists of three chemical species and a conserved quantity allowed us to reduce the model into two-dimensional. As far as we have explored we have never found any models consisting of one or two chemical species and exhibit uncontrollability to a state in a free cell.

\begin{figure}[htbp]
    \begin{center}
    \includegraphics[width = 160 mm, angle = 0]{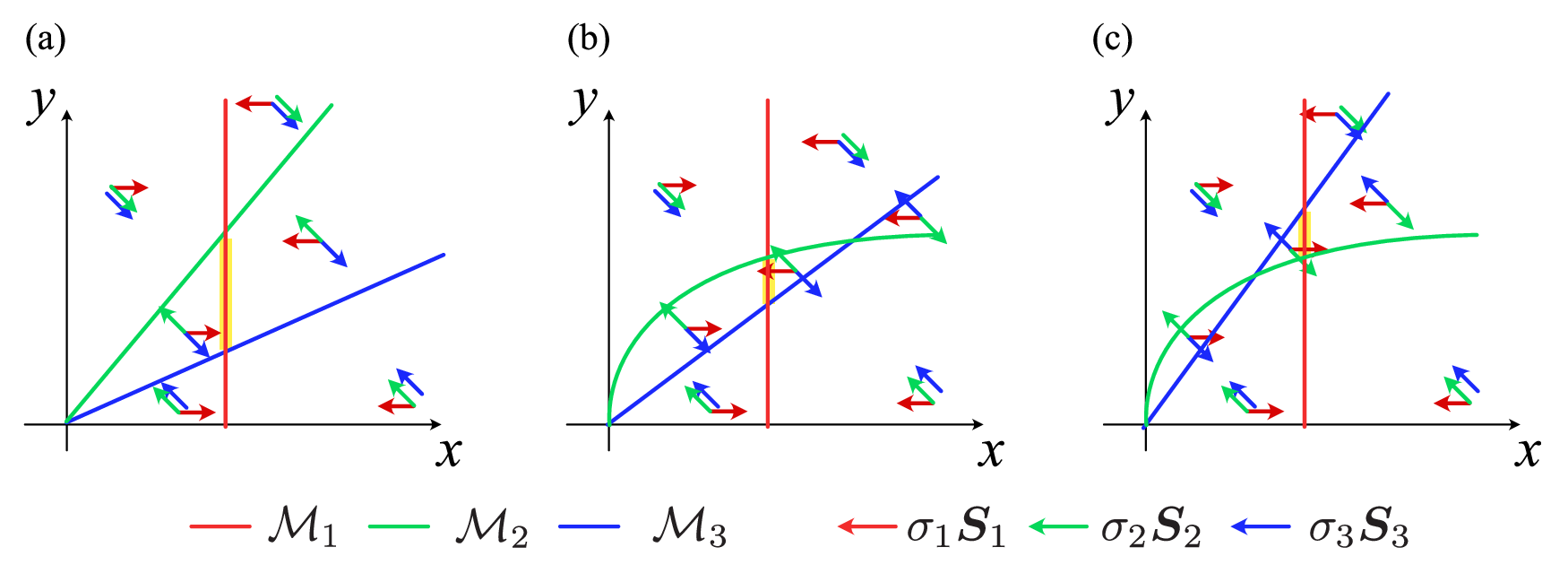}
    \caption{\tcr{The three configurations of the cells in the Brusselator model. The figure is made for the case that ${\cal M}_3$ is linear while the configuration does not change even if ${\cal M}_3$ is either convex or concave. Coloed lines and arrows represent the corresponding balance manifolds and directed stoichiometric vectors, respectively. The regions filled with yellow are the free cells.}}
        \label{fig:brusselator}
      \end{center}
    \end{figure}

    }

\end{document}